\documentclass[11pt,3p,times,preprint]{elsarticle}

\newcommand{\hide}[1]{}
\usepackage{tikz}


\usepackage{mfirstuc}
\usepackage{paralist}
\usepackage{amsthm,amsfonts,amsmath,amssymb}
\usepackage{multirow}
\usepackage{booktabs}
\usepackage{afterpage}
\usepackage{url}\urlstyle{rm}

\usepackage{xspace}
\usepackage{subfig}

\usepackage{hyperref}

\usepackage{microtype}
\tolerance=1500
\emergencystretch=10pt
\clubpenalty=10000

\usepackage{times}
\usepackage{geometry}
 \geometry{twoside,
  paperwidth=210mm,
  paperheight=297mm,
  textheight=632pt,
  textwidth=498pt,
  centering,
 headheight=50pt,
  headsep=12pt,
  footskip=32pt,
  footnotesep=24pt plus 2pt minus 12pt,
  columnsep=2pc
 }

\makeatletter

\def\ps@pprintTitle{%
     \let\@oddhead\@empty
     \let\@evenhead\@empty
     \let\@oddfoot\@empty
     \let\@evenfoot\@oddfoot}


\renewcommand\section{\@startsection {section}{1}{\z@}%
           {-38\p@ \@plus 6\p@ \@minus 3\p@}%
           {9\p@ \@plus 6\p@ \@minus 3\p@}%
           {\normalsize\bfseries\boldmath}}
\renewcommand\subsection{\@startsection{subsection}{2}{\z@}%
           {-28\p@ \@plus 6\p@ \@minus 3\p@}%
           {13\p@ \@plus 6\p@ \@minus 3\p@}%
           {\normalfont\normalsize\itshape}}

\makeatother

\newcommand{\citex}[1]{\citeauthor{#1}~\shortcite{#1}}
\newcommand{\citey}[1]{\citeauthor{#1}~\shortcite{#1}}
\newcommand{\shortcite}[1]{\cite{#1}}
\newcommand*\citexs[1]{\citexds#1\relax}
\def\citexds#1,#2\relax{\citeauthor{#1}~\cite{#1,#2}}

\newtheorem{LEM}{Lemma}[section] 
\newtheorem{THE}{Theorem}[section] 
\newtheorem{COR}{Corollary}[section]

\newtheorem{PRO}{Proposition}[section] 
\newtheorem{DEF}{Definition}[section]
\newtheorem{OBS}{Observation}[section]
\theoremstyle{remark}
\newtheorem*{REM}{Remark}

\newtheorem{EX}{Example}[section] 
\renewenvironment{EX}{\begin{ex}\normalfont}{\hfill
    $\dashv$\end{ex}\medskip}

 
\def\hy{\hbox{-}\nobreak\hskip0pt} 
\hyphenation{ap-prox-i-ma-tion ftp-ap-prox-i-ma-tion pol-y-no-mi-al}

\newcommand{\SB}{\{\,}%
\newcommand{\SM}{\;{:}\;}%
\newcommand{\SE}{\,\}}%

\newcommand{\Card}[1]{|#1|}
\newcommand{\CCard}[1]{\|#1\|}
\let\phi=\varphi
\let\epsilon=\varepsilon

\newcommand{\Nat}{\mathbb{N}}

\newcommand{\SSS}{\mathsf{S}}

\newcommand{\CCC}{\ensuremath{\mathcal{C}}\xspace}

\newcommand{\var}{\mathit{var}}

\newcommand{\SharpP}{\#P}
\renewcommand{\P}{\text{\normalfont P}\xspace}
\newcommand{\NP}{\text{\normalfont NP}\xspace}
\newcommand{\coNP}{\text{\normalfont co-NP}\xspace}
\newcommand{\FPT}{\text{\normalfont FPT}\xspace}
\newcommand{\XP}{\text{\normalfont XP}\xspace}
\newcommand{\W}[1][xxxx]{\text{\normalfont W[#1]}\xspace}
\newcommand{\paraNP}{\text{\normalfont paraNP}\xspace}
\newcommand{\coparaNP}{\text{\normalfont co-paraNP}\xspace}
\newcommand{\paracoNP}{\text{\normalfont para-coNP}\xspace}

\newcommand{\BigO}[1]{\ensuremath{\mathcal{O}(#1)}}

\newcommand{\stableset}{\text{\normalfont AS}}
\newcommand{\tautext}[1]{\ensuremath{#1^*}}

\newcommand{\at}{\text{\normalfont at}}
\newcommand{\ta}[1]{2^{#1}}
\newcommand{\class}[1]{\textnormal{\textbf{#1}}}
\newcommand{\parm}[1]{\textnormal{#1}}
\newcommand{\NSTR}{\class{Strat}}
\newcommand{\Horn}{\ensuremath{\class{Horn}}\xspace}
\newcommand{\Normal}{\ensuremath{\class{Normal}}}

\newcommand{\tw}{\text{\normalfont tw}}

\newcommand{\restrictBE}[2]{\ensuremath{{#1}_{\slash #2}}}
\newcommand{\restrictGSS}[2]{\ensuremath{\hat {#1}_{\slash #2}}}
\newcommand{\SCC}{\text{\normalfont SCC}}
\newcommand{\Good}[1]{\ensuremath{\text{\normalfont at}^+(#1)}}

\newcommand{\pnot}{\neg}
\newcommand{\por}{\vee}
\newcommand{\rsep}{;\;}

\newcommand{\NAT}{\mathbb{N}}
\newcommand{\str}{\ensuremath{\textit{str}}}

\newcommand{\ASP}{\text{ASP}\xspace}
\newcommand{\SAT}{\text{SAT}\xspace}
\newcommand{\CSP}{\text{CSP}\xspace}
\newcommand{\ILP}{\text{ILP}\xspace}

\newcommand{\pname}[1]{\ifmmode ${\scshape #1}$ \else{\scshape #1}\fi\xspace}
\newcommand{\pnormal}[1]{\ensuremath{#1_{\text{\normalfont N}}}}
\newcommand{\HS}{\pname{Hitting Set}}


\newcommand{\lift}[1]{#1^{\uparrow}}
\newcommand{\down}[1]{#1^{\downarrow}}

\usepackage{subfig}
\usepackage{tikz}
\usetikzlibrary{calc,arrows,shapes,petri,trees}

\newcommand{\DBC}[0]{\class{no-DBC}\xspace}
\newcommand{\BC}{\class{no-BC}\xspace}
\newcommand{\DC}{\class{no-DC}\xspace}
\newcommand{\DCTWO}{\class{no-DC2}\xspace}
\newcommand{\C}{\class{no-C}\xspace}
\newcommand{\DBEC}{\class{no-DBEC}\xspace}
\newcommand{\BEC}{\class{no-BEC}\xspace}
\newcommand{\DEC}{\class{no-DEC}\xspace}
\newcommand{\EC}{\class{no-EC}\xspace}

\DeclareFontFamily{OT1}{pzc}{}
\DeclareFontShape{OT1}{pzc}{m}{it}{<-> s * [1.10] pzcmi7t}{}
\DeclareMathAlphabet{\mathpzc}{OT1}{pzc}{m}{it}

\newcommand{\Acyc}{\mathpzc{Acyc}}

\newcommand{\DAcyc}{\mathpzc{D}\text{-}\mathpzc{Acyc}}


\newcommand{\problem}[3]
{
\begin{quote}
\begin{samepage}
{\scshape {#1}} \normalfont \vspace{0.4em}\\
\begin{tabular}{lp{0.53\textwidth}}
  \emph{Given:} & \makefirstuc{#2.} \tabularnewline[1pt]
  \emph{Task:} & \makefirstuc{#3.} \tabularnewline[1pt]
\end{tabular}
\end{samepage}
\end{quote}
}

\newcommand{\pproblem}[4]{
\begin{quote}
\begin{samepage}
{\scshape {#1}\nopagebreak[4]} \normalfont \vspace{0.4em}\nopagebreak[4]\\
\begin{tabular}{lp{0.7\textwidth}}
  \emph{Given:} & \makefirstuc{#2} \tabularnewline[1pt]
 \emph{Parameter:} & \makefirstuc{#3} \tabularnewline[1pt]
  \emph{Task:} & \makefirstuc{#4} \tabularnewline[1pt]
\end{tabular}
\end{samepage}
\end{quote}
}

\newcommand{\AspCheck}{\pname{Checking}}
\newcommand{\AspCons}{\pname{Consistency}}
\newcommand{\AspBrave}{\pname{Brave Reasoning}}
\newcommand{\AspCaut}{\pname{Skeptical Reasoning}}
\newcommand{\AspCount}{\pname{Counting}}
\newcommand{\AspEnum}{\pname{Enum}}
\newcommand{\AspReason}{\ensuremath{\mathpzc{AspReason}}\xspace}
\newcommand{\AspFull}{\ensuremath{\mathpzc{AspFull}}\xspace}
\newcommand{\Bound}{\pname{Bound}}

\usepackage{color}
\newcommand{\remark}[1]{{\color{red}[#1]}}

\newcounter{ct_todo}
\setcounter{ct_todo}{1}
\newcommand{\comment}[1]{\remark{#1}}

\newcommand{\todo}[1]{{\comment{{\bf TODO: }\color{red}#1}}\xspace}

\newcommand{\delBds}[1]{deletion {\ensuremath{#1}}\hy backdoor}
\newcommand{\strongBds}[1]{strong {\ensuremath{#1}}\hy backdoor}
\newcommand{\strongdelBds}[1]{strong (deletion) {\ensuremath{#1}}\hy backdoor}

\newcommand{\BdCheck}[1]{\pname{Strong \ensuremath{#1}-Back\-door Asp Check}}

\newcommand{\strongBdsDet}[1]{\pname{Strong \ensuremath{#1}\hy Back\-door Detection}}
\newcommand{\delBdsDet}[1]{\pname{Deletion \ensuremath{#1}\hy Back\-door Detection}}

\newcommand{\VC}{\pname{Vertex Cover}}
\newcommand{\sfvs}[1]{\ifx&#1&\else{\ensuremath{#1}}\hy\fi feedback vertex set}
\newcommand{\ect}[1]{even {\ensuremath{#1}}\hy cycle transversal}

\newcommand{\fvsDet}[1][]{\pname{\ifx&#1&\else\ensuremath{#1}\hy\fi
    Feedback Vertex Set}} 
\newcommand{\FVS}{\pname{Feedback Vertex Set}}

\begin{document}
\begin{frontmatter}
\title{Backdoors to Tractable Answer Set Programming\tnoteref{t1,t2}}
\author[vie,po]{Johannes Klaus Fichte} \ead{fichte@kr.tuwien.ac.at}
\author[vie]{Stefan Szeider} \ead{stefan@szeider.net}
\address[vie]{Vienna University of Technology,\\ Favoritenstrasse
  9-11, 1040 Vienna, Austria} 
\address[po]{University of Potsdam,\\ August-Bebel-Strasse 89,
14482 Potsdam, Germany} 
\date{} 


\tnotetext[t1]{Fichte and Szeider’s research was supported by the
  European Research Council, grant reference 239962 (COMPLEX
  REASON). }

\begin{abstract}
  Answer Set Programming (ASP) is an increasingly popular framework
  for declarative programming that admits the description of problems
  by means of rules and constraints that form a disjunctive logic
  program. In particular, many AI problems such as reasoning in a
  nonmonotonic setting can be directly formulated in ASP. Although the
  main problems of ASP are of high computational complexity, located
  at the second level of the Polynomial Hierarchy, several
  restrictions of ASP have been identified in the literature, under
  which ASP problems become tractable.
  
  In this paper we use the concept of backdoors to identify new
  restrictions that make ASP problems tractable.  Small backdoors are sets
  of atoms that represent ``clever reasoning shortcuts'' through the
  search space and represent a hidden structure in the problem input.
  The  concept of backdoors is widely used in the areas of propositional
  satisfiability and constraint satisfaction.  We show that it can be 
  fruitfully adapted to ASP.
  We demonstrate how backdoors can serve as a unifying framework that
  accommodates several tractable restrictions of ASP known from the
  literature. Furthermore, we show how backdoors allow us to deploy
  recent algorithmic results from parameterized complexity theory to
  the domain of answer set programming. 
\end{abstract}

\begin{keyword}
  answer set programming;
  backdoors;
  computational complexity;
  parameterized complexity;
  kernelization
\end{keyword}

\end{frontmatter}

\pagebreak
\tableofcontents
\pagebreak

 \section{Introduction}
 \emph{Answer Set Programming} (\ASP) is an increasingly popular
 framework for declarative
 programming~\cite{MarekTruszczynski99,Niemela99}. \ASP admits the
 description of problem by means of rules and constraints that form a
 disjunctive logic program. Solutions to the program are so-called
 stable models or answer sets. Many important problems of AI and
 reasoning can be succinctly represented and successfully solved
 within the \ASP framework. It has been applied to several large
 industrial applications, e.g., social
 networks~\cite{JostSabuncuSchaub12a}, match
 making~\cite{GebserGlaseSabuncuSchaub13}, planning in a
 seaport~\cite{RiccaGrassoAlvianoMannaLioIiritanoLeone12},
 optimization of packaging of Linux
 distributions~\cite{GebserKaminskiKaufmannSchaub11}, and general game
 playing~\cite{Thielscher09}.

 The main computational problems for \ASP (such as deciding whether a
 program has a solution, or whether a certain atom is contained in at
 least one or in all solutions) are located at the second level of the
 Polynomial Hierarchy~\cite{EiterGottlob95}, thus ASP problems are
 ``harder than NP'' and have a higher worst-case complexity than \CSP
 and \SAT.  In the literature, several restrictions have been
 identified that make ASP tractable
 \cite{GelfondLifschitz88,AptBlairWalker88}.

 \subsection{Contribution}
 In this paper we use the concept of \emph{backdoors} to identify new
 restrictions that make ASP problems tractable.  Small backdoors are
 sets of atoms that represent ``clever reasoning shortcuts'' through
 the search space and represent a hidden structure in the problem
 input.  Backdoors were originally introduced by Williams, Gomes, and
 Selman~\cite{WilliamsGomesSelman03,WilliamsGomesSelman03a} as a tool
 for the analysis of decision heuristics in propositional
 satisfiability. Backdoors have been widely used in the areas of
 propositional
 satisfiability~\cite{WilliamsGomesSelman03,RuanKautzHorvitz04,SamerSzeider08c,KottlerKaufmannSinz08}
 and constraint satisfaction~\cite{GottlobSzeider08}, and also for
 abductive reasoning~\cite{PfandlerRummeleSzeider13},
 argumentation~\cite{DvorakOrdyniakSzeider12}, and quantified Boolean
 formulas~\cite{SamerSzeider09a}.  A backdoor is defined with respect
 to some fixed \emph{target class} for which the computational problem
 under consideration is polynomial-time tractable. The size of the
 backdoor can be seen as a distance measure that indicates how far the
 instance is from the target class.

 In this paper we develop a rigorous theory of backdoors for answer
 set programming. We show that the concept of backdoors can be
 fruitfully adapted for this setting, and that backdoors can serve as
 a \emph{unifying framework} that accommodates several tractable
 restrictions of ASP known from the literature.

 For a worst-case complexity analysis of various problems involving
 backdoors, it is key to pay attention to how running times depend on
 the size of the backdoor, and how well running time scales with
 backdoor size.  \emph{Parameterized Complexity}
 \cite{DowneyFellows99,GaspersSzeider12a,GottlobScarcelloSideri02}
 provides a most suitable theoretical framework for such an
 analysis. It provides the key notion of \emph{fixed-parameter
   tractability} which, in our context, means polynomial-time
 tractability for fixed backdoor size, where the order of the
 polynomial does not depend on the backdoor size.  We show how
 backdoors allow us to deploy recent algorithmic results from
 parameterized complexity theory to the domain of answer set
 programming.

 Parameterized complexity provides tools to provide a rigorous
 analysis of \emph{polynomial-time preprocessing} in terms of
 \emph{kernelization}
 \cite{BodlaenderDowneyFellowsHermelin09,Szeider11}.  A kernelization
 is a polynomial-time self-reduction of a parameterized decision
 problem that outputs a decision equivalent problem instance whose
 size is bounded by a function~$f$ of the parameter (the kernel
 size). It is known that every decidable fixed-parameter tractable
 problem admits a kernelization, but some problems admit small kernels
 (of size polynomial in the parameter) and others don't. We provide
 upper and lower bounds for the kernel size of various ASP problems
 (backdoor detection and backdoor evaluation), taking backdoor size as
 the parameter.

 Several algorithms in the literature are defined for disjunction-free
 (i.e., normal) programs only. We provide a general method for
 \emph{lifting} these parameters to disjunctive programs, preserving
 fixed-parameter tractability under certain conditions.


 Although our main focus is on a theoretical evaluation, we present
 some experimental results where we consider the backdoor size of
 structured programs and random programs of varied density.


\subsection{Background and Related Work}

\paragraph{Complexity of ASP Problems}
Answer set programming is based on the \emph{stable-model semantics}
for logic programs~\cite{GelfondLifschitz88,GelfondLifschitz91}.
The computational complexity of various problems arising in answer set
programming has been subject of extensive studies.
\citex{EiterGottlob95} have established that the main decision
problems of (disjunctive) \ASP are located at the second level of the
Polynomial Hierarchy.  Moreover, \citex{BidoitFroidevaux91} and
\citex{MarekTruszczynski91a} have shown that the problems remain
$\NP$-hard ($\coNP$-hard respectively) for disjunction-free (so-called
\emph{normal}) programs.  Several fragments of programs where the main
reasoning problems are polynomial-time tractable have been identified,
e.g., Horn programs~\cite{GelfondLifschitz88}, stratified
programs~\cite{AptBlairWalker88} and programs without even
cycles~\cite{Zhao02}. \citex{DantsinEiterGottlobVoronkov01} survey the
classical complexity of the main reasoning problems for various
semantics of logic programming, including fragments of answer set
programming.


\paragraph{ASP Solvers}

Various \ASP-solvers have been developed in recent years. Solvers that
deal with one or more fragments of disjunctive programs (normal,
tight, or head-cycle-free) and utilize techniques from \SAT are
Smodels~\cite{NiemelaSimonsSyrjanen00}, Assat~\cite{LinZhao04a},
Cmodels~\cite{Lierler05a}, and the solver
Clasp~\cite{GebserEtAl11}. Solvers that transform normal programs into
other problem domains are Lp2diff~(difference logic,
\cite{JanhunenNiemelaSevalnev09}), Dingo~(satisfiability modulo
theories, \cite{JanhunenNiemela11}), and Mingo~(mixed integer linear
programming, \cite{LiuJanhunenNiemela12}). Solvers that tackle
disjunctive programs are DLV~\cite{LeoneEtAl06},
GnT~\cite{JanhunenEtAl06}, and ClaspD~\cite{DrescherEtAl08}. DLP
utilizes the technique of unfounded sets~\cite{LeoneRulloScarcello97},
GnT uses techniques from \SAT and extends Smodels by means of a guess
and check approach. ClaspD uses techniques from \SAT and is based on
 logical characterizations of disjunctive loop
formulas~\cite{LeeLifschitz03}.

\paragraph{Parameterizations of \ASP}
So far there has been no rigorous study of disjunctive ASP within the
framework of parameterized complexity. However, several results known
from the literature can be stated in terms of parameterized complexity
and provide fixed-parameter tractability.  The considered parameters
include
the number of atoms of a normal program that appear in negative rule
bodies~\cite{Ben-Eliyahu96},
the number of non-Horn rules of a normal program~\cite{Ben-Eliyahu96},
the size of a smallest feedback vertex set in the dependency digraph
of a normal program~\cite{GottlobScarcelloSideri02},
the number of cycles of even length in the dependency digraph of a
normal program~\cite{LinZhao04},
the treewidth of the incidence graph of a normal
program~\cite{JaklPichlerWoltran09,MorakPichlerRummeleWoltran10},
and~a combination of two parameters: the length of the longest cycle in
the dependency digraph and the treewidth of the interaction graph of a
head-cycle-free programs~\cite{Ben-EliyahuDechter94}.
Very recently we established an fpt-reduction that reduces disjunctive
\ASP to normal \ASP; in other words, a reduction from the second level
of the Polynomial Hierarchy to the first level. The combinatorial
explosion is confined to the size of a smallest backdoor with respect
to normal programs, whereas the considered reasoning problem itself
remains intractable~\cite{FichteSzeider13}.


\paragraph{Backdoors}


The concept of a backdoor was originally introduced for SAT and CSP 
by \citexs{WilliamsGomesSelman03,WilliamsGomesSelman03a}.
Since then, backdoors have been used frequently in the literature. The
study of the parameterized complexity of backdoor detection was
initiated by \citex{NishimuraRagdeSzeider04-informal} who considered
satisfiability backdoors for the base classes Horn and 2CNF. Since
then, the study has been extended to various other base classes,
including clustering formulas \cite{NishimuraRagdeSzeider07},
renamable Horn formulas~\cite{RazgonOSullivan09}, QHorn
formulas~\cite{GaspersOrdyniakRamanujanSaurabhSzeider13}, Nested
formulas~\cite{GaspersSzeider12c}, acyclic
formulas~\cite{GaspersSzeider12b}, and formulas of bounded incidence
treewidth~\cite{GaspersSzeider13}; for a survey,
see~\cite{GaspersSzeider12a}.
Several results extend the concept of backdoors to other problems,
e.g., backdoor sets for constraint satisfaction
problems~\cite{WilliamsGomesSelman03}, quantified Boolean
formulas~\cite{SamerSzeider09a}, abstract
argumentation~\cite{OrdyniakSzeider11}, and abductive
reasoning~\cite{PfandlerRummeleSzeider13}. \citex{SamerSzeider08b}
have introduced \emph{backdoor trees} for propositional satisfiability
which provide a more refined concept of backdoor evaluation and take
the interaction of variables that form a backdoor into account.

\subsection{Prior Work and Paper Organization}
This paper is an extended and updated version of the papers that
appeared in the proceedings of the 22nd International Conference on
Artificial Intelligence~\cite{FichteSzeider11} and in the New
Directions in Logic, Language and Computation~\cite{Fichte12}. The
present paper provides a higher level of detail, in particular full
proofs and more examples. Furthermore, the paper extends its previous
versions in the following way: additional attention is payed to the
minimality check (Lemma~\ref{lem:mincheck}).  Theorem~\ref{the:fvs} is
extended to entail some very recent results in parameterized
complexity theory. A completely new section (Section~\ref{sec:kernel})
is devoted to a rigorous analysis of preprocessing methods for the
problems of backdoor detection and backdoor evaluation.
%
%
We present a general method to lift parameters from rules of normal
programs to disjunctive programs (Section~\ref{sec:lifting}). We
extend the section on the theoretical comparison of parameters
(Section~\ref{sec:comparision}) by additional comparisons to other
parameters, e.g., weak feedback width and interaction graph treewidth,
and to other classes of programs, e.g., head-cycle-free and tight
programs. Finally, in Section~\ref{sec:experiments} we provide some
empirical data on backdoor detection and discuss the evaluation of
backdoors in a practical setting.

\section{Preliminaries}\label{sec:prelims}
\subsection{Answer Set Programming}
We consider a universe~$U$ of propositional \emph{atoms}.  A
\emph{literal} is an atom~$a\in U$ or its negation~$\neg a$.  A
\emph{disjunctive logic program} (or simply a \emph{program}) $P$ is a
set of \emph{rules} of the following form
\begin{align*}
  x_1\por \dots \por x_l \quad\leftarrow\quad y_1,\dots,y_m,\pnot z_1,
  \dots,\pnot z_n
\end{align*}
where $x_1,\dots,x_l, y_1,\dots,y_m, z_1,\dots, z_n$ are atoms and
$l,m,n$ are non-negative integers. Let $r$ be a rule. We write $\{x_1,
\dots, x_l\} = H(r)$ (the \emph{head} of $r$), $\{y_1, \dots, y_m \} =
B^+(r)$ (the positive body of $r$) and $\{z_1,\dots,z_n\}=B^-(r)$ (the
negative body of $r$). We denote the sets of atoms occurring in a
rule~$r$ or in a program~$P$ by $\at(r)=H(r) \cup B^+(r)\cup B^-(r)$
and $\at(P)=\bigcup_{r\in P} \at(r)$, respectively. A rule~$r$ is
\emph{negation-free} if $B^-(r)=\emptyset$, $r$ is \emph{normal} if
$\Card{H(r)}\leq 1$, $r$ is a \emph{constraint} if $\Card{H(r)}=0$,
$r$ is \emph{constraint-free} if $\Card{H(r)>0}$, $r$ is \emph{Horn}
if it is negation-free and normal, $r$ is \emph{positive} if it is
Horn and constraint-free, $r$ is \emph{tautological} if $B^+(r) \cap
(H(r) \cup B^-(r))\neq \emptyset$, and $r$ is \emph{non-tautological}
if it is not tautological.
%
We say that a program has a certain property if all its rules have the
property. 
\class{Horn} refers to the class of all Horn programs. We denote the
class of all normal programs by \Normal.  
Let $P$ and $P'$ be programs. We say that $P'$ is a \emph{subprogram}
of $P$ (in symbols $P' \subseteq P$) if for each rule~$r'\in P'$ there
is some rule~$r\in P$ with $H(r')\subseteq H(r)$, $B^+(r')\subseteq
B^+(r)$, $B^-(r')\subseteq B^-(r)$.
We call a class~$\CCC$ of programs \emph{hereditary} if for each~$P\in
\CCC$ all subprograms of $P$ are in $\CCC$ as well. Note that many
natural classes of programs (and all classes considered in this paper)
are hereditary.

A set~$M$ of atoms \emph{satisfies} a rule~$r$ if $(H(r)\,\cup\,
B^-(r)) \,\cap\, M \neq \emptyset$ or $B^+(r) \setminus M \neq
\emptyset$.  $M$ is a \emph{model} of $P$ if it satisfies all rules of
$P$.  The \emph{Gelfond-Lifschitz (GL) reduct} of a program~$P$ under
a set~$M$ of atoms is the program~$P^M$ obtained from $P$ by first
removing all rules~$r$ with $B^-(r)\cap M\neq \emptyset$ and second
removing all~$\neg z$ where $z \in B^-(r)$ from the remaining rules
$r$~\cite{GelfondLifschitz91}. $M$ is an \emph{answer set} (or
\emph{stable model}) of a program~$P$ if $M$ is a minimal model of
$P^M$. We denote by $\stableset(P)$ the set of all answer sets of~$P$.

\begin{EX}\label{ex:running}
  Consider the program~$P$ consisting of the following rules:
\begin{align*}
d &\leftarrow a, e \rsep&
 a &\leftarrow d, \pnot b, \pnot c \rsep&
e \por c &\leftarrow f \rsep\\ 
 f &\leftarrow d, c \rsep&
 c &\leftarrow f, e, \pnot b \rsep&
 c &\leftarrow d \rsep\\
b &\leftarrow c\rsep&
 f&.
\end{align*}
The set~$M= \{ b, c, f\}$ is an answer set of $P$, since $P^M = \{d
\leftarrow a, e \rsep f \leftarrow d, c \rsep b \leftarrow c\rsep e
\por c \leftarrow f\rsep c \leftarrow d \rsep f \}$ and the minimal
models of $P^M$ are $\{b,c,f\}$ and $\{e,f\}$.
\end{EX}

It is well known that normal Horn programs have a unique answer set
and that this set can be found in linear time. Van Emden and
Kowalski~\shortcite{Van-EmdenKowalski76} have shown that every
constraint-free Horn program has a unique minimal model. Dowling and
Gallier~\shortcite{DowlingGallier84} have established a linear-time
algorithm for testing the satisfiability of propositional Horn
formulas which easily extends to Horn programs. In the following we
state the well-known linear-time result.

\begin{LEM}
\label{lem:horn-lineartime}
Every Horn program has at most one model, and this model can be found
in linear time.
\end{LEM}

\subsection{\ASP Problems}

We consider the following fundamental  \ASP problems.

\problem{\AspCheck}{a program~$P$ and a set~$M\subseteq \at(P)$}
{decide whether $M$ is an answer set of $P$}
\problem{\AspCons}{A program~$P$}{decide whether $P$ has an answer
  set} 
\problem{\AspBrave}{A program~$P$ and an atom~$a^*\in \at(P)$}{decide
  whether $a^*$ belongs to \emph{some} answer set \mbox{of $P$}}
\problem{\AspCaut}{A program~$P$ and an atom~$a^*\in \at(P)$}{decide
  whether $a^*$ belongs to \emph{all} answer sets of $P$}
\problem{\AspCount}{A program~$P$}{Compute the number of answer
  sets of $P$} 
\problem{\AspEnum}{A program~$P$}{List all answer sets of $P$}

We denote by $\AspReason$ the family of the reasoning problems
\AspCheck, \AspCons, and \AspBrave and by $\AspFull$ the family of all
the problems defined above. This $\AspReason$ consists of decision
problems, and $\AspFull$ adds to it a counting and an enumeration
problem. In the sequel we will occasionally write $L_\Normal$ to
denote a problem~$L \in \AspFull$ restricted to input programs from
$\Normal$.

\AspCheck is $\coNP$\hy hard in general~\cite{EiterGottlob95}, but
$\AspCheck_\Normal$ is polynomial~\cite{CadoliLenzerini94}. \AspCons
and \AspBrave are $\Sigma^P_2$-complete, \AspCaut is
$\Pi^P_2$-complete~\cite{EiterGottlob95}. Both reasoning problems
remain $\NP$\hy hard (or $\coNP$\hy hard) for normal
programs~\cite{MarekTruszczynski91}, but are polynomial-time solvable
for Horn programs~\cite{GelfondLifschitz88}.  \AspCount is easily seen
to be $\#P$-hard\footnote{$\#P$ is the complexity class consisting of
  all the counting problems associated with the decision problems in
  \NP.} as it entails the problem $\#$SAT.

\subsection{Parameterized Complexity}\label{sec:PC}
We briefly give a basic background on parameterized complexity. For
more detailed information we refer to other
sources~\cite{DowneyFellows99,FlumGrohe06,GottlobSzeider08,Niedermeier06}.
An instance of a \emph{parameterized problem}~$L$ is a pair~$(I,k)\in
\Sigma^* \times \Nat$ for some finite alphabet~$\Sigma$. For an
instance~$(I,k) \in \Sigma^* \times \Nat$ we call $I$ the \emph{main
  part} and $k$ the \emph{parameter}. $\CCard{I}$ denotes the size
of~$I$. $L$ is \emph{fixed-parameter tractable} if there exist a
computable function~$f$ and a constant~$c$ such that we can decide
whether $(I,k)\in L$ in time $\BigO{f(k)\CCard{I}^c}$. Such an
algorithm is called an \emph{fpt-algorithm}.  If $L$ is a decision
problem, then we identify $L$ with the set of all yes-instances
$(I,k)$. $\FPT$ is the class of all fixed-parameter tractable decision
problems.

%

Let $L \subseteq \Sigma^* \times \Nat$ and $L' \subseteq
\Sigma'^*\times \Nat$ be two parameterized decision problems for some
finite alphabets~$\Sigma$ and $\Sigma'$. An \emph{fpt-reduction}~$r$
from $L$ to $L'$ is a many-to-one reduction from~$\Sigma^*\times \Nat$
to~$\Sigma'^*\times \Nat$ such that for all~$I \in \Sigma^*$ we have
$(I,k) \in L$ if and only if $r(I,k)=(I',k')\in L'$ and $k' \leq g(k)$
for a fixed computable function~$g: \Nat \rightarrow \Nat$ and there
is a computable function~$f$ and a constant~$c$ such that $r$ is
computable in time~$\BigO{f(k)\CCard{I}^c}$. Thus, an fpt-reduction
is, in particular, an fpt-algorithm. It is easy to see that the
class~$\FPT$ is closed under fpt-reductions and it is clear for
parameterized problems~$L_1$ and $L_2$ that if $L_1\in \FPT$ and there
is an fpt-reduction from~$L_2$ to~$L_1$, then $L_2 \in \FPT$.

The \emph{Weft Hierarchy} consists of parameterized complexity
classes~$\W[1] \subseteq \W[2]\subseteq\nobreak\cdots$ which are
defined as the closure of certain parameterized problems under
parameterized reductions. There is strong theoretical evidence that
parameterized problems that are hard for classes~$\W[$i$]$ are not
fixed-parameter tractable. A prominent $W[2]$-complete problem is
\HS~\cite{DowneyFellows99} defined as follows:
\pproblem{\HS}{A family of sets~$(\SSS,k)$ where
  $\SSS=\{S_1,\dots,S_m\}$ and an integer~$k$.}{The
  integer~$k$.}{Decide whether there exists set~$H$ of size at
  most~$k$ which intersects with all the~$S_i$ ($H$~is a \emph{hitting
    set} of $\SSS$).}
The class~$\XP$ of \emph{non-uniform} tractable problems consists of
all parameterized decision problems that can be solved in polynomial
time if the parameter is considered constant. That is, $(I,k)\in L$
can be decided in time~$\BigO{\CCard{I}^{f(k)}}$ for some computable
function~$f$. The parameterized complexity class~$\paraNP$ contains
all parameterized decision problems~$L$ such that $(I,k)\in L$ can be
decided \emph{non-deterministically} in time~$O(f(k)\CCard{I}^c)$ for
some computable function~$f$ and constant~$c$. A parameterized
decision problem is $\paraNP$-complete if it is in $\NP$ and $\NP$\hy
complete when restricted to a finite number of parameter
values~\cite{FlumGrohe06}. By $\coparaNP$ we denote the class of all
parameterized decision problems whose complement (yes and no instances
swapped) is in $\paraNP$. Using the concepts and
terminology~of~\citex{FlumGrohe06}, $\coparaNP=\paracoNP$.


%
%

\subsection{Graphs}
\label{sec:graphs}
We recall some notations of graph theory. We consider undirected and
directed graphs. An \emph{undirected graph} or simply a \emph{graph}
is a pair~$G=(V,E)$ where $V\neq \emptyset$ is a set of
\emph{vertices} and $E \subseteq \SB \{u,v\}\subseteq V \SM u \neq v
\SE$ is a set of \emph{edges}. We denote an edge~$\{v,w\}$ by $uv$ or
$vu$.  A graph~$G'=(V',E')$ is a \emph{subgraph} of $G$ if
$V'\subseteq V$ and $E'\subseteq E$ and an \emph{induced subgraph} if
additionally for any $u,v \in V'$ and $uv \in E$ also $uv \in E'$. A
\emph{path of length~$k$} is a graph with $k+1$ pairwise distinct
vertices~$v_1,\dots,v_{k+1}$, and $k$ distinct edges~$v_iv_{i+1}$
where $1\leq i \leq k$ (possibly $k=0$). A \emph{cycle of length $k$},
is a graph that consists of $k$ distinct vertices~$v_1,v_2,\ldots,
v_k$ and $k$ distinct edges $v_1v_2, \dots, v_{k-1}v_k,v_kv_1$.
Let $G=(V,E)$ be a graph. $G$ is \emph{bipartite} if the set~$V$ of
vertices can be divided into two disjoint sets~$U$ and $V$ such that
there is no edge~$uv \in E$ with $u,v\in U$ or $u,v\in V$.
$G$ is \emph{complete} if for any two vertices~$u,v \in V$ there is an
edge~$uv \in E$. $G$ contains a \emph{clique} on $V'\subseteq V$ if
the induced subgraph~$(V',E')$ of $G$ is a complete graph.
A \emph{connected component}~$C$ of $G$ is an inclusion-maximal
subgraph~$C=(V_C,E_C)$ of $G$ such that for any two vertices~$u, v \in
V_C$ there is a path in~$C$ from~$u$ to~$v$.

A \emph{directed graph} or simply a \emph{digraph} is a pair~$G=(V,E)$
where $V \neq \emptyset$ is a set of vertices and $E\subseteq \SB
(u,v) \in V \times V \SM u \neq v \SE$ is a set of \emph{directed
  edges}. A digraph~$G'=(V',E')$ is a \emph{subdigraph} of $G$ if
$V'\subseteq V$ and $E'\subseteq E$ and an \emph{induced subdigraph}
if additionally for any $u,v \in V'$ and $(u,v) \in E$ also $(u,v) \in
E'$.  A \emph{directed path of length~$k$} is a digraph with $k+1$
pairwise distinct vertices $v_1,\dots,v_{k+1}$, and $k$ distinct
edges~$(v_i,v_{i+1})$ where $1\leq i \leq k$ (possibly $k=0$). A
\emph{directed cycle of length~$k$}, is a digraph that consists of $k$
distinct vertices~$v_1,v_2,\ldots, v_k$ and $k$ distinct
edges~$(v_1,v_2), \dots, (v_{k-1},v_k), (v_k,v_1)$.

We sometimes denote a (directed) path or (directed) cycle as a
sequence of vertices. Please observe that according to the above
definitions, the length of an undirected cycle is at least~3, whereas
the length of a directed cycle is at least~2.

A \emph{strongly connected component}~$C$ of a digraph~$G=(V,E)$ is an
inclusion-maximal directed subgraph~$C=(V_C,E_C)$ of $G$ such that for
any two vertices~$u, v \in V_C$ there are paths in~$C$ from~$u$ to~$v$
and from~$v$ to~$u$. The strongly connected components of $G$ form a
partition of the set~$V$ of vertices, we denote this partition by
$\SCC(G)$.

For further basic terminology on graphs and digraphs we refer to a
standard text~\cite{Diestel00,BondyMurty08}.

\subsection{Satisfiability Backdoors}
We also need some notions from \emph{propositional satisfiability}.  A
\emph{literal} is an atom or its negation and a \emph{clause} is a
finite set of literals, a CNF formula is a finite set of clauses.  A
\emph{truth assignment} is a mapping $\tau:X\rightarrow \{0,1\}$
defined for a set~$X\subseteq U$ of atoms. For $x\in X$ we put
$\tau(\pnot x)=1 - \tau(x)$.
By $\ta{X}$ we denote the set of all truth assignments~$\tau: X
\rightarrow \{0,1\}$. The \emph{truth assignment reduct} of a CNF
formula~$F$ with respect to~$\tau \in \ta{X}$ is the CNF
formula~$F_\tau$ obtained from~$F$ by first removing all clauses~$c$
that contain a literal set to~$1$ by $\tau$, and second removing from
the remaining clauses all literals set to~$0$ by $\tau$. $\tau$
\emph{satisfies} $F$ if $F_\tau=\emptyset$, and $F$ is
\emph{satisfiable}
if it is satisfied by some~$\tau$.


%
The following is obvious from the definitions:
\begin{OBS}%
\label{obs:cnfsat}
Let $F$ be a CNF formula and $X$ a set of atoms. $F$ is satisfiable if
and only if $F_\tau$ is satisfiable for at least one truth assignment
$\tau\in \ta{X}$.
\end{OBS}%
This leads to the definition of a strong backdoor relative to a
class~$\CCC$ of polynomially solvable CNF formulas: a set~$X$ of atoms
is a \emph{strong $\CCC$-backdoor} of a CNF formula~$F$ if $F_\tau\in
\CCC$ for all truth assignments~$\tau\in \ta{X}$. Assume that the
satisfiability of formulas~$F\in \CCC$ of size~$\CCard{F}=n$ can be
decided in time~$O(n^c)$. Then we can decide the satisfiability of an
arbitrary formula~$F$ for which we know a \strongBds{\CCC} of size~$k$
in time~$O(2^k n^c)$ which is efficient as long as $k$ remains small.

A further variant of backdoors are deletion backdoors defined by
removing literals from a CNF formula. $F-X$ denotes the formula
obtained from~$F$ by removing all literals~$x,\neg x$ for $x \in X$
from the clauses of $F$. Then a set~$X$ of atoms is a \emph{deletion
  $\CCC$\hy backdoor} of $F$ if $F-X \in \CCC$. In general,
\delBds{\CCC}s are not necessarily \strongBds{\CCC}s. If all subsets
of a formula in $\CCC$ also belong to~$\CCC$ ($\CCC$ is
clause-induced), then \delBds{\CCC}s are \strongBds{\CCC}s.

Before we can use a strong backdoor we need to find it first. For most
reasonable target classes~$\CCC$ the detection of a \strongBds{\CCC}
of size at most~$k$ is $\NP$-hard if $k$ is part of the
input. However, as we are interested in finding \emph{small}
backdoors, it makes sense to parameterize the backdoor search by $k$
and consider the parameterized complexity of backdoor
detection. Indeed, with respect to the classes of Horn CNF formulas
and 2-CNF formulas, the detection of strong backdoors of size at
most~$k$ is fixed-parameter
tractable~\cite{NishimuraRagdeSzeider04-informal}.  The parameterized
complexity of backdoor detection for many further target classes has
been investigated~\cite{GaspersSzeider12a}.


\section{ Answer Set Backdoors}\label{sec:backdoors}
\subsection{Strong Backdoors}
In order to translate the notion of backdoors to the domain of \ASP,
we first need to come up with a suitable concept of a reduction with
respect to a truth assignment. The following is a natural definition
which generalizes a concept of~\citex{GottlobScarcelloSideri02}.
 
\begin{DEF}
\label{def:tar}
Let $P$ be a program, $X$ a set of atoms, and $\tau\in \ta{X}$. The
\emph{truth assignment reduct} of $P$ under $\tau$ is the logic
program $P_\tau$ obtained from~$P$ by
\begin{enumerate}
\item removing all rules~$r$ with $H(r)\cap \tau^{-1}(1)\neq
  \emptyset$ or $H(r)\subseteq X$;
\item removing all rules~$r$ with $B^+(r) \cap \tau^{-1}(0)\neq
  \emptyset$;
\item removing all rules~$r$ with $B^-(r) \cap \tau^{-1}(1)\neq
  \emptyset$;
\item removing from the heads and bodies of the remaining rules all
  literals~$v,\pnot v$ with $v\in X$.
\end{enumerate}
\end{DEF}

\begin{DEF}
  Let $\CCC$ be a class of programs. A set~$X$ of atoms is a
  \emph{strong $\CCC$\hy backdoor} of a program~$P$ if $P_{\tau}\in
  \CCC$ for all truth assignments~$\tau\in \ta{X}$.
\end{DEF}
By a \emph{minimal} \strongBds{\CCC} of a program $P$ we mean a
\strongBds{\CCC} of $P$ that does not properly contain a smaller
\strongBds{\CCC} of $P$; a \emph{smallest} \strongBds{\CCC} of $P$ is
one of smallest cardinality.

\begin{EX}\label{ex:strong-bds}
  We consider the program of Example~\ref{ex:running}. The
  set~$\{b,c\}$ is a \strongBds{\Horn} since all four truth assignment
  reducts~$P_{\bar b \bar c}=\{d \leftarrow a, e \rsep a \leftarrow d
  \rsep e \leftarrow f \rsep f\}$, $P_{\bar b, c}=\{d \leftarrow a, e
  \rsep f \leftarrow d \rsep f \}$, $P_{b \bar c}=\{d \leftarrow a, e
  \rsep e \leftarrow f\rsep f\}$, and $P_{b c}=\{d \leftarrow a, e
  \rsep f \leftarrow d \rsep f\}$ are in the class~\Horn.
\end{EX}

\subsection{Deletion Backdoors}
Next we define a variant of answer set backdoors similar to
satisfiability deletion backdoors. For a program~$P$ and a set~$X$ of
atoms we define $P-X$ as the program obtained from $P$ by deleting $a,
\pnot a$ for $a \in X$ from the rules of $P$.
The definition gives rise to deletion backdoors. We will see that
finding deletion backdoors is in some cases easier than finding strong
backdoors.

\begin{DEF}
  Let $\CCC$ be a class of programs. A set~$X$ of atoms is a
  \emph{deletion $\CCC$\hy backdoor} of a program~$P$ if $P-X\in
  \CCC$.
\end{DEF}

In general, not every \strongBds{\CCC} is a \delBds{\CCC}, and not
every \delBds{\CCC} is a \strongBds{\CCC}. But we can strengthen one
direction requiring the base class to satisfy the very mild condition
of being hereditary (see Section~\ref{sec:prelims}) which holds for
all base classes considered in this paper.


\begin{LEM}\label{lem:rule-induced}
  If $\CCC$ is hereditary, then every~\delBds{\CCC} is a
  \strongBds{\CCC}.
\end{LEM}
\begin{proof}
  Let $P$ be a program, $X\subseteq \at(P)$, and $\tau\in \ta{X}$. Let
  $r'\in P_\tau$. It follows from Definition~\ref{def:tar} that $r'$
  is obtained from some~$r\in P$ by deleting~$v,\lnot v$ for all~$v\in
  X$ from the head and body of $r$.  Consequently $r'\in P-X$. Hence
  $P_\tau \subseteq P-X$ which establishes the proposition.
\end{proof}


\subsection{Backdoor Evaluation}
An analogue to Observation~\ref{obs:cnfsat} does not hold for \ASP,
even if we consider the most basic problem \AspCons. Take for example
the program~$P=\SB x \leftarrow y \rsep y \leftarrow x \rsep
\leftarrow x \rsep z \leftarrow \pnot x \SE$ and the
set~$X=\{x\}$. Both reducts~$P_{x=0}=\SB z \SE$ and $P_{x=1}=\SB y
\SE$ have answer sets, but $P$ has no answer set.
However, we can show a somewhat weaker asymmetric variant of
Observation~\ref{obs:cnfsat}, where we can map each answer set of~$P$
to an answer set of $P_\tau$ for some~$\tau\in \ta{X}$. This is made
precise by the following definition and lemma (which are key for a
backdoor approach to answer set programming).


\begin{DEF}
Let $P$ be a program and $X$ a set of atoms.  We define
 \[\stableset(P,X) =  \SB M\cup \tau^{-1}(1) \SM 
 \tau\in \ta{X\cap\, \at(P)}, M \in \stableset(P_\tau)\SE.\]
\end{DEF}

\begin{LEM}\label{lem:subset}
  $\stableset(P) \subseteq \stableset(P,X)$ holds for every
  program~$P$ and every set~$X$ of atoms.
\end{LEM}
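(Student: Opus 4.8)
The plan is to prove the containment pointwise: take an arbitrary $N\in\stableset(P)$ and exhibit a witnessing pair $\tau,M$ with $N=M\cup\tau^{-1}(1)$ and $M\in\stableset(P_\tau)$. The natural choice is to let $\tau\in\ta{X\cap\at(P)}$ be the restriction to $X\cap\at(P)$ of the characteristic function of $N$, so that $\tau^{-1}(1)=N\cap X$ (using that any answer set satisfies $N\subseteq\at(P)$), and to set $M=N\setminus X$. Then $N=M\cup\tau^{-1}(1)$ is immediate and the union is disjoint, so everything reduces to showing that $M$ is a minimal model of $(P_\tau)^M$, i.e.\ that $M\in\stableset(P_\tau)$.

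The technical heart is a careful computation of $(P_\tau)^M$. First I would determine which rules $r\in P$ survive into $P_\tau$ and what they become: $r$ survives steps 1--3 of the reduct exactly when $H(r)\cap(N\cap X)=\emptyset$, $H(r)\not\subseteq X$, $B^+(r)\cap(X\setminus N)=\emptyset$, and $B^-(r)\cap(N\cap X)=\emptyset$, and step 4 then turns it into $H(r)\setminus X \leftarrow (B^+(r)\setminus X),\,\pnot(B^-(r)\setminus X)$. Next I would apply the GL reduct under $M=N\setminus X$. The key observation---and the place I expect the main subtlety---is that the $\tau$-reduct survival condition $B^-(r)\cap N\cap X=\emptyset$ combines with the GL-deletion condition $(B^-(r)\setminus X)\cap(N\setminus X)=\emptyset$ to give exactly $B^-(r)\cap N=\emptyset$. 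Hence a rule of $P$ contributes $H(r)\setminus X\leftarrow B^+(r)\setminus X$ to $(P_\tau)^M$ precisely when $B^-(r)\cap N=\emptyset$ (so that $H(r)\leftarrow B^+(r)\in P^N$) together with $H(r)\cap(N\cap X)=\emptyset$, $H(r)\not\subseteq X$, and $B^+(r)\cap X\subseteq N$. This alignment between the rules of $(P_\tau)^M$ and those of $P^N$ is what drives the rest of the argument.

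With this in hand, showing that $M$ is a model of $(P_\tau)^M$ is routine: for each contributing rule, $N$ satisfies $H(r)\leftarrow B^+(r)$ in $P^N$, and the side conditions guarantee that the witnessing head atom (if $B^+(r)\subseteq N$) or the escaping body atom (otherwise) lies outside $X$, hence survives projection and certifies that $M$ satisfies $H(r)\setminus X\leftarrow B^+(r)\setminus X$. For minimality I would argue by contradiction: assume some $M'\subsetneq M$ models $(P_\tau)^M$ and form $N'=M'\cup(N\cap X)$, which satisfies $N'\subsetneq N$; the goal is to show $N'$ models $P^N$, contradicting that $N$ is a minimal model of $P^N$. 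Here the workhorse is simply $N'\subseteq N$, so any rule of $P^N$ whose positive body escapes $N$ also escapes $N'$; the remaining rules (those with $B^+(r)\subseteq N$) automatically satisfy the head and body conditions above, and I would split on whether $H(r)\cap(N\cap X)=\emptyset$ and $H(r)\not\subseteq X$ hold---if either fails, a true head atom of $r$ lies in $N\cap X\subseteq N'$, and if both hold, the rule contributes to $(P_\tau)^M$, so I can transfer the satisfaction of $M'$ on the projected rule back to $N'$. The hardest part is keeping this bookkeeping exactly right, since each atom's membership in $X$ must be tracked through both reducts at once; the step-3/GL combination noted above is the crux that makes the two reducts compatible.
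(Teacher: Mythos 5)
Your proposal is correct and takes essentially the same route as the paper's proof: the same witness $\tau$ (the characteristic function of the answer set restricted to $X\cap\at(P)$) and the same candidate $M=N\setminus X$, the same rule-level correspondence between $(P_\tau)^M$ and $P^N$ (including the key step-3/GL combination yielding $B^-(r)\cap N=\emptyset$), and the same lifting $N'=M'\cup(N\cap X)$ in the minimality part. The only difference is cosmetic: the paper fixes a proper subset and transfers a single violated rule of $P^N$ down to $(P_\tau)^M$, whereas you argue by contradiction and transfer satisfaction of every rule upward --- these are contrapositive formulations of the same correspondence.
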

\begin{proof}
  Let $M\in \stableset(P)$ be chosen arbitrarily. We put $X_0 = (X
  \setminus M)\cap \at(P)$ and $X_1=X \cap M$ and define a truth
  assignment~$\tau\in \ta{X\cap \at(P)}$ by setting $\tau^{-1}(i)=X_i$
  for $i\in \{0,1\}$.  Let $M'=M\setminus X_1$.  Observe that $M'\in
  \stableset(P_\tau)$ implies $M\in \stableset(P,X)$ since $M= M'\cup
  \tau^{-1}(1)$ by definition.  Hence, to establish the lemma, it
  suffices to show that $M'\in \stableset(P_\tau)$.  We have to show
  that $M'$ is a model of $P_\tau^{M'}$, and that no proper subset of
  $M'$ is a model of $P_\tau^{M'}$.

%
%
  In order to show that $M'$ is a model of $P_\tau^{M'}$, choose
  $r'\in P_\tau^{M'}$ arbitrarily. By construction of $P_\tau^{M'}$
  there is a corresponding rule~$r\in P$ with $H(r')=H(r)\setminus
  X_0$ and $B^+(r')=B^+(r)\setminus X_1$ which gives rise to a rule
  $r''\in P_\tau$, and in turn, $r''$ gives rise to $r'\in
  \P_\tau^{M'}$.  Since $B^-(r)\cap X_1=\emptyset$ (otherwise $r$
  would have been deleted forming $P_\tau$) and $B^-(r)\cap
  M'=\emptyset$ (otherwise $r''$ would have been deleted forming
  $P_\tau^{M'}$), it follows that $B^-(r)\cap M=\emptyset$. Thus $r$
  gives rise to a rule~$r^*\in P^M$ with $H(r)=H(r^*)$ and
  $B^+(r)=B^+(r^*)$. Since $M\in \stableset(P)$, $M$ satisfies $r^*$,
  i.e., $H(r)\cap M\neq \emptyset$ or $B^+(r)\setminus M\neq
  \emptyset$. However, $H(r)\cap M=H(r')\cap M'$ and $B^+(r)\setminus
  M = B^+(r')\setminus M'$, thus $M'$ satisfies $r'$.  Since $r'\in
  P_\tau^{M'}$ was chosen arbitrarily, we conclude that $M'$ is a
  model of $P_\tau^{M'}$.

  In order to show that no proper subset of $M'$ is a model of
  $P_\tau^{M'}$ choose arbitrarily a proper subset~$N'\subsetneq M'$.
  Let $N=N'\cup X_1$. Since $M'=M\setminus X_1$ and $X_1\subseteq M$
  it follows that $N \subsetneq M$. Since $M$ is a minimal model of
  $P^M$, $N$ cannot be a model of $P^M$. Consequently, there must be a
  rule~$r\in P$ such that $B^-(r)\cap M= \emptyset$ (i.e., $r$ is not
  deleted by forming $P^M$), $B^+(r)\subseteq N$ and $H(r)\cap
  N=\emptyset$.  However, since $M$ satisfies $P^M$, and since
  $B^+(r)\subseteq N\subseteq M$, $H(r)\cap M\neq \emptyset$. Thus $r$
  is not a constraint. Moreover, since $H(r)\cap M\neq \emptyset$ and
  $M\cap X_0=\emptyset$, it follows that $H(r)\setminus X_0\neq
  \emptyset$. Thus, since $H(r)\cap X_1 =\emptyset$, $H(r) \setminus X
  \neq \emptyset$. We conclude that $r$ is not deleted when forming
  $P_\tau$ and giving rise to a rule~$r'\in P_\tau$, which in turn is
  not deleted when forming $P_\tau^{M'}$, giving rise to a rule~$r''$,
  with $H(r'') = H(r)\setminus X_0$, $B^+(r'')=B^+(r)\setminus X_1$,
  and $B^-(r'')=\emptyset$.  Since $B^+(r'')\subseteq N'$ and
  $H(r'')\cap N=\emptyset$, $N'$ is not a model of $P_\tau^{M'}$.

  Thus we have established that $M'$ is a stable model of $P_\tau$, and
  so the lemma follows.
\end{proof}

In view of Lemma~\ref{lem:subset} we shall refer to the elements in
$\stableset(P,X)$ as ``answer set candidates.''

\begin{EX}\label{ex:expoit_bds}
  We consider program~$P$ of Example~\ref{ex:running} and the
  \strongBds{\Horn} $X=\{b,c\}$ of Example~\ref{ex:strong-bds}. The
  answer sets of $P_\tau$ are $\stableset(P_{\bar b \bar
    c})=\{\{e,f\}\}$, $\stableset(P_{\bar b c})=\{\{f\}\}$,
  $\stableset(P_{b \bar c})=\{\{e,f\}\}$, and
  $\stableset(P_{bc})=\{\{f\}\}$ for $\tau \in {\ta{\{b,c\}}}$.  We
  obtain the set~$\stableset(P,X)=\{ \{e,f\}, \{c,f\}, \{b,e,f\},
  \{b,c,f\}\}$.
\end{EX}

In view of Lemmas~\ref{lem:subset}, we can compute $\stableset(P)$ by
(i)~computing $\stableset(P_\tau)$ for all $\tau\in \ta{X}$ (this
produces the set~$\stableset(P,X)$ of candidates for $\stableset(P
)$), and (ii)~checking for each $M \in \stableset(P,X)$ whether it is
an answer set of $P$. The check~(ii) entails (iia)~checking whether
$M\in \stableset(P,X)$ is a model of P and (iib) whether $M\in
\stableset(P,X)$ is a minimal model of $P^M$. We would like to note
that in particular any constraint contained in $P$ is removed in the
truth assignment reduct~$P_\tau$ but considered in
check~(iia). Clearly check (iia) can be carried out in polynomial time
for each~$M$. Check (iib), however, is $\coNP$\hy hard in
general~\cite{MarekTruszczynski91}, but polynomial for normal
programs~\cite{CadoliLenzerini94}.

Fortunately, for our considerations it suffices to perform check~(iib)
for programs that are ``close to \Normal,'' and so the check is
fixed-parameter tractable in the size of the given backdoor.  More
precisely, we consider the following parameterized problem and
establish its fixed-parameter tractability in the next lemma.

\pproblem{\BdCheck{\CCC}}{A program~$P$, a \strongBds{\CCC} $X$ of $P$
  and a set~$M\subseteq \at(P)$.}{The size~$\Card{X}$ of the
  backdoor.}{Decide whether $M$ is an answer set of $P$.}

\begin{LEM}\label{lem:mincheck}
  Let \CCC be a class of normal programs. The problem~\BdCheck{\CCC}
  is fixed-parameter tractable.
\end{LEM}
\begin{proof}
  Let $\CCC$ be a class of normal programs, $P$ a program, and $X$ a
  \strongBds{\CCC}~$X$ of $P$ with $\Card{X}=k$. We can check in
  polynomial time whether $M$ is a model of $P$ and whether $M$ is a
  model of $P^M$. If it is not, we can reject $M$, and we are
  done. Hence assume that $M$ is a model of $P^M$.  In order to check
  whether $M \in \stableset(P)$ we still need to decide whether $M$ is
  a minimal model of $P^M$.  We may assume, w.l.o.g., that $P$
  contains no tautological rules, as it is clear that the test for
  minimality does not depend on tautological rules.
  
  Let $X_1\subseteq M \cap X$. We construct from $P^M$ a program
  $P^M_{X_1\subseteq X}$ by
  (i)~removing all rules~$r$ for which $H(r)\cap X_1\neq \emptyset$,
  and
  (ii)~replacing for all remaining rules~$r$ the head~$H(r)$ with
  $H(r)\setminus X$, and the positive body~$B^+(r)$ with
  $B^+(r)\setminus X_1$.

  \emph{Claim: $P^M_{X_1\subseteq X}$ is Horn.}
  
  To show the claim, consider some rule~$r'\in P^M_{X_1\subseteq X}$.
  By construction, there must be a rule~$r\in P$ that gives raise to a
  rule in $P^M$, which in turn gives raise to $r'$. Let $\tau \in
  \ta{X}$ be the assignment that sets all atoms in $X \cap H(r)$ to~0,
  and all atoms in $X \setminus H(r)$ to~1.  Since $r$ is not
  tautological, it follows that $r$ is not deleted when we obtain
  $P_\tau$, and it gives rise to a rule~$r^*\in P_\tau$, where
  $H(r^*)=H(r)\setminus X$. However, since~$\CCC$ is a class of normal
  programs, $r^*$ is normal. Hence $1 \geq \Card{H(r^*)} = \Card{ H(r)
    \setminus X} = H(r')$, and the claim follows.

  To test whether $M$ is a minimal model of $P^M$, we run the
  following procedure for every set~$X_1\subseteq M \cap X$.

  \begin{quote}
    If $P^M_{X_1\subseteq X}$ has no model, then stop and return TRUE.

    Otherwise, compute the unique minimal model~$L$ of the Horn
    program~$P^M_{X_1\subseteq X}$. If $L \subseteq M \setminus X$, $L
    \cup X_1 \subsetneq M$, and $L \cup X_1$ is a model of $P^M$, then
    return FALSE. Otherwise return TRUE.
  \end{quote}

  For each set~$X_1\subseteq M \cap X$ the above procedure runs in
  linear time by Lemma~\ref{lem:horn-lineartime}. As there are
  $O(2^k)$ sets~$X_1$ to consider, we have a total running time of
  $O(2^k n)$ where $n$ denotes the input size of $P$ and $k=\Card{X}$.
  It remains to establish the correctness of the above procedure in
  terms of the following claim.
  
  \emph{Claim: $M$ is a minimal model of $P^M$ if and only if the
    algorithm returns TRUE for each~$X_1\subseteq M \cap X$.}
 
  ($\Rightarrow$). Assume that $M$ is a minimal model of $P^M$, and
  suppose to the contrary that there is some~$X_1\subseteq M \cap X$
  for which the algorithm returns FALSE. Consequently,
  $P^M_{X_1\subseteq X}$ has a unique minimal model~$L$ with
  $L \subseteq M \setminus X$,
  $L \cup X_1 \subsetneq M$, and where
  $L \cup X_1$ is a model of $P^M$. This contradicts the assumption
  that $M$ is a minimal model of $P^M$. Hence the only-if direction of
  the lemma is shown.
               
  ($\Leftarrow$). Assume that the algorithm returns TRUE for each~$X_1
  \subseteq M \cap X$. We show that $M$ is a minimal model of
  $P^M$. Suppose to the contrary that $P^M$ has a model~$M'\subsetneq
  M$.

  We run the algorithm for $X_1:=M' \cap X$. By assumption, the
  algorithm returns TRUE. There are two possibilities:
  (i)~$P^M_{X_1\subseteq X}$ has no model, or (ii)~$P^M_{X_1 \subseteq
    X}$ has a model, and for its unique minimal model $L$ the
  following holds:
  $L$ is not a subset of $M \setminus X$, or
  $L \cup X_1$ is not a proper subset of $M$, or
  $L \cup X_1$ is not a model of $P^M$.
  
  We show that case~(i) is not possible by showing that $M' \setminus
  X$ is a model of $P^M_{X_1\subseteq X}$.

  To see this, consider a rule~$r'\in P^M_{X_1\subseteq X}$, and let
  $r\in P^M$ such that $r'$ is obtained from $r$ by removing~$X$ from
  $H(r)$ and by removing~$X_1$ from~$B^+(r)$. Since $M'$ is a model of
  $P^M$, we have (a)~$B^+(r) \setminus M' \neq\emptyset$ or (b)~$H(r)
  \cap M'\neq \emptyset$. Moreover, since $B^+(r') = B^+(r) \setminus
  X_1$ and $X_1 = M'\cap X$, (i)~implies $\emptyset \neq B^+(r)
  \setminus M' = B^+(r) \setminus X_1 \setminus M' = B^+(r')\setminus
  M' \subseteq B^+(r') \setminus (M' \setminus X)$, and since $H(r)
  \cap X_1 = \emptyset$, (ii)~implies $\emptyset \neq H(r)\cap M'=
  H(r)\cap (M'\setminus X_1)= H(r)\cap (M'\setminus X)= (H(r)\setminus
  X) \cap (M'\setminus X)= H(r') \cap (M'\setminus X)$.
  Hence $M' \setminus X$ satisfies $r'$. Since $r'\in
  P^M_{X_1\subseteq X}$ was chosen arbitrarily, we conclude that $M'
  \setminus X$ is a model of $P^M_{X_1\subseteq X}$.
 
  Case~(ii) is not possible either, as we can see as follows. Assume
  $P^M_{X_1 \subseteq X}$ has a model, and let $L$ be its unique
  minimal model. Since $M'\setminus X$ is a model of $P^M_{X_1
    \subseteq X}$, as shown above, we have $L\subseteq M'\setminus X$.
 
  We have $L \subseteq M \setminus X$ since $L\subseteq M' \setminus
  X$ and $M' \setminus X \subseteq M \setminus X$.

  Further we have $L \cup X_1 \subsetneq M$ since $L \cup X_1
  \subseteq (M' \setminus X) \cup X_1 = (M' \setminus X) \cup (M'\cap
  X) =M' \subsetneq M$.

  And finally $L \cup X_1$ is a model of $P^M$, as can be seen as
  follows.
  Consider a rule~$r\in P^M$. If $X_1 \cap H(r)\neq \emptyset$, then
  $L\cup X_1$ satisfies $r$; thus it remains to consider the case~$X_1
  \cap H(r) = \emptyset$. In this case there is a rule~$r'\in P^M_{X_1
    \subseteq X}$ with $H(r') = H(r)\setminus X$ and $B^+(r') = B^+(r)
  \setminus X_1$. Since $L$ is a model of $P^M_{X_1 \subseteq X}$, $L$
  satisfies $r'$. Hence (a)~$B^+(r')\setminus L \neq\emptyset $ or
  (b)~$H(r') \cap L \neq \emptyset$. Since $B^+(r')=B^+(r)\setminus
  X_1$, (a)~implies that $B^+(r)\setminus ( L\cup X_1)\neq \emptyset$;
  and since $H(r') \subseteq H(r)$, (b)~implies that $H(r)\cap (L \cup
  X_1) \neq \emptyset$. Thus $L \cup X_1$ satisfies~$r$. Since $r\in
  P^M$ was chosen arbitrarily, we conclude that $L \cup X_1$ is a
  model of~$P^M$.

  Since neither case (i) nor case (ii) is possible, we have a
  contradiction, and we conclude that $M$ is a minimal model of $P^M$.

  Hence the second direction of the claim is established, and so the
  lemma follows.
\end{proof}

\begin{figure}
\centering
	\begin{tikzpicture}[-latex,node distance=3em]
          \node(F)[align=center]{Find $\CCC$-backdoor\\ \(X \subseteq
            \at(P)\)}; \node(P)[below of=F,node distance=7em]{\(P\)};
          \node(RUNTM1)[below of=F,node distance=14em]{?};
          \node(A)[align=center,right of=F,node distance=7em]{Apply\\
            \(\tau_i: X \rightarrow\{0,1\}\) };
          \node(Pt1)[below of=A]{\(P_{\tau_1}\in \CCC\)};
          \node(Pt2)[below of=Pt1]{\(P_{\tau_2}\in \CCC\)};
          \node(Ptn)[below of=Pt2]{\(\cdots\)}; \node(Pt2x)[below
          of=Ptn]{\(P_{\Card{\ta{X}}}\in \CCC\)};
        \path(P) edge node[above] {\(\tau_1\)} (Pt1);
        \path(P) edge node[above] {\(\tau_2\)} (Pt2);
        \path(P) edge node[above] {\ldots} (Ptn);
        \path(P) edge node[below,xshift=-8pt,yshift=-3pt]
        {\(\tau_{\Card{\ta{X}}}\)} (Pt2x); \node(RUNTM2)[below
        of=A,node distance=14em]{\(\mathcal{O}(\Card{\ta{X}} \cdot
          n)\)}; \node(C)[align=center,right of=A,node
        distance=9em]{Determine answer sets\\ of simplified programs};
        \node(ASPt1)[below of=C]{\(\stableset(P_{\tau_1})\)};
        \node(ASPt2)[below of=ASPt1]{\(\stableset(P_{\tau_2})\)};
        \node(ASPtn)[below of=ASPt2]{\(\cdots\)}; \node(ASPt2x)[below
        of=ASPtn]{\(\stableset(P_{\tau_{\Card{\ta{X}}}})\)};
        \path(Pt1) edge node[above] {} (ASPt1);
        \path(Pt2) edge node[above] {} (ASPt2);
        \path(Ptn) edge node[above] {} (ASPtn);
        \path(Pt2x) edge node[below] {} (ASPt2x); \node(RUNTM3)[below
        of=C,node distance=14em]{\(\mathcal{O}(\Card{\ta{X}}\cdot
          n^c)\)}; \node(Cand)[align=center,right of=C,node
        distance=9em]{Check\\ candidates}; \node(AS)[below of=Cand,
        node distance=7em]{$\stableset(P,X)$};
        \path(ASPt1) edge node[above right] {\(\cup\,
          \tau^{-1}_1(1)\)} (AS);
        \path(ASPt2) edge node[above] {\ldots} (AS);
        \path(ASPtn) edge node[above] {\ldots} (AS);
        \path(ASPt2x) edge node[below right] {\(\cup\,
          \tau^{-1}_{\Card{\ta{X}}}(1)\)} (AS); \node(RUNTM4)[below
        of=Cand,node
        distance=14em,xshift=6mm]{\(\mathcal{O}(\Card{\ta{X}}^2 \cdot
          n^c )\)}; \node(SOL)[right of=Cand,node
        distance=6em]{Solutions}; \node(ASP)[below of=SOL,node
        distance=7em]{$\stableset(P)$}; \path(AS) edge (ASP);
        \node(RUNTM5)[below of=SOL,node distance=14em]{};
	\end{tikzpicture}
	\caption{Exploit pattern of ASP backdoors if the target class \(\mathcal{C}\) is normal and enumerable where $n$ denotes the input size of $P$.}
	\label{fig:exploit-backdoors}
\end{figure}
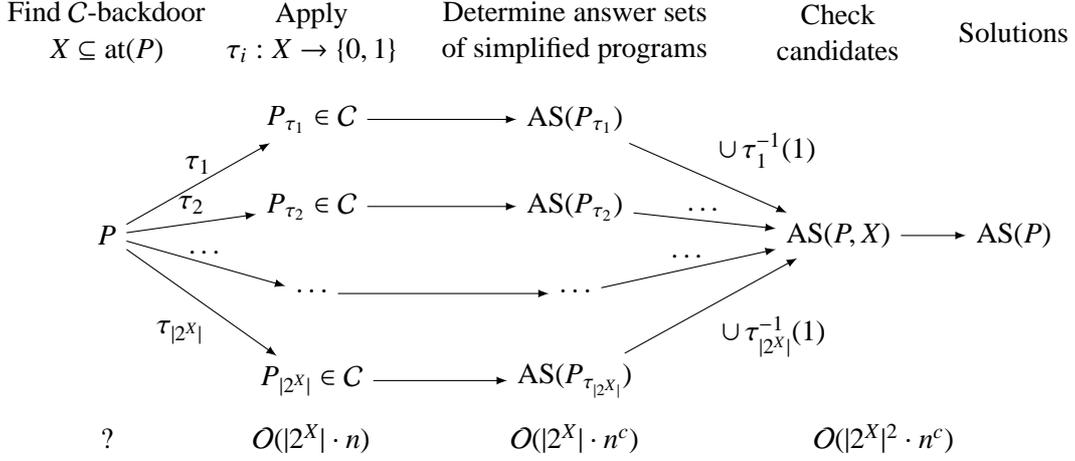

Figure~\ref{fig:exploit-backdoors} illustrates how we can exploit a
\strongBds{\CCC} to find answer sets. For a given program~$P$ and a
\strongBds{\CCC}~$X$ of $P$ we have to consider $\Card{\ta{X}}$ truth
assignments to the atoms in the backdoor~$X$.  For each truth
assignment~$\tau \in \ta{X}$ we reduce the program $P$ to a
program~$P_\tau$ and compute the set~$\stableset(P_\tau)$. Finally, we
obtain the set~$\stableset(P)$ by checking for each~$M\in
\stableset(P_\tau)$ whether it gives rise to an answer set of $P$.

\begin{EX}
  We consider the set~$\stableset(P,X)=\{ \{e,f\}, \{c,f\}, \{b,e,f\},
  \{b,c,f\}\}$ of answer set candidates of Example~\ref{ex:expoit_bds}
  and check for each candidate~$L=\{e,f\}$, $M = \{c,f\}$,
  $N=\{b,e,f\}$, and $O=\{b,c,f\}$ whether it is an answer set
  of~$P$. Therefore we solve the problem \BdCheck{\Horn} by means of
  Lemma~\ref{lem:mincheck}. 

  First we test whether the sets~$L$, $M$, $N$ and $O$ are models of
  $P$. We easily observe that $N$ and $O$ are models of $P$. But $L$
  and $M$ are not models of $P$ since they do not satisfy the rule~$c
  \leftarrow e, f, \pnot b$ and $b \leftarrow c$ respectively, and we
  can drop them as candidates. Then we positively answer the question
  whether $N$ and $O$ are models of its GL-reducts $P^N$ and $P^O$
  respectively. 

  Next we consider the minimality and apply the algorithm of
  Lemma~\ref{lem:mincheck} for each subset of the
  backdoor~$X=\{b,c\}$. 
  We have the GL-reduct $P^N=\{d \leftarrow a, e \rsep e \por c
  \leftarrow f \rsep f \leftarrow d, c \rsep c \leftarrow d \rsep b
  \leftarrow c\rsep f\}$.
  For $X_1 = \emptyset$ we obtain $P^N_{X_1\subseteq X}=\{d \leftarrow
  a, e \rsep e \leftarrow f \rsep f \leftarrow d, c \rsep \leftarrow d
  \rsep \leftarrow c \rsep f\}$. The set $L=\{e,f\}$ is the unique
  minimal model of $P^N_{X_1 \subseteq X}$. Since $L \subseteq N
  \setminus X$, $L \cup X_1 \subsetneq N$, and $L \cup X_1$ is a model
  of $P^N$, the algorithm returns FALSE. We conclude that $N$ is not a
  minimal model of $P^N$ and thus $N$ is not an answer set of $P$. 

  We obtain the GL-reduct $P^O=\{d \leftarrow a, e \rsep e \por c
  \leftarrow f \rsep f \leftarrow d, c \rsep c \leftarrow d \rsep b
  \leftarrow c\rsep f\}$.
  For~$X_1 =\emptyset$ we have $P_{X_1\subseteq X}=\{d \leftarrow a ,
  e \rsep e \leftarrow f \rsep f \leftarrow d, e \rsep \leftarrow d
  \rsep \leftarrow c \rsep f \}$. The set~$L=\{e,f\}$ is the unique
  minimal model of $P_{X_1 \subseteq X}$. Since $L\cup X_1 \subsetneq
  O$, the algorithm returns TRUE.  
  For~$ X_2=\{b\}$ we get $P_{X_2\subseteq X}=\{d \leftarrow a, e
  \rsep e \leftarrow f \rsep f \leftarrow d, e \rsep \leftarrow d
  \rsep f \}$ and the unique minimal model $L=\{e,f\}$. Since $L
  \subseteq O \setminus X$, the algorithm returns TRUE. 
  For $X_3=\{c\}$ we obtain $P_{X_3\subseteq X}=\{d \leftarrow a, e
  \rsep f \leftarrow d \rsep \leftarrow \rsep f\}$ and no minimal
  model. Thus the algorithm returns TRUE.  
  For $X_4=\{b,c\}$ we have $P_{X_4\subseteq X}=\{d \leftarrow a, e
  \rsep f \leftarrow d \rsep f \}$ and the unique minimal model
  $L=\{f\}$. Since $L \cup X_1 \subsetneq M$, the algorithm returns
  TRUE. Since only $\{b,c,f\} \in \stableset(P,X)$ is an answer set of
  $P$, we obtain $\stableset(P)=\{\{b,c,f\}\}$.
\end{EX}

In view of Lemmas~\ref{lem:subset} and \ref{lem:mincheck}, the
computation of $\stableset(P)$ is fixed-parameter tractable for
parameter~$k$ if we know a \strongBds{\CCC}~$X$ of size at most~$k$
for $P$, and each program in $\CCC$ is normal and its stable sets can
be computed in polynomial time. This consideration leads to the
following definition and result.
\begin{DEF}
  A class~$\CCC$ of programs is \emph{enumerable} if for each~$P\in
  \CCC$ we can compute~$\stableset(P)$ in polynomial time. If
  $\stableset(P)$ can be computed even in linear time, then we call
  the class \emph{linear-time enumerable}.
\end{DEF}

\begin{THE}
\label{the:evaluation}
Let $\CCC$ be an enumerable class of normal programs. The problems in
$\AspFull$ are all fixed-parameter tractable when parameterized by the
size of a \strongBds{\CCC}, assuming that the backdoor is given as an
input.
\end{THE}
\begin{proof}
  Let $X$ be the given backdoor, $k=\Card{X}$ and $n$ the input size
  of $P$. Since $P_\tau \in \CCC$ and $\CCC$ is enumerable, we can
  compute $\stableset(P_\tau)$ in polynomial time for each~$\tau \in
  \ta{X}$, say in time~$O(n^c)$ for some constant~$c$. Observe that
  therefore $\Card{\stableset(P_\tau)}\leq O(n^c)$ for each~$\tau \in
  \ta{X}$. Thus we obtain $\stableset(P,X)$ in time~$O(2^k n^c)$, and
  $\Card{\stableset(P,X)}\leq O(2^k n^c)$. By Lemma~\ref{lem:subset},
  $\stableset(P)\subseteq \stableset(P,X)$. By means of
  Lemma~\ref{lem:mincheck} we can decide whether $M \in AS(P)$ in
  time~$O(2^k n)$ for each~$M\in \stableset(P,X)$. Thus we determine
  from~$\stableset(P,X)$ the set of all answer sets of $P$ in
  time~$O(2^k\cdot n^c \cdot 2^k \cdot n + 2^k \cdot n^c)=O(2^{2k}
  n^{c+1})$. Once we know $\stableset(P)$, then we can also solve all
  problems in $\AspFull$ within polynomial time.
\end{proof}

\begin{REM}
  If we know that each program in $\CCC$ has at most one answer set,
  and $P$ has a \strongBds{\CCC} of size~$k$, then we can conclude
  that $P$ has at most~$2^k$ answer sets. Thus, we obtain an upper
  bound on the number of answer sets of $P$ by computing a small
  \strongBds{\CCC} of $P$.
\end{REM}

The following definition will be useful in the sequel.
\begin{DEF}\label{def:taut}
  Let~$\CCC$ be a class of programs. We denote by $\tautext{\CCC}$ the
  class containing all programs that belong to $\CCC$ after removal of
  tautological rules and constraints.
\end{DEF}
In fact, it turns out that for several of our algorithmic results that
work for $\CCC$\hy backdoors also work for $\tautext{\CCC}$-backdoors,
but the latter can be much smaller than the former. Hence we will
often formulate and establish results in terms of the more general notion
$\tautext{\CCC}$.  
\begin{OBS}
  Whenever a class~$\CCC$ of programs is (linear-time) enumerable,
  then so is $\tautext{\CCC}$.
\end{OBS}
\begin{proof}
  Let $\CCC$ be enumerable, let $P^*\in \tautext{C}$, and let $P$ be
  the program obtained from~$P^*$ by removing tautological rules and
  constraints. Since $\CCC$ is enumerable, we can compute
  $\stableset(P)$ in polynomial time (or linear time, if $\CCC$ is
  linear-time enumerable).  By well-known results
  \cite{BrassDix98,BuccafurriLeoneRullo97} $\stableset(P) \subseteq
  \stableset(P^*)$, and in order to check whether some~$M \in
  \stableset(P)$ belongs to $\stableset(P^*)$ we only need to check
  whether $M$ satisfies all the constraints of $P^*$, which can be
  done in linear time.
\end{proof}

\subsection{Backdoor Detection}
Theorem~\ref{the:evaluation} draws our attention to enumerable classes
of normal programs. Given such a class~$\CCC$, is the detection of
$\CCC$\hy backdoors fixed-parameter tractable? If the answer is
affirmative, we can drop in Theorem~\ref{the:evaluation} the
assumption that the backdoor is given as an input for this class.


Each class~$\CCC$ of programs gives rise to the following two
parameterized decision problems:

\pproblem{\strongBdsDet{\CCC}}{a program~$P$ and an integer~$k$.}{the
  integer~$k$.}{Decide whether $P$ has a \strongBds{\CCC}~$X$ of size
  at most~$k$.}

\pproblem{\delBdsDet{\CCC}}{a program~$P$ and an integer~$k$.}{the
  integer~$k$.}{Decide whether $P$ has a \delBds{\CCC}~$X$ of size at
  most~$k$.}

By a standard construction, known as self-reduction or
self-transformation~\cite{Schnorr81,DowneyFellows99}, one can use a
decision algorithm for \delBdsDet{\CCC} to actually find the
backdoor. We only require the base class to be hereditary.

\begin{LEM}
  Let $\CCC$ be a hereditary class of programs.  If \delBdsDet{\CCC}
  is fixed-parameter tractable, then also finding a \delBds{\CCC} of a
  given program~$P$ of size at most~$k$ is fixed-parameter tractable
  (for parameter~$k$).
\end{LEM}
\begin{proof}
  We proceed by induction on $k$. If $k=0$ the statement is clearly
  true. Let $k>0$.  Given $(P,k)$ we check for all~$x\in \at(P)$
  whether $P-\{x\}$ has a \delBds{\CCC} of size at most~$k-1$.  If the
  answer is NO for all $x$, then $P$ has no \delBds{\CCC} of
  size~$k$. If the answer is YES for $x$, then by induction hypothesis
  we can compute a \delBds{\CCC} $X$ of size at most~$k-1$ of $P-x$,
  and $X\cup \{x\}$ is a \delBds{\CCC} of $P$.
\end{proof}



\section{Target Class Horn}\label{sec:horn} 
In this section we consider the important case $\Horn$ as the target
class for backdoors.  As a consequence of
Lemma~\ref{lem:horn-lineartime}, $\Horn$ is linear-time enumerable.
The following lemma shows that strong and deletion
$\tautext{\Horn}$-backdoors coincide.

\begin{LEM}\label{lem:Horn-strong-deletion}
  A set~$X$ is a \strongBds{\tautext{\Horn}} of a program~$P$ if and
  only it is a \delBds{\tautext{\Horn}} of~$P$.
\end{LEM}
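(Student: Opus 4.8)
The plan is to prove the two implications separately, exploiting that one of them is essentially already available. For the easy direction, deletion-backdoor $\Rightarrow$ strong-backdoor, I would simply invoke Proposition~\ref{lem:rule-induced}. Its proof shows $P_\tau\subseteq P-X$ for every $\tau\in\ta{X}$, so all that is needed is that $\Horn^*$ is rule induced: if $P-X\in\Horn^*$ then every subprogram of $P-X$, in particular every $P_\tau$, is again in $\Horn^*$ (a non-tautological, non-constraint rule of a subprogram is a non-tautological, non-constraint rule of the whole program, hence Horn). This direction therefore costs nothing beyond the rule-inducedness remark.

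For the hard direction, strong-backdoor $\Rightarrow$ deletion-backdoor, I would argue by contrapositive, and I would first adopt the standing convention of Lemma~\ref{lem:mincheck} that $P$ contains no tautological rules; this is essential, not cosmetic. Assume $X$ is not a deletion $\Horn^*$-backdoor, so $P-X\notin\Horn^*$. Then $P-X$ contains a rule $r'$ that is non-Horn, non-tautological, and not a constraint; write $r'=r-X$ for the corresponding $r\in P$, so that $H(r')=H(r)\setminus X$, $B^+(r')=B^+(r)\setminus X$, and $B^-(r')=B^-(r)\setminus X$. The goal is to produce a single assignment $\tau\in\ta{X}$ under which $r$ survives all three deletion steps of the truth-assignment reduct, so that step~4 leaves exactly $r'$ inside $P_\tau$; since $r'$ is non-Horn, non-tautological, and not a constraint, this forces $P_\tau\notin\Horn^*$ and hence shows $X$ is not a strong $\Horn^*$-backdoor.

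The witnessing $\tau$ should set every atom of $(H(r)\cup B^-(r))\cap X$ to $0$ and every atom of $B^+(r)\cap X$ to $1$ (extending arbitrarily on the remaining atoms of $X$, which do not occur in $r$). With these values I would check directly that $H(r)\cap\tau^{-1}(1)=\emptyset$ and $H(r)\not\subseteq X$ (the latter because $H(r')\neq\emptyset$), so $r$ survives step~1; that $B^+(r)\cap\tau^{-1}(0)=\emptyset$, so it survives step~2; and that $B^-(r)\cap\tau^{-1}(1)=\emptyset$, so it survives step~3. Step~4 then deletes precisely the atoms of $X$ from $H(r)$, $B^+(r)$, and $B^-(r)$, yielding $r'$, as required.

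The main obstacle is exactly the well-definedness of this $\tau$: a single atom could be asked to take value $0$ (because it lies in $H(r)$ or $B^-(r)$) and value $1$ (because it lies in $B^+(r)$). Such a clash occurs precisely when $B^+(r)\cap(H(r)\cup B^-(r))\neq\emptyset$, i.e. when $r$ is tautological, so this is where the no-tautology convention does the real work: under it the $0$-atoms and the $1$-atoms are disjoint and $\tau$ exists. The convention genuinely cannot be dropped, since deleting $X$ can turn a tautological rule into a non-tautological one; for instance $P=\{a\por b\leftarrow c,\pnot c\}$ with $X=\{c\}$ is a strong $\Horn^*$-backdoor (both reducts are empty) but not a deletion one (as $P-X=\{a\por b\leftarrow\}\notin\Horn^*$). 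Flagging this case explicitly is, I think, the crux of an honest proof.
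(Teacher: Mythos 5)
Your proof is correct and is essentially the paper's own argument: the deletion-to-strong direction is delegated to Proposition~\ref{lem:rule-induced} (plus the observation that $\Horn^*$ is rule induced), and the strong-to-deletion direction uses exactly the same witnessing assignment $\tau$ (atoms of $(H(r)\cup B^-(r))\cap X$ set to $0$, atoms of $B^+(r)\cap X$ set to $1$, the rest arbitrary); that you argue contrapositively where the paper argues directly is immaterial. The one substantive divergence is the tautology issue, and there you are right on a point where the paper is careless. The paper's proof picks a non-tautological, non-constraint rule $r'\in P-X$, takes a rule $r\in P$ from which it arose, and then asserts that ``since $r$ is not tautological, this definition of $\tau$ is sound''; but the fact that $r'=r-X$ is non-tautological does not imply that $r$ is, which is exactly the clash you identify. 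Your counterexample is valid: for $P=\{a\por b\leftarrow c,\pnot c\}$ and $X=\{c\}$, both reducts are empty (the single rule is deleted by step~2 or step~3 of the truth assignment reduct), so $X$ is a strong $\Horn^*$-backdoor, while $P-X=\{a\por b\leftarrow\}\notin\Horn^*$; so the lemma as literally stated fails on programs with tautological rules. Your standing no-tautological-rules assumption (or an equivalent repair, such as discarding tautological rules of $P$ before forming $P-X$) is therefore genuinely needed, not cosmetic; it is also harmless, since tautological rules can be deleted without changing the answer sets --- a fact the paper itself invokes in the proof of Lemma~\ref{lem:mincheck}.
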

\begin{proof}
  Since $\tautext{\Horn}$ is hereditary, Lemma~\ref{lem:rule-induced}
  establishes the if-direction.  For the only-if direction, we assume
  for the sake of a contradiction that $X$ is a
  \strongBds{\tautext{\Horn}} of $P$ but not a
  \delBds{\tautext{\Horn}} of $P$. Hence there is a rule~$r'\in P-X$
  which is neither tautological nor a constraint nor Horn. Let $r\in
  P$ be a rule from which $r'$ was obtained in forming $P-X$.  We
  define $\tau\in \ta{X}$ by setting all atoms in $X \cap (H(r)\cup
  B^-(r))$ to 0, all atoms in $X\cap B^+(r)$ to 1, and all remaining
  atoms in $X\setminus \at(r)$ arbitrarily to 0 or 1.  Since $r$ is
  not tautological, this definition of $\tau$ is sound.  It follows
  that $r'\in P_\tau$, contradicting the assumption that $X$ is a
  \strongBds{\tautext{\Horn}} of $P$.
\end{proof}
 
\begin{DEF}
  Let $P$ be a program. The \emph{negation dependency graph}~$N_P$ is
  the graph defined on the set of atoms of the given program~$P$,
  where two atoms~$x,y$ are joined by an edge~$xy$ if there is a
  rule~$r\in P$ with $x\in H(r)$ and $y\in H(r)\cup B^-(r)$.
\end{DEF}

Tautological rules and constraints do not produce any edges in the
negation dependency graph, hence, if we delete such rules from the
program, we still obtain the same graph.

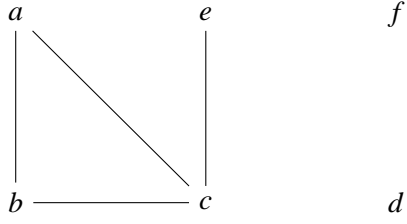
\begin{figure}
  \centering
  \begin{tikzpicture}[-,node distance=25mm]
    \node[](a){$a$}; 
    \node[below of=a](b){$b$}; 
    \node[right of=b](c){$c$}; 
    \node[right of=a](e){$e$}; 
    \path(a) edge (b);
    \path(a) edge (c);
    \path(c) edge (b);
    \path(e) edge (c);
    \node[right of=c](d){$d$}; 
    \node[right of=e](f){$f$}; 
  \end{tikzpicture}
 \caption{Negation dependency graph~$N_P$ of the program~$P$ of
   Example~\ref{ex:running}.}
  \label{fig:Np} 
\end{figure}

\begin{EX} 
  Figure~\ref{fig:Np} visualizes the negation dependency graph~$N_P$
  of the program~$P$ of Example~\ref{ex:running}.
\end{EX}

The following lemma states how we can use recent results on the vertex
cover problem to find deletion backdoors for the target class~\Horn. A
\emph{vertex cover} of a graph~$G=(V,E)$ is a set~$S \subseteq V$ such
that for every edge~$uv\in E$ we have $\{u,v\} \cap S \neq \emptyset$.

\begin{LEM}\label{lem:hornbdsdet-vc}
  Let $P$ be a program.  A set~$X\subseteq \at(P)$ is a \delBds{\Horn}
  of $P$ if and only if $X$ is a vertex cover of the negation
  dependency graph~$N_P$.
\end{LEM}

\begin{proof}
  Let $X\subseteq H(r)\cup B^-(r)$ be a \delBds{\Horn} of $P$.
  Consider an edge~$uv$ of $N_P$. By construction of $N_P$ there is a
  corresponding rule~$r \in P$ with (i)~$u,v\in H(r)$ and $u\neq v$ or
  (ii)~$u\in H(r)$ and $v\in B^-(r)$. Since $X$ is a \delBds{\Horn},
  $\Card{H(r)-X}\leq 1$ and $B^-(r) - X = \emptyset$. Thus if Case~(i)
  applies, $\{u,v\}\cap X \neq \emptyset$. If Case~(ii) applies, again
  $\{u,v\}\cap X \neq \emptyset$. We conclude that $X$ is a vertex
  cover of $N_P$.

  Conversely, assume that $X$ is a vertex cover of $N_P$. Consider a
  rule~$r\in P-X$ for proof by contradiction. If $\Card{H(r)}\geq 2$
  then there are two variables~$u,v\in H(r)$ and an edge~$uv$ of $N_P$
  such that $\{u,v\}\cap X =\emptyset$, contradicting the assumption
  that $X$ is a vertex cover. Similarly, if $\Card{B^-(r)}\geq 1$ then
  we take a variable~$u \in B^-(r)$ and a variable~$v\in H(r)$; such
  $v$ exists since $r$ is not a constraint. Thus $N_P$ contains the
  edge $uv$ with $\{u,v\}\cap X\neq \emptyset$, contradicting the
  assumption that $X$ is a vertex cover. Hence the claim holds.
\end{proof}

\begin{EX}
  For instance, the negation dependency graph~$N_P$ of the program~$P$
  of Example~\ref{ex:running} consists of the triangle~$\{a,b,c\}$ and
  a path $(c,e)$. Then $\{b,c\}$ is a vertex cover of $G$. We observe
  easily that there exists no vertex cover of size $1$. Thus $\{b,c\}$
  is a smallest \strongBds{\tautext{\Horn}} of $P$.
\end{EX}




\begin{THE}\label{the:horn}
  \textsc{Strong $\tautext{\Horn}$\hy Backdoor Detection} is
  fixed-parameter tractable. In fact, given a program with $n$~atoms
  we can find a \strongBds{\tautext{\Horn}} of size at most~$k$ in
  time~$O(1.2738^k + kn)$ or decide that no such backdoor exists.
\end{THE}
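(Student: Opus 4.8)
The plan is to use the equivalence of Lemma~\ref{lem:Horn-strong-deletion} to turn the statement into a purely combinatorial deletion problem, and then to encode that problem as \textsc{Vertex Cover}, whose currently fastest parameterized algorithm (due to Chen, Kanj and Xia) runs in time $O(1.2738^k+kn)$ --- matching the claimed bound. First I would apply Lemma~\ref{lem:Horn-strong-deletion}, so that instead of a strong $\Horn^*$\hy backdoor it suffices to find a set $X$ of atoms of size $\leq k$ with $P-X\in\Horn^*$, i.e.\ a \emph{deletion} $\Horn^*$\hy backdoor. Since tautological rules can be removed without changing the answer sets of $P$, I would strip them in linear time and assume henceforth that $P$ has none (as is done in the proof of Lemma~\ref{lem:mincheck}); this is needed because deleting a ``witness'' atom could otherwise turn a harmless tautological rule into a genuine non\hy Horn rule, breaking the rule\hy by\hy rule decomposition used below.

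Next I would build a \emph{conflict graph} $G_P$ on vertex set $\at(P)$ with two kinds of edges: a \emph{disjunction edge} $\{a,b\}$ whenever two distinct atoms $a,b$ occur together in the head $H(r)$ of some rule $r$, and a \emph{negation edge} $\{a,z\}$ whenever $a\in H(r)$ and $z\in B^-(r)$ for some rule $r$. The central claim is then: \emph{$X$ is a deletion $\Horn^*$\hy backdoor of $P$ if and only if $X$ is a vertex cover of $G_P$.} Granting this, the theorem follows: $G_P$ is computable from $P$ (rejecting immediately if some head has more than $k+1$ atoms, since then already its disjunction clique forces more than $k$ deletions), and running the vertex\hy cover algorithm on $G_P$ with parameter $k$ yields a backdoor of size $\leq k$ or a proof that none exists, within the stated time bound.

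To prove the claim I would argue rule by rule. For a fixed (non\hy tautological) rule $r$, the reduct $r-X$ lies in $\Horn^*$ exactly when it is a constraint or Horn, i.e.\ when $\Card{H(r)\setminus X}\leq 1$ and, in the case that exactly one head atom survives, no negative body atom survives (if no head atom survives then $r-X$ is a constraint and its negative body is irrelevant). I would then observe that this is precisely the condition that $X$ be a vertex cover of the subgraph $G_r\subseteq G_P$ contributed by $r$: the disjunction edges form a clique on $H(r)$, whose vertex covers are exactly the sets leaving at most one head atom uncovered, i.e.\ $\Card{H(r)\setminus X}\leq 1$; and the negation edges form a complete bipartite graph between $H(r)$ and $B^-(r)$, which is covered iff either all of $H(r)$ lies in $X$ (the constraint case) or, whenever some head atom escapes $X$, all of $B^-(r)$ lies in $X$ (the Horn case). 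Taking the union over all rules, $X$ covers $G_P$ iff it covers every $G_r$ iff $P-X\in\Horn^*$.

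The step I expect to be the main obstacle is pinning down this per\hy rule equivalence so that the two escape routes --- making $r-X$ Horn by keeping one head atom while deleting all negations, versus making $r-X$ a constraint by deleting the whole head --- are \emph{both} captured by a single vertex\hy cover condition on $G_r$. The complete bipartite ``negation'' part is exactly what reconciles the two cases (covering it either through the heads or through the negative atoms), and the tautology preprocessing is what guarantees that a rule is an obstruction in $P-X$ precisely when it is one in $P$, so that the decomposition into the $G_r$ is valid.
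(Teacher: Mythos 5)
Your proposal is correct and follows essentially the same route as the paper's own proof: the paper builds exactly the same graph (edges joining two head atoms, or a head atom and a negative-body atom, of each non-tautological rule), states the same vertex-cover equivalence, runs the Chen--Kanj--Xia vertex cover algorithm, and transfers the result to strong backdoors via Lemma~\ref{lem:Horn-strong-deletion}. Your preprocessing step of stripping tautological rules is equivalent to the paper's device of letting only non-tautological rules contribute edges, and your rule-by-rule case analysis merely spells out what the paper dismisses as ``easy to see.''
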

\begin{proof}
  Let $P^*$ be a given program. We delete from $P^*$ all tautological
  rules and all constraints and obtain a program~$P$ with $n$ atoms.
  We observe that the \strongBds{\tautext{\Horn}}s of $P^*$ are
  precisely the \strongBds{\Horn}s of $P$. Let $N_P$ be the negation
  dependency graph of $P$. According to Lemma~\ref{lem:hornbdsdet-vc}
  a set~$X\subseteq \at(P)$ is a vertex cover of $N_P$ if and only if
  $X$ is a \delBds{\tautext{\Horn}} of $P$. Then a vertex cover of
  size at most~$k$, if it exists, can be found in time~$O(1.2738^k +
  kn)$ by~\citex{ChenKanjXia10}. By
  Lemma~\ref{lem:Horn-strong-deletion} this vertex cover is also a
  \strongBds{\tautext{\Horn}} of~$P$.
\end{proof}

Now we can use Theorem~\ref{the:horn} to strengthen the
fixed-parameter tractability result of Theorem~\ref{the:evaluation} by
dropping the assumption that the backdoor is given.
\begin{COR}\label{cor:horn_bds}
  All the problems in $\AspFull$ are fixed-parameter tractable when
  parameterized by the size of a smallest \strongBds{\tautext{\Horn}}
  of the given program.
\end{COR}

\section{Target Classes Based on Acyclicity}\label{sec:acyclic-classes}
There are two causes for a program to have a large number of answer
sets: (i)~disjunctions in the heads of rules, and (ii)~certain cyclic
dependencies between rules.  Disallowing both yields enumerable
classes.

In order to define acyclicity we associate with each disjunctive
program~$P$ its \emph{dependency digraph}~$D_P$ and its
\emph{(undirected) dependency graph}~$U_P$.  These definitions extend
similar notions defined for normal programs by
\citex{AptBlairWalker88} and \citex{GottlobScarcelloSideri02}.

\begin{DEF}
  Let $P$ be a program. The \emph{dependency digraph} is the
  digraph~$D_P$ which has as vertices the atoms of $P$ and a directed
  edge~$(x,y)$ between any two atoms~$x$,~$y$ for which there is a
  rule~$r\in P$ with $x\in H(r)$ and $y\in B^+(r)\cup B^-(r)$.  We
  call the edge~$(x,y)$ \emph{negative} if there is a rule~$r\in P$
  with $x\in H(r)$ and $y\in B^-(r)$.
\end{DEF}

\begin{DEF}
  Let $P$ be a program. The \emph{(undirected) dependency graph}~is
  the graph~$U_P$ obtained from the dependency digraph~$D_p$
\begin{enumerate}
\item by replacing each negative edge~$e=(x,y)$ with two edges~$xv_e$,
  $v_ey$ where $v_e$ is a new \emph{negative vertex}, and
\item by replacing each remaining directed edge~$(u,v)$ with an
  edge~$uv$.
\end{enumerate}
\end{DEF}

\begin{EX}\label{ex:graphs}
  Figure~\ref{fig:DpUp} visualizes the dependency digraph~$D_P$ and
  the dependency graph $U_p$ of the program $P$ of
  Example~\ref{ex:running}.
\end{EX}

\begin{figure}[htb]
  \centering 
\subfloat{
  \begin{tikzpicture}[-latex,node distance=25mm]
    \node[](c){$c$};
    \node[left of=c](b){$b$};
    \node[above left of=c](a){$a$};
    \node[right of=c](f){$f$};
    \node[above right of=c](e){$e$};
    \node[above right of=a](d){$d$};
    \path(d) edge[bend right] (a);
    \path(a) edge[bend right] (d);
    \path(d) edge (e);
    \path(e) edge (f);
    \path(f) edge[bend right=50] (d);
    \path(e) edge[bend right] node[transparent,label=above:$\pnot$]{} (c);
    \path(c) edge[] node[transparent,label=right:$\pnot$]{} (e);
    \path(c) edge[bend right] (f);
    \path(f) edge[bend right] node[transparent,label=below:$\pnot$]{} (c);
    \path(c) edge[bend right] node[transparent,label=below:$\pnot$]{} (b);
    \path(b) edge[bend right] (c);
    \path(a) edge[] node[transparent,label=above left:$\pnot$]{} (b);
    \path(a) edge[] node[transparent,label=above:$\pnot$]{} (c);
    \path(c) edge (d);
  \end{tikzpicture}
 \label{fig:Dp}
}
 \quad \subfloat{
\begin{tikzpicture}[-,node distance=15mm]
    \node[](c){$c$};
    \node[left of=c,yshift=1em](vcb){$v_{c,b}$};
    \node[left of=vcb](b){$b$};
    \node[above right of=b] (vab){$v_{a,b}$};
    \node[right of=vab,xshift=-0.5em] (vac){$v_{a,c}$};
    \node[above right of=vab](a){$a$};
    \node[right of=c](vfc){$v_{f,c}$};
    \node[right of=vfc](f){$f$};
    \node[transparent,above right of=c](e1){};
    \node[above right of=e1](e){$e$};
    \node[right of=e1,xshift=-2em](vec){$v_{e,c}$};
    \node[left of=e1,xshift=3em](vce){$v_{c,e}$};
    \node[above right of=a](d){$d$};
    \path(a) edge[] (d);
    \path(d) edge (e);
    \path(e) edge (f);
    \path(f) edge[bend right=50] (d);
    \path(e) edge[] (vce);
    \path(vce) edge[] (c);
    \path(c) edge[] (vec);
    \path(vec) edge[] (e);
    \path(c) edge[bend right] (f);
    \path(f) edge[] (vfc);
    \path(vfc) edge[] (c);
    \path(c) edge (vcb);
    \path(vcb) edge (b);
    \path(b) edge[bend right] (c);
    \path(a) edge[] (vab);
    \path(vab) edge[] (b);
    \path(a) edge (vac);
    \path(vac) edge (c);
    \path(c) edge (d);
  \end{tikzpicture}
}
\caption{Dependency digraph~$D_P$ (left) and dependency graph~$U_P$
  (right) of the program~$P$ of Example~\ref{ex:running}.}
	\label{fig:DpUp}
\end{figure}
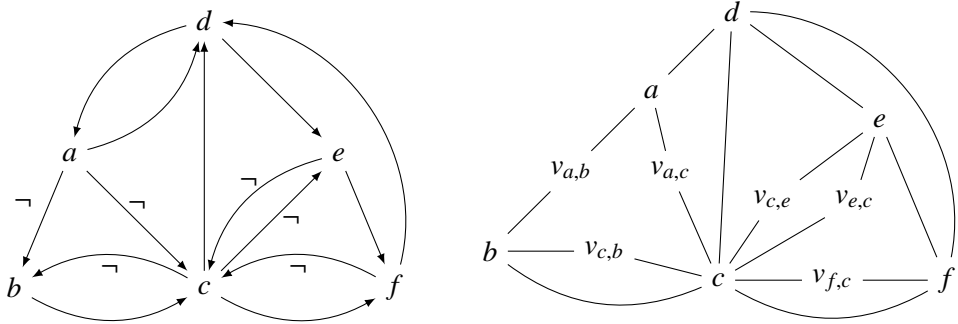

\begin{DEF}
  Let $P$ be a program.
\begin{enumerate}
\item A \emph{directed cycle of $P$} is a directed cycle in the
  dependency digraph~$D_P$.
\item  A directed cycle is \emph{bad} if it contains a negative edge,
  otherwise it is \emph{good}.
\item A directed cycle is \emph{even} if it contains an even number
  of negative edges, otherwise it is \emph{odd}.
\item A \emph{cycle of $P$} is a cycle in the dependency graph~$U_P$.
\item A cycle is \emph{bad} if it contains a negative vertex,
  otherwise it is \emph{good}.
\item A cycle is \emph{even} if it contains an even number
  of negative vertices, otherwise it is \emph{odd}.
\end{enumerate}
\end{DEF}

\begin{DEF}
  The following classes of programs are defined in terms of the
  absence of various kinds of cycles:
\begin{itemize}
\item \C contains all programs that have no cycles,
\item \BC contains all programs that have no bad cycles,
\item \DC contains all programs that have no directed cycles,
\item \DCTWO contains all programs that have no directed cycles of
  length at least~3 and no directed bad cycles
\item \DBC contains all programs that have no directed bad cycles,
\item \EC contains all programs that have no even cycles,
\item \BEC contains all programs that have bad even cycles,
\item \DEC contains all programs that have no directed even
  cycles, and
\item \DBEC contains all programs that have no directed bad even
  cycles.
\end{itemize}
We let $\Acyc$ denote the family of all the eight classes defined
above. We also write $\DAcyc$ to denote the subfamily~$\{\DC$,
$\DCTWO$, $\DBC$, $\DEC$, $\DBEC\} \subseteq \Acyc$.
\end{DEF}

\begin{EX}
  Consider the dependency graphs of the program~$P$ of
  Example~\ref{ex:running} as depicted in Figure~\ref{fig:DpUp}.  For
  instance the sequence~$(d,e,f)$ is a cycle, $(d,a)$ is a directed
  cycle (of length~2), $(d,e,f)$ and $(c,e,f)$ are directed cycles (of
  length~3), $(a,v_{(a,c)},c,d)$ is a bad cycle, $(c,f)$ is a directed
  bad cycle. The sequence~$(d,e,f)$ is an even cycle and an even
  directed cycle, $(c,e)$ is an directed bad even cycle.
 
  The set~$X=\{c\}$ is a \strongBds{\DBEC} since the truth assignment
  reducts~$P_{c=0}=P_0=\{ d \leftarrow \rsep a \leftarrow \pnot b
  \rsep e \leftarrow f \rsep f\}$ and $P_{1}=\{ d \leftarrow a, e
  \rsep f \leftarrow d \rsep b \rsep f\}$ are in the target class
  \DBEC. $X$ is also a \strongBds{\BEC}, since $P_0\in \BEC$ and $P_1
  \in \BEC$. The answer sets of $P_\tau$ are $\stableset(P_{\bar c}) =
  \{\{e,f\}\}$ and $\stableset(P_{c}) = \{\{b,f\}\}$. Thus
  $\stableset(P,X) = \{ \{e,f\},\{b,c,f\}\}$, and since only
  $\{b,c,f\}$ is an answer set of $P$, we obtain
  $\stableset(P)=\{\{b,c,f\}\}$.
\end{EX}

The dependency and dependency digraphs contain cycles through head
atoms for non-singleton heads. This has the following consequence.

\begin{OBS} \label{obs:normal} $\CCC \subseteq \Normal$ holds for
  all~$\CCC\in \Acyc$.
\end{OBS}

If we have two programs $P\subseteq P'$, then clearly the dependency
(di)graph of $P$ is a sub(di)graph of the dependency (di)graph of
$P'$. This has the following consequence.

\begin{OBS}\label{obs:acyclic_rule_induced}
  All~$\CCC \in \Acyc$ are hereditary, and so is $\tautext{\CCC}$.
\end{OBS}


\begin{figure}
\centering
\begin{tikzpicture}[latex-,level 1/.style={sibling distance=8em},level distance=5em,font=\small]
  \node[] (dbec) {$\DBEC$}
  child { node (dec){$\DEC$}
    child{ node (dctwo) {$\DCTWO$} edge from parent[draw=none]
      child{ node (dc) {$\DC$}}}
  }
  child { node (dbc){$\DBC$}
    child{ node (bc){$\BC$}
      child{ node(c){$\C$}}
      }
  }
  child { node (bec){$\BEC$}
    child{ node (ec){$\EC$}
      child{ node(horn){$\Horn$} edge from parent[transparent]}
    }
  };
  \path(dctwo) edge (c);
  \path(bc) edge (horn);
  \path(ec) edge (c);
  \path(bec) edge (bc);
  \path(dbc) edge (dctwo);

\end{tikzpicture}%
\caption{Relationship between classes of programs with respect to
  their generality. \label{fig:lattice}}
\label{fig:diagram-strongBds}
\end{figure}
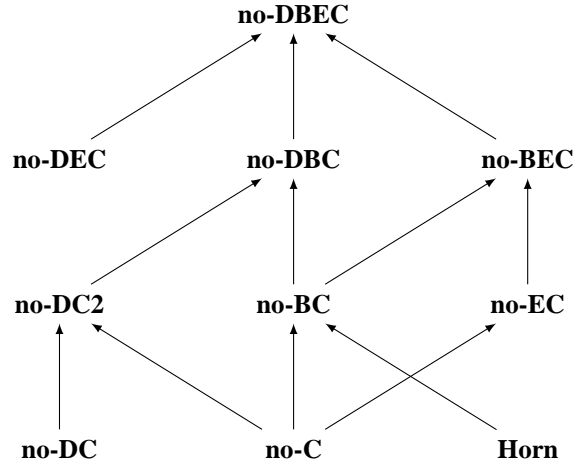

\begin{samepage}
The following is a direct consequence of the definitions of the
various classes in $\Acyc$.
\begin{OBS}\label{obs:lattice} 
  Let $\CCC,\CCC' \in \Acyc \cup \{\Horn\}$ such that the digraph in
  Figure~\ref{fig:lattice} contains a directed path from the
  class~$\CCC$ to the class~$\CCC'$, then $\CCC\subseteq \CCC'$. If no
  inclusion between two classes is indicated, then the classes are in
  fact incomparable.
\end{OBS}
\end{samepage}
\begin{proof}
  We first consider the acyclicity-based target classes. By definition
  we have $\DC\subsetneq \DBC$ and $\C\subsetneq \BC \subsetneq \DBC$;
  it is easy to see that the inclusions are proper.  However, contrary
  to what one expects, $\C\not\subseteq \DC$, which can be seen by
  considering the program~$P_1=\{x\leftarrow y,\ y\leftarrow x\}$.
  But the class \DCTWO which requires that a program has no directed
  cycles but may have directed good cycles of length~2 (as in $P_1$)
  generalizes both classes~$\C$ and $\DC$. By definition we have $\DBC
  \subsetneq \DBEC$, $\DEC \subsetneq \DBEC$, $\EC \subsetneq \BEC$,
  $\C \subsetneq \EC$, and $\DC \subsetneq \DEC$.

  Next we consider the target class~\Horn. Let $\CCC \in
  \{\C,\DC,\EC\}$.  We easily observe that $\Horn \not \subseteq \CCC$
  by considering the program~$P_2 = \{ a \leftarrow b \rsep b
  \leftarrow c \rsep c \leftarrow a \}$ which is obviously Horn but
  does not belong to $\CCC$. Conversely, we observe that $\CCC \not
  \subseteq \Horn$ by considering the program~$P_3=\{a \leftarrow
  \pnot b\}$ which belongs to $\CCC$ but is obviously not Horn. Thus
  $\CCC$ and $\Horn$ are incomparable. We observe that $\Horn
  \subsetneq \BC$ by again considering the program~$P_3$ which belongs
  to $\BC$, but is obviously not Horn, and by considering the fact
  that all rules~$r$ in a Horn program~$P$ satisfy $\Card{H(r)}\leq 1$
  and $B^-(r)=\emptyset$ which yields that the dependency graph~$U_P$
  contains no bad vertices and hence gives us that $U_P$ contains no
  bad cycles.

\end{proof}

The class $\DBC$ coincides with the well-known class of
\emph{stratified}
programs~\cite{AptBlairWalker88,Gelder89a,ChandraHarel85}. A normal
program~$P$ is \emph{stratified} if there is a mapping~$\str: \at(P)
\rightarrow \NAT$, called \emph{stratification}, such that for each
rule~$r$ in $P$ the following holds: (i)~if $x\in H(r)$ and $y \in
B^+(r)$, then $\str(x) \leq \str(y)$ and (ii)~if $x\in H(r)$ and $y
\in B^-(r)$, then $\str(x) < \str(y)$.

\begin{LEM}[\citey{AptBlairWalker88}]
\label{lem:strat-dbc}
$\NSTR=\DBC$.
\end{LEM}

The class~$\DBEC$, the largest class in $\Acyc$, has already been
studied by Zhao and Lin \shortcite{Zhao02,LinZhao04}, who showed that
every program in $\DBEC$ has at most one answer set, and this answer
set can be found in polynomial time. The proof involves the
well-founded semantics \shortcite{GelderRossSchlipf91}. For~$\DBC$ the
unique answer set can even be found in linear
time~\cite{NiemelaRintanen94}.

In our context this has the following important consequence.
\begin{PRO}\label{pro:enum}
  All classes in $\Acyc$ are enumerable, the classes~$\CCC\in \Acyc$
  with $\CCC\subseteq \DBC$ are even linear-time enumerable.
\end{PRO}

In view of Observation~\ref{obs:normal} and
Proposition~\ref{pro:enum}, all classes in $\Acyc$ satisfy the
requirement of Theorem~\ref{the:evaluation} and are therefore in
principle suitable target classes of a backdoor approach.
Therefore we will study the parameterized complexity of
\strongBdsDet{\CCC} and \delBdsDet{\CCC} for $\CCC\in Acyc$.  As we
shall see in the two subsections, the results for \strongBdsDet{\CCC}
are throughout negative, however for \delBdsDet{\CCC} there are
several (fixed-parameter) tractable cases.

\subsection{Strong Backdoor Detection}\label{sec:acyc-bdDet}

\begin{THE}\label{the:w2}
  For every target class~$\CCC\in \Acyc$ the problem
  \strongBdsDet{\CCC} is $\W[2]$-hard. If $\DC \subseteq \CCC$, then
  even \strongBdsDet{\tautext{\CCC}} is $\W[2]$-hard. Hence all these
  problems are unlikely to be fixed-parameter tractable.
\end{THE}
\begin{proof} 
  We give an fpt-reduction from the $\W[2]$-complete problem \HS to
  \strongBdsDet{\CCC}, see Section~\ref{sec:PC}.  Let $(\SSS,k)$ be an
  instance of this problem with $\SSS=\{S_1,\dots,S_m\}$.  We
  construct a program~$P$ as follows.  As atoms we take the elements
  of $U=\bigcup_{i=1}^m S_i$ and new atoms~$a_i^j$ and $b_i^j$ for
  $1\leq i \leq m$, $1\leq j \leq k+1$.  For each~$1\leq i \leq m$ and
  $1\leq j \leq k+1$ we take two rules~$r_i^j$, $s_i^j$ where
  $H(r_i^j)=\{a_i^j\}$, $B^-(r_i^j)=S_i \cup \{b_i^j\}$,
  $B^+(r_i^j)=S_i$; $H(s_i^j)=\{b_i^j\}$, $B^-(s_i^j)=\{a_i^j\}$,
  $B^+(s_i^j)=\emptyset$.

  We show that $\SSS$ has a hitting set of size at most~$k$ if and
  only if $P$ has a \strongBds{\CCC} of size at most~$k$.

  ($\Rightarrow$).  Let $H$ an hitting set of $\SSS$ of size at
  most~$k$. We choose an arbitrary truth assignment~$\tau \in \ta{H}$
  and show that $P_\tau \in C$. Since $H$ is a hitting set, each
  rule~$r_i^j$ will be removed when forming $P_\tau$. Hence the only
  rules left in $P_\tau$ are the rules~$s_i^j$, and so $P_\tau \in
  \DC\cap \C \subseteq \CCC$. Thus $H$ is a \strongBds{\CCC} of $P$.

  ($\Leftarrow$).  Let $X$ be a \strongBds{\CCC} of $P$ of size at
  most~$k$. We show that $H=X\cap U$ is a hitting set of $\SSS$.
  Choose $1\leq i \leq m$ and consider $S_i$.  We first consider the
  case~$\DC\subseteq \CCC$. For each~$1\leq j \leq k+1$ the
  program~$P$ contains a bad even directed cycle~$(a_i^j,b_i^j)$.  In
  order to destroy these cycles, $X$ must contain an atom from~$S_i$,
  since otherwise, $X$ would need to contain for each~$1\leq j \leq
  k+1$ at least one of the atoms from each cycle, but then
  $\Card{X}\geq k+1$, contradicting the assumption on the size of
  $X$. Hence $H$ is a hitting set of $\SSS$.  Now we consider the case
  $\C\subseteq \CCC$. For each $1\leq j \leq k+1$ the program~$P$
  contains a bad even cycle~$(a_i^j, v_{a_i^j,b_i^j}, b_i^j,
  v_{b_i^j,a_i^j})$. In order to destroy these cycles, $X$ must
  contain an atom from $S_i$, since otherwise, $X$ would need to
  contain an atom from each cycle, again a contradiction. Hence $H$ is
  a hitting set of $\SSS$. Hence the $\W[2]$\hy hardness of
  \strongBdsDet{\CCC} follows.

  In order to show that \strongBdsDet{\tautext{\CCC}} is $\W[2]$-hard
  for $\DC \subseteq \CCC$, we modify the above reduction from \HS by
  redefining the rules~$r_i^j$, $s_i^j$. We put $H(r_i^j)=\{a_i^j\}$,
  $B^-(r_i^j)=S_i \cup \{b_i^j\}$, $B^+(r_i^j)=\emptyset$;
  $H(s_i^j)=\{b_i^j\}$, $B^-(s_i^j)=\{a_i^j\}$, $B^+(s_i^j)=U$. By the
  very same argument as above we can show that $\SSS$ has a hitting
  set of size at most~$k$ if and only if $P$ has a
  \strongBds{\tautext{\CCC}} of size at most~$k$. We would like to
  state that this reduction does not work for the undirected cases as
  it yields undirected cycles~$(b^j_i,u,b^{j'}_{i'},u')$ for any $u,u'
  \in U$.

\end{proof}

For the class~$\DBEC$ we can again strengthen the result and show that
detecting a \strongBds{\DBEC}  is already $\coNP$\hy hard for
backdoor size 0; hence the problem is \coparaNP{}-hard (see
Section~\ref{sec:PC}).

\begin{THE}\label{the:paranp}
  The problem~\strongBdsDet{\tautext{\DBEC}} is \coparaNP{}-hard, and
  hence not fixed-parameter tractable unless $\P=\coNP$.
\end{THE}
\begin{proof}
  We reduce from the following problem, which is $\NP$\hy complete
  \cite{FortuneHopcroftWyllie80,LapaughPapadimitriou84},

  \problem{\textsc{Directed Path via a Node}}{a digraph~$G$ and
    $s,m,t\in V$ distinct vertices}{decide whether $G$ contains a
    directed path from~$s$ to~$t$ via~$m$}
  Let $G=(V,E)$ be a digraph and $s,m,t\in V$ distinct vertices.  We
  define a program~$P$ as follows: For each edge~$e=(v,w) \in E$ where
  $w\neq m$ we take a rule~$r_e$: $w \leftarrow v$. For each
  edge~$e=(v,m)$ we take a rule~$r_{e}$: $m \leftarrow \pnot
  v$. Finally we add the rule~$r_{s,t}$: $s \leftarrow \pnot t$. We
  observe that the dependency digraph of $P$ is exactly the digraph we
  obtain from $G$ by adding the ``reverse'' edge~$(t,s)$ (if not
  already present), and by marking $(t,s)$ and all incoming edges of
  $m$ as negative.
  
  We show that $G$ has a path from~$s$ to~$t$ via~$m$ if and only if
  $P\notin \DBEC$.  Assume $G$ has such a path. Then this path must
  contain exactly one incoming edge of $m$, and hence it contains
  exactly one negative edge. The path, together with the negative edge
  $(t,s)$, forms a directed bad even cycle of $P$, hence $P\notin
  \DBEC$.  Conversely, assume $P\notin \DBEC$. Hence the dependency
  digraph of $P$ contains a directed bad even cycle, i.e., a cycle
  that contains at least two negative edges. As it can contain at most
  one incoming edge of $m$, the cycle contains exactly one incoming
  edge of $m$ and the reverse edge~$(t,s)$. Consequently, the cycle
  induces in $G$ a directed path from~$s$ to~$t$ via~$m$.
\end{proof}  

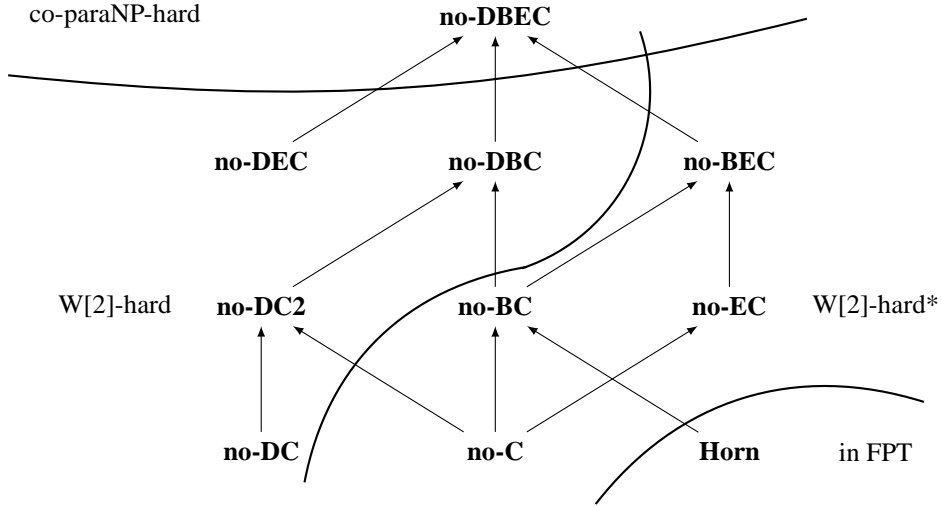
\begin{figure}
\centering
\begin{tikzpicture}[latex-,level 1/.style={sibling distance=8em},level distance=5em,font=\small]
  \node[] (dbec) {$\DBEC$}
  child { node (dec){$\DEC$}
    child{ node (dctwo) {$\DCTWO$} edge from parent[draw=none]
      child{ node (dc) {$\DC$} edge from parent[solid]}
    }
  }
  child { node (dbc){$\DBC$}
    child{ node (bc){$\BC$}
      child{ node(c){$\C$}}
      }
  }
  child { node (bec){$\BEC$}
    child{ node (ec){$\EC$}
      child{ node(horn){$\Horn$} edge from parent[transparent]}
      }
  };
  \path(dctwo) edge (c);
  \path(bc) edge (horn);
  \path(ec) edge (c);
  \path(bec) edge (bc);
  \path(dbc) edge (dctwo);

  \node[left of=dctwo,node distance=5em](xp){$\W[2]$-hard};
  \node[right of=ec,node distance=5em](wtwostar){$\W[2]$-hard\textup{*}};
  \node[above of=xp,node distance=10em](coparanp){\coparaNP-hard};
  \node[below of=wtwostar,node distance=5em](fpt){in \FPT};

  \node[below of=xp,node distance=3em,xshift=-4em](HELPER_XP_LOWER){};
  \node[above of=xp,node distance=3em,xshift=-4em](HELPER_XP_UPPER){};

  \node[right of=dec,node distance=3em,xshift=2em,yshift=2.9em](DEC2DBC){};
  \node[below right of=dc,node distance=2em](DCB){};
  \node[above right of=bc,node distance=2em](BCT){};
  \node[above right of=bc,node distance=2em,xshift=-0.85em,yshift=-0.22em](BCTTWO){};
  \node[above right of=dbc,node distance=4em,xshift=2em,yshift=2em](DBCT){};
  \path (DCB) edge[bend left=35,-,thick] (BCT);
  \path (BCTTWO) edge[bend right=45,-,thick] (DBCT);

  \node[above left of=horn,xshift=-3em,yshift=-3.1em](HORNM){};
  \node[above left of=horn,xshift=-3em,yshift=-4em](HORNMTWO){};
  \node[below of=fpt,node distance=3em,xshift=1em,yshift=0em](HORNL){};
  \node[above of=fpt,xshift=2em,yshift=-1em](HORNR){};

  \path (HORNMTWO) edge[bend left=35,-,thick] (HORNR);

  \node[right of=dbec,node distance=11em](DBEC_R){};
  \node[below of=coparanp,node distance=2em,xshift=-4em](COPARA_B){};
  \path(COPARA_B) edge[bend right=10,-,thick] (DBEC_R);

\end{tikzpicture}%
\caption{Known complexity of the problem \strongBdsDet{\CCC}. 
  (*)~When we permit tautologies in the rules.}
\label{fig:diagram-strongBds}
\end{figure}

Figure~\ref{fig:diagram-strongBds} illustrates the known complexity
results of the problem \strongBdsDet{\CCC}. An arrow from $\CCC$ to
$\CCC'$ indicates that $\CCC'$ is a proper subset of $\CCC$ and hence
the size of a smallest \strongBds{\CCC'} is at most the size of a
smallest \strongBds{\CCC}.


\subsection{Deletion Backdoor Detection}\label{sec:db-cycl}
The $\W[2]$-hardness results of Theorems~\ref{the:w2} and
\ref{the:paranp} suggest to relax the problem and to look for
\emph{deletion backdoors} instead of strong backdoors.  In view of
Lemma~\ref{lem:rule-induced} and
Observation~\ref{obs:acyclic_rule_induced}, every deletion backdoor is
also a strong backdoor for the considered acyclicity-based target
classes, hence the backdoor approach of Theorem~\ref{the:evaluation}
works.

Fortunately, the results of this section show that the relaxation
indeed gives us fixed-parameter tractability of backdoor detection for
most considered classes. Figure~\ref{fig:lattice-delbds} illustrates
these results. We obtain these results by making use of very recent
progress in fixed-parameter algorithmics on various variants of the
\emph{feedback vertex set} or the \emph{cycle transversal} problems.

Consider a graph~$G=(V,E)$ and a set~$W\subseteq V$. A cycle in $G$ is
a $W$-cycle if it contains at least one vertex from $W$. A set
$T\subseteq V$ is a \emph{$W$-cycle transversal} of $G$ if every
$W$-cycle of $G$ is also a $T$\hy cycle.  A set~$T\subseteq V$ is an
\emph{even-length $W$-cycle transversal} of $G$ if every $W$-cycle of
$G$ of even length is also a $T$\hy cycle. A $V$\hy cycle transversal
is also called a \emph{feedback vertex set}.

We give analog definitions for a digraph~$G=(V,E)$ and $W\subseteq V$.
A directed cycle in $G$ is a directed $W$\hy cycle if it contains at
least on vertex from $W$. A set~$T\subseteq V$ is a \emph{directed
  $W$-cycle transversal} of $G$ if every directed $W$\hy cycle of $G$
is also a directed $T$\hy cycle. A set~$T\subseteq V$ is an
\emph{directed even-length $W$-cycle transversal} of $G$ if every
directed $W$\hy cycle of $G$ of even length is also a directed $T$\hy
cycle. A directed $V$\hy cycle transversal is also called a
\emph{directed feedback vertex set}.






\begin{THE}\label{the:fvs}
  The problem \delBdsDet{\tautext{\CCC}} is fixed-parameter tractable
  for all~$\CCC\in \Acyc \setminus \{\DEC,\DBEC\}$.
\end{THE}
\begin{proof}
  Let $P^*$ be a the program and $k\geq 0$.  We delete from $P^*$ all
  constraints and tautological rules.  Now, the
  \delBds{\tautext{\CCC}}s of~$P^*$ are exactly the \delBds{\CCC}s of
  $P$. Hence we can focus on the latter. Let $U_p$ be the dependency
  graph and $D_p$ the dependency digraph of $P$, respectively. Next we
  consider the various target classes~$\CCC$ mentioned in the
  statement of the theorem, one by one, and show how we can decide
  whether $P$ has a \delBds{\CCC} of size at most~$k$.

  First we consider ``undirected'' target classes.  Downey and Fellows
  \cite{DowneyFellows99} have shown that finding an feedback vertex
  set of size at most~$k$ is fixed-parameter tractable. We apply their
  algorithm to the dependency graph~$U_p$. If the algorithm produces a
  feedback vertex set~$S$ of size at most~$k$, then we can form a
  \delBds{\C} of $P$ of size at most~$k$ by replacing each negative
  vertex in $S$ by one of its two neighbors, which always gives rise
  to an atom of~$P$. If $U_p$ has no feedback vertex set of size at
  most~$k$, then $P$ has no \delBds{\C} of size at most~$k$.  Hence
  \delBdsDet{\C} is fixed-parameter tractable.  Similarly,
  \delBdsDet{\BC} is fixed-parameter tractable by finding a \sfvs{W}
  of $U_p$, taking as $W$ the set of bad vertices of
  $U_p$. \citex{CyganPilipczukPilipczukWojtaszczyk11} and
  \citex{KawarabayashiKobayashi10} showed that finding a \sfvs{W} is
  fixed-parameter tractable, hence so is \delBdsDet{\BC}.

  In order to extend this approach to \delBdsDet{\EC}, we would like
  to use fixed-parameter tractability of finding an \ect{W}, which was
  established by~\citex{MisraRamanRamanujanSaurabh12} for $W=V$, and
  by \citex{KakimuraKawarabayashiKobayashi12} for general $W$.  In
  order to do this, we use the following trick of Aracena, Gajardo,
  and Montalva~\shortcite{MontalvaAracenaGajardo08}, that turns cycles
  containing an even number of bad vertices into cycles of even
  length. From $D_p$ we obtain a graph~$U_P'$ by replacing each
  negative edge~$e=(x,y)$ with three edges~$xu_e$, $u_ev_e$, and
  $v_ey$ where $u_e$ and $v_e$ are new negative vertices, and by
  replacing each remaining directed edge~$(u,v)$ with two edges~$xw_e$
  and $w_ey$ where $w_e$ is a new (non-negative) vertex. We observe
  that $U_p'$ can be seen as being obtained from $D_p$ by subdividing
  edges. Hence there is a natural 1-to-1 correspondence between cycles
  in $U_p$ and cycles in $U_p'$. Moreover, a cycle of $U_p$ containing
  an even number of negative vertices corresponds to a cycle of $U_p'$
  of even length, and a bad cycle of $U_p$ corresponds to a bad cycle
  of $U_p'$.  Thus, when we have an even cycle transversal~$S$ of
  $U_p'$, we obtain a \delBds{\EC} by replacing each negative vertex
  $v \in S$ by its non-negative neighbor. Hence \delBdsDet{\EC} is
  fixed-parameter tractable.  For \delBdsDet{\BEC} we proceed
  similarly, using a \ect{W} of $U_p'$, letting $W$ be the set of
  negative vertices of $U_p'$.

  We now proceed with the remaining ``directed'' target classes $\DC$,
  $\DCTWO$, and $\DBC$.


  Let $G=(V,E)$ be a digraph. Evidently, the directed feedback vertex
  sets of $D_p$ are exactly the \delBds{\DC}s of $P$. Hence, by using
  the fixed-parameter algorithm of \citex{ChenLiuLuOsullivanRazgon08}
  for finding directed feedback vertex sets we obtain fixed-parameter
  tractability of \delBdsDet{\DC}.
  
  To make this work for \delBdsDet{\DCTWO} we consider instead of
  $D_p$ the digraph~$D_p'$ obtained from $D_p$ by replacing each
  negative edge~$e=(u,v)$ by two (non-negative) edges~$(u,w_e)$,
  $(w_e,v)$, where $w_e$ is a new vertex. The directed cycles of $D_p$
  and $D_P'$ are in a 1-to-1 correspondence. However, directed cycles
  of length~$2$ in $D_p'$ correspond to good cycles of length~$2$ in
  $D_p$. \citex{BonsmaLokshtanov11} showed that finding a directed
  feedback vertex set that only needs to cut cycles of length at
  least~$3$ is fixed-parameter tractable. Applying this algorithm to
  $D_P'$ (and replacing each vertex~$w_e$ in a solution with one of
  its neighbors) yields fixed-parameter tractability of
  \delBdsDet{\DCTWO}.

  The approach for \delBdsDet{\DC} extends to \delBdsDet{\DBC} by
  considering directed \sfvs{W}s of the digraph~$D_p'$ obtained from
  $D_p$ using a simple construction already mentioned by
  \citex{CyganPilipczukPilipczukWojtaszczyk11} where we replace each
  negative edge~$e=(u,v)$ by two (non-negative) edges~$(u,w_e)$,
  $(w_e,v)$ and $W = \SB w_e \SM e \text{ is a negative edge}
  \SE$. The directed $W$-cycles of $D'_p$ and the directed bad cycles
  of $D_P$ are obviously in a 1-to-1 correspondence. Thus when we have
  a directed \sfvs{W}~$S$ of~$D_P'$, we obtain a \delBds{\DBC} by
  replacing each vertex~$v \in S \cap W$ by its neighbor. The
  fixed-parameter tractability of finding a directed \sfvs{W} was
  shown by \citex{ChitnisCyganHajiaghayiMarx12}.
\end{proof}

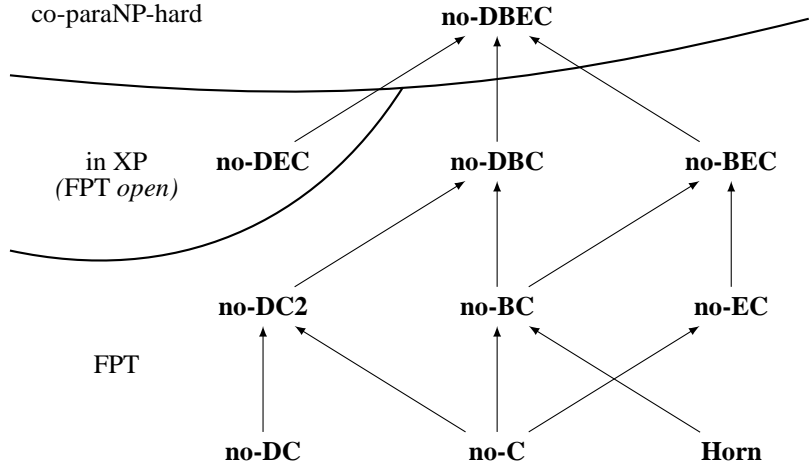
\begin{figure}
\centering
\begin{tikzpicture}[latex-,level 1/.style={sibling distance=8em},level distance=5em,font=\small]
  \node[] (dbec) {$\DBEC$}
  child { node (dec){$\DEC$}
    child{ node (dctwo) {$\DCTWO$} edge from parent[draw=none]
      child{ node (dc) {$\DC$}}}
  }
  child { node (dbc){$\DBC$}
    child{ node (bc){$\BC$}
      child{ node(c){$\C$}}
      }
  }
  child { node (bec){$\BEC$}
    child{ node (ec){$\EC$}
      child{ node(horn){$\Horn$} edge from parent[transparent]}
    }
  };
  \path(dctwo) edge (c);
  \path(bc) edge (horn);
  \path(ec) edge (c);
  \path(bec) edge (bc);
  \path(dbc) edge (dctwo);

  \node[left of=dec,node distance=5em](xp){in \XP};
  \node[below of=xp,node distance=1em](xp2){\it(\FPT open)};
  \node[above of=xp,node distance=5em](coparanp){\coparaNP-hard};
  \node[below of=xp,node distance=7em](fpt){\FPT};

  \node[below of=xp,node distance=3em,xshift=-4em](HELPER_XP_LOWER){};
  \node[above of=xp,node distance=3em,xshift=-4em](HELPER_XP_UPPER){};

  \node[right of=dec,node distance=3em,xshift=2em,yshift=2.9em](DEC2DBC){};
  \path (HELPER_XP_LOWER) edge[bend right=35,-, thick] (DEC2DBC);

  \node[right of=dbec,node distance=11em](DBEC_R){};
  \node[below of=coparanp,node distance=2em,xshift=-4em](COPARA_B){};
  \path(COPARA_B) edge[bend right=10,-, thick] (DBEC_R);

\end{tikzpicture}%
\caption{Relationship between classes of programs and known complexity
  of the problem \delBdsDet{\CCC}. An arrow from $\CCC$ to $\CCC'$
  indicates that \delBds{\CCC}s are smaller than \delBds{\CCC'}s. The
  FPT-results are established in Theorems~\ref{the:horn} and
  \ref{the:fvs}. The XP-result is established in
  Theorem~\ref{the:xp}. The \coparaNP-hardness result is established
  in Theorem~\ref{the:paranp_del}.}
\label{fig:lattice-delbds}
\end{figure}

According to Observation~\ref{obs:acyclic_rule_induced}, the classes
mentioned in Theorem~\ref{the:fvs} are hereditary. Hence using
Theorem~\ref{the:fvs} we can drop the assumption in
Theorem~\ref{the:evaluation} that the backdoor is given and obtain
directly:
\begin{THE}\label{cor:fpt}
  For all~$\CCC\in \Acyc \setminus \{\DEC,\DBEC\}$ all problems in
  $\AspFull$ are fixed-parameter tractable when parameterized by the
  size of a smallest \delBds{\tautext{\CCC}}.
\end{THE}

Let us now turn to the two classes~$\DEC$, $\DBEC$ excluded in
Theorem~\ref{the:fvs}. We cannot establish that
\delBdsDet{\tautext{\DEC}} is fixed-parameter tractable, as the
underlying even cycle transversal problem seems to be currently out of
reach to be solved. However, in Theorem~\ref{the:xp} below, we can at
least show that for every constant~$k$, we can decide in polynomial
time whether a \strongBds{\tautext{\DEC}} of size at most~$k$ exists;
thus the problem is in $\XP$. For \delBdsDet{\tautext{\DBEC}} the
situation is different: here we can rule out fixed-parameter
tractability under the complexity theoretical assumption $\P\neq\coNP$
(Theorem~\ref{the:paranp_del}).

\begin{samepage}
\begin{THE}\label{the:xp}
  The problem \delBdsDet{\tautext{\DEC}} is in \XP.
\end{THE}
\begin{proof}
  Let $P$ be a program, $n$ the input size of~$P$, and $k$ be a
  constant. W.l.o.g., we assume that $P$ has no tautological rules or
  constraints.  We are interested in a \delBds{\DEC} of $P$ of size at
  most~$k$.  We loop over all possible sets~$X\subseteq \at(P)$ of
  size at most~$k$. Since $k$ is a constant, there is a polynomial
  number~$\BigO{n^k}$ of such sets~$X$. To decide whether $X$ is a
  \delBds{\DEC} of $P$, we need to check whether $P-X \in \DEC$.
  For the membership check $P - X \in \DEC$ we have to decide whether
  $D_{P-X}$ contains a bad even cycle. We use a directed variant of
  the trick in the proof of Theorem~\ref{the:fvs} (in fact, the
  directed version is slightly simpler). Let $D_{P-X}$ be the
  dependency digraph of $P-X$. From $D_{P-X}$ we obtain a new
  digraph~$D'_{P-X}$ by subdividing every non-negative edge, i.e., we
  replace each non-negative edge~$e=(x,y)$ by two (non-negative)
  edges~$(x,u_e)$, $(u_e,y)$ where $u_e$ is a new vertex.  Obviously,
  directed even cycles in $D_{P-X}$ are in 1-to-1 correspondence with
  directed cycles of even length in $D'_{P-X}$.  Whether a digraph
  contains a directed cycle of even length can be checked in
  polynomial time by means of the following results.
  \citex{VaziraniYannakakis88} have shown that finding a cycle of even
  length in a digraph is equivalent to finding a so-called Pfaffian
  orientation of a graph. Since Robertson, Seymour, and
  Thomas~\cite{RobertsonSeymourThomas99} have shown that a Pfaffian
  orientation can be found in polynomial time, the test works in
  polynomial time.
\end{proof}
\end{samepage}

\begin{THE}\label{the:paranp_del}
  The problem \delBdsDet{\tautext{\DBEC}} is \coparaNP{}-hard, and hence
  not fixed-parameter tractable unless $\P=\coNP$
\end{THE}
\begin{proof}
  The theorem follows from the reduction in the proof of
  Theorem~\ref{the:paranp}.
\end{proof}



\newcommand{\sbd}{\parm{sb}}
\newcommand{\dbd}{\parm{db}}
\newcommand{\twinc}{\parm{treewidth}_\parm{inc}}
\newcommand{\twdep}{\parm{treewidth}_\parm{dep}}

\section{Kernelization}\label{sec:kernel} 

If we want to solve a hard problem, then in virtually every setting,
it is beneficial to first apply a polynomial preprocessing to a given
problem instance.  In particular, polynomial-time preprocessing
techniques have been developed in \ASP solving (see e.g.,
\cite{FaberLeoneMateisPfeifer99,GebserKaufmannNeumannSchaub08,GebserKaminskiKaufmannSchaub11a}). However,
polynomial-time preprocessing for \NP-hard problems has mainly been
subject of empirical studies where provable performance guarantees are
missing, mainly due to the fact that if we can show that if we can
reduce in polynomial-time a problem instance by just one bit, then by
iterating this reduction we can solve the instances in polynomial
time.  Contrastingly, the framework of parameterized complexity offers
with the notion of \emph{kernelization} a useful mathematical
framework that admits the rigorous theoretical analysis of
polynomial-time preprocessing for \NP-hard problems.
%
%
A kernelization is a polynomial-time reduction that replaces the input
by a smaller input, called a ``kernel'', whose size is bounded by some
computable function of the parameter only.  A well known result of
parameterized complexity theory is that a decidable problem is
fixed-parameter tractable if and only if it admits a
kernelization~\cite{DowneyFellowsStege99}. The result leads us to the
question of whether a problem also has a kernelization that reduces
instances to a size which is polynomially bounded by the parameter,
so-called \emph{polynomial kernels}. Indeed, many \NP-hard
optimization problems admit polynomial kernels when parameterized by
the size of the solution~\cite{Rosamond10}. In the following we
consider kernelizations for backdoor detection and backdoor evaluation
in the context of ASP. We establish that for some target classes,
backdoor detection admits a polynomial kernel.
We further provide strong theoretical evidence that for all target
classes considered, backdoor evaluation does admit a polynomial
kernel.

We will later use the following problem: \pproblem{\VC}{A
  graph~$G=(V,E)$ and an integer~$k$.}{The integer~$k$.}{Decide
  whether there is a vertex cover~$S \subseteq V$ (see
  Section~\ref{sec:horn}) of size at most~$k$ .}

Next we give a more formal definition of kernelization.  Let $L,L'
\subseteq \Sigma^* \times \Nat$ be parameterized problems. A
\emph{bi-kernelization} is a polynomial-time many-to-one reduction
from the problem~$L$ to problem~$L'$ where the size of the output is
bounded by a computable function of the parameter. That is, a
bi-kernelization is an algorithm that, given an instance~$(I,k) \in
\Sigma^* \times \Nat$ outputs for a constant~$c$ in time $O((\CCard{I}
+ k)^{d})$ a pair~$(I',k') \in \Sigma^* \times \Nat$, such that
(i)~$(I,k) \in L$ if and only if $(I',k') \in L'$ and (ii)~$\CCard{I'}
+ k' \leq g(k)$ where $g$ is an arbitrary computable function, called
the size of the kernel. If $L'=L$ then the reduction is called a
\emph{kernelization}, the reduced instance a \emph{kernel}.  If $g$ is
a polynomial then we say that $L$ admits a \emph{polynomial
  (bi-)kernel}, for instance, the problem \VC has a kernel of at
most~$2k$ vertices and thus admits a polynomial kernel
\cite{ChenKanjXia10}. $L$ is called \emph{compressible} if it admits a
polynomial bi-kernel.

\hide{A kernelization of a decidable parameterized problem~$L$
  immediately yields a fixed-parameter tractable algorithm for $L$
  since we can apply the kernelization and then use any brute force
  method and we still have a fixed-parameter tractable algorithm. On
  the other hand a fixed-parameter tractable algorithm gives a
  kernelization. Suppose that an algorithm~$A$ is fixed-parameter
  tractable, i.e., the algorithm~$A$ solves instances~$(I,k)$ in
  time~$\BigO{f(k) \CCard{I}^c}$ for a computable function~$f$ and a
  constant~$c$. Let $(I,k)$ be an instance. If $\CCard{I} \leq f(k)$
  then we output $(I,k)$. If $\CCard{I} \geq f(k)$ then we apply the
  algorithm~$A$ which decides in time $\BigO{\CCard{I}^{c + 1}}$
  whether $(I,k) \in L$. If $(I,k) \in L$ we output an arbitrary
  instance $(I',1)\in L$ ; if $(I,k) \not \in L$ we output an
  arbitrary instance $(I', 1) \not \in L$ of the problem. In total the
  algorithm runs in time at most $\BigO{\CCard{I}^{c+ 1}}$ which is in
  fact polynomial. The connection was observed by Downey, Fellows, and
  Stege~\cite{DowneyFellowsStege99} and is stated in the following
  proposition: }

The following proposition states the connection between
fixed-parameter tractable problems and kernels, as observed by Downey,
Fellows, and Stege~\cite{DowneyFellowsStege99}:

\begin{PRO}[\citey{DowneyFellowsStege99}, \citey{FlumGrohe06}]\label{pro:kernel}
  A parameterized problem is fixed-parameter tractable if and only if
  it is decidable and has a kernelization.
\end{PRO}

Thus, our fixed-parameter tractability results of
Theorems~\ref{the:evaluation}, \ref{the:horn}, and \ref{the:fvs}
immediately provide that the mentioned problems admit a kernelization.
In the following we investigate whether these problems admit
polynomial kernels.


\subsection{Backdoor Detection}

The first result of this section is quite positive.

\begin{THE}\label{the:kern-horn-c}
  For $\CCC \in \{$\Horn, $\C\}$ the problem
  \delBdsDet{\tautext{\CCC}} admits a polynomial kernel.
  For~$\CCC=\Horn$ the kernel has a linear number of atoms,
  for~$\CCC=\C$ the kernel has a quadratic number of atoms.
\end{THE}
\begin{proof}
  First consider the case $\CCC=\Horn$. Let $(P,k)$ be an instance of
  \delBdsDet{\tautext{\Horn}}. We obtain in polynomial time the
  negation dependency graph~$N_P$ of $P$ and consider $(N_P,k)$ as an
  instance of \VC.  We use the kernelization algorithm of
  \citex{ChenKanjXia10} for \VC and reduce in polynomial time
  $(N_p,k)$ to a \VC instance~$(G,k')$ with at most~$2k$ many
  vertices. It remains to translate $G$ into a program~$P'$ where
  $N_{P'}=G$ by taking for every edge~$xy\in E(G)$ a rule~$x
  \leftarrow \neg y$. Now $(P',k')$ is a polynomial kernel with a
  linear number of atoms.

  Second consider the case~$\CCC=\C$. Let $(P,k)$ be an instance of
  \delBdsDet{\tautext{\C}}.  We obtain in polynomial time the
  dependency graph~$U_P$ of $P$ and consider $(U_P,k)$ as an instance
  of \FVS (see Section~\ref{sec:db-cycl}).  We use the kernelization
  algorithm of \citex{Thomasse09} for \FVS and reduce in polynomial
  time~$(U_p,k)$ to a \FVS instance~$(G',k')$ with at most~$4 k^2$
  vertices. As above we translate $G$ into a program~$P'$ where
  $U_{P'}=G$ by taking for every edge~$xy\in E(G)$ a rule~$x
  \leftarrow \neg y$. Now $(P',k')$ is a polynomial kernel with a
  quadratic number of atoms.
\end{proof}

Similar to the construction in the proof of Theorem~\ref{the:fvs} we
can reduce for the remaining classes the backdoor detection problem to
variants of feedback vertex set. However, for the other variants of
feedback vertex set no polynomial kernels are known.

We would like to point out that the kernels obtained in the proof of
Theorem~\ref{the:kern-horn-c} are equivalent to the input program with
respect to the existence of a backdoor, but not with respect to the
decision of reasoning problems. In the next subsection we consider
kernels with respect to reasoning problems.

\subsection{Backdoor Evaluation}

Next we consider the problems in $\AspReason$. We will see that
neither of them admit a polynomial kernel when parameterized by the
size of a \strongBds{\CCC} for the considered target classes, subject
to standard complexity theoretical assumptions.

Our superpolynomial lower bounds for kernel size are based on a result
by \citex{FortnowSanthanam11} regarding satisfiability parameterized
by the number of variables.

\pproblem{\pname{Sat[Vars]}}{A CNF formula~$F$.}{The number~$k$ of
  variables of $F$.}{Decide whether $F$ is satisfiable.}


\begin{PRO}[\citey{FortnowSanthanam11}]\label{pro:poly_kernels}
  If \pname{Sat[Vars]} is compressible, then the Polynomial Hierarchy
  collapses to its third level.
\end{PRO}

The following theorem extends a result for normal programs
\cite{Szeider11}. We need a different line of argument, as the
technique used in \cite{Szeider11} only applies to problems in $\NP$
or $\coNP$.


\begin{THE}\label{the:non_poly_kernels}
  Let $\CCC\in \Acyc \cup \{\Horn \}$.  Then no problem in
  \AspReason admits a polynomial kernel when
  parameterized by the size of a smallest \strongBds{\CCC} or
  \delBds{\CCC}, unless the Polynomial Hierarchy collapses to its
  third level.
\end{THE}
\begin{proof}
  We show that the existence of a polynomial kernel for any of the
  above problems implies that \pname{Sat[Vars]} is compressible, and
  hence by Proposition~\ref{pro:poly_kernels} the collapse would
  follow.

  First consider the problem \pname{Consistency}.  From a CNF
  formula~$F$ with $k$ variables we use a reduction of
  Niemela~\cite{Niemela99} and construct a program~$P_1$ as follows:
  Among the atoms of our program~$P_1$ will be two atoms~$a_{x}$ and
  $a_{\bar x}$ for each variable~$x\in \var(F)$, an atom~$b_C$ for
  each clause~$C \in F$. We add the rules~$a_{\bar x} \leftarrow \pnot
  a_x$ and $a_x \leftarrow \pnot a_{\bar x}$ for each variable~$x \in
  \var(F)$. For each clause~$C \in F$ we add for each~$x \in C$ the
  rule~$b_C \leftarrow a_{x}$ and for each~$\neg x \in C$ the
  rule~$b_C \leftarrow a_{\bar x}$. Additionally, for each clause~$C
  \in F$ we add the rule~$ \leftarrow \pnot b_C$.  Now it is easy to
  see that the formula~$F$ is satisfiable if and only if the
  program~$P_1$ has an answer set.  We observe that $X = \SB a_x \SM
  x\in \var(F)\SE$ ($X = \SB a_x, a_{\bar x} \SM x\in \var(F)\SE$) is
  a smallest deletion (and smallest strong) $\CCC$\hy backdoor of
  $P_1$ for each $\CCC\in \Acyc$ ($\CCC = \Horn$). Hence $(P_1,k)$,
  $(P_1,2k)$ respectively, is an instance of \pname{Consistency},
  parameterized by the size of a smallest \strongBds{\CCC} or
  \delBds{\CCC}, and if this problem would admit a polynomial kernel,
  this would imply that \pname{Sat[Vars]} is compressible.

  For the problem \pname{Brave Reasoning} we modify the reduction from
  above. We create a program~$P_2$ that consists of all atoms and
  rules from $P_1$. Additionally, the program~$P_2$ contains an
  atom~$t$ and a rule~$r$ with $H(r)=\{t\}$, $B^+(r) = \emptyset$, and
  $B^-(r)=\emptyset$.  We observe that the formula~$F$ is satisfiable
  if and only if the atom~$t$ is contained in some answer set
  of~$P_2$. Since $X$ is still a backdoor of size~$k$ ($2k$), and a
  polynomial kernel for \pname{Brave Reasoning}, again it would yield
  that \pname{Sat[Vars]} is compressible.

  Let \pname{UnSat[Vars]} denote the problem defined exactly like
  \pname{Sat[Vars]}, just with yes and no answers swapped.  A
  bi-kernelization for \pname{UnSat[Vars]} is also a bi-kernelization
  for \pname{Sat[Vars]} (with yes and no answers swapped). Hence
  \pname{Sat[Vars]} is compressible if and only if \pname{UnSat[Vars]}
  is compressible.  An argument dual to the previous one for
  \pname{Brave Reasoning} shows that a polynomial kernel for
  \pname{Skeptical Reasoning}, parameterized by backdoor size, would
  yield that \pname{UnSat[Vars]} is compressible, which, as argued
  above, would yield that \pname{Sat[Vars]} is compressible.
\end{proof}

\section{Lifting Parameters}\label{sec:lifting}
In this section we will introduce a general method to lift
ASP-parameters that are defined for normal programs to disjunctive
programs. Thereby we extend several algorithms that have been
suggested for normal programs to disjunctive programs.  The lifting
method also gives us an alternative approach to obtain some results of
Section~\ref{sec:acyclic-classes}. Throughout this section we assume
for simplicity that the input program $P$ has no tautological rules or
constraints, all considerations can be easily extended to the general
case.
  
%
The following definition allows us to speak about parameters for
programs in  a  more abstract way.

\begin{DEF}\label{def:asp-parm}
  An \emph{ASP-parameter} is a function~$p$ that assigns every
  program~$P$ some non-negative integer~$p(P)$ such that $p(P') \leq
  p(P)$ holds whenever $P'$ is obtained from~$P$ by deleting rules or
  deleting atoms from rules. If $p$ is only defined for normal
  programs, we call it a \emph{normal ASP-parameter}.  For an ASP
  parameter~$p$ we write $\down{p}$ to denote the normal ASP-parameter
  obtained by restricting $p$ to normal programs.
\end{DEF}


We impose the condition $p(P') \leq p(P)$ for technical reasons. This
is not a limitation, as most natural parameters satisfy this condition.

There are natural ASP-parameters associated with backdoors:
\begin{DEF}
  For a class~$\CCC$ of programs and a program~$P$ let
  $\sbd_\CCC(P)$ denote the size of a smallest \strongBds{\CCC} and
  $\dbd_\CCC(P)$ denote the size of a smallest \delBds{\CCC} of $P$.
\end{DEF}

We will ``lift'' normal ASP-parameters to general disjunctive programs
as follows.

\begin{DEF}\label{def:lift}
  For a normal ASP-parameter $p$ we define the ASP-parameter
  $\lift{p}$ by setting, for each disjunctive program~$P$,
  $\lift{p}(P)$ as the minimum $\Card{X}+p(P-X)$ over all
  inclusion-minimal \delBds{\Normal}s~$X$ of $P$.
\end{DEF}

The next lemma shows that this definition is compatible with
\delBds{\CCC}s if $\CCC\subseteq \Normal$.  In other words, if $\CCC$
is a class of normal programs, then we can divide the task of finding
a \delBds{\CCC} for a program~$P$ into two parts: (i)~to find a
\delBds{\Normal}~$X$, and (ii)~to find a \delBds{\CCC} of $P-X$.

\begin{LEM}[Self Lifting]\label{lem:selflift}
  Let $\CCC$ be a class of normal programs. Then
  $\dbd_\CCC=\lift{(\down{ \dbd_\CCC })}$.
\end{LEM}
\begin{proof}
  Let $\CCC$ be a class of normal programs, and $P$ a program.
  Let $X$ be a \delBds{\CCC} of $P$ of size~$\dbd_\CCC(P)$. Thus $P-X
  \in \CCC \subseteq \Normal$. Hence $X$ is a \delBds{\Normal} of $P$.
  We select an inclusion-minimal subset~$X'$ of $X$ that is still a
  \delBds{\Normal} of $P$ (say, by starting with $X'=X$, and then
  looping over all the elements $x$ of $X$, and if $X'-x$ is still a
  \delBds{\CCC}, then setting $X':=X'-x$.)  What we end up with is an
  inclusion-minimal \delBds{\Normal}~$X'$ of $P$ of size at
  most~$\dbd_\CCC(P)$. Let $P'=P-X'$ and $X''=X-X'$. $P'$ is a normal
  program. Since $P'-X''=P-X$, it follows that $P'-X''\in \CCC$. Hence
  $X''$ is a \delBds{\CCC} of $P$.  Thus, by the definition of
  $\lift{db_\CCC}$, we have that $\lift{db_\CCC}(P) \leq \Card{X'} +
  \Card{X''} = \dbd_\CCC(P)$.

  Conversely, let $\lift{\dbd_\CCC}(P)=k$.  Hence there is a
  \delBds{\Normal}~$X'$ of $P$ such that
  $\Card{X'}+\dbd_\CCC(P-X')=k$. Let $P'=P-X'$. Since
  $\dbd_\CCC(P')\leq k-\Card{X'}$, it follows that $P'$ has a
  \delBds{\CCC}~$X''$ of size~$k-\Card{X'}$. We put $X=X' \cup X''$
  and observe that $P-X=P'-X'' \in \CCC$. Hence $X$ is a \delBds{\CCC}
  of $P$. Since $\dbd_\CCC(P)\leq \Card{X}\leq \Card{X'} +
  \Card{X''}\leq \lift{\dbd_\CCC}(P)\leq k$, the lemma follows.
\end{proof}

\begin{EX}\label{ex:lifting_parameters}
  Consider the program~$P$ of Example~\ref{ex:running} and let
  $\parm{\#neg}(P)$ denote the number of atoms that appear in negative
  rule bodies of a normal program (we will discuss this parameter in
  more detail in Section~\ref{sec:horn-based-parameters}). 

  We determine $\lift{\parm{\#neg}}(P)=2$ by the following
  observations: The set~$X_1 = \{c\}$ is a \delBds{\Normal} of $P$
  since $P-X_1=\SB d \leftarrow a, e \rsep a \leftarrow d, \pnot b
  \rsep e \leftarrow f \rsep f \leftarrow d \rsep \leftarrow f, e,
  \pnot b \rsep \leftarrow d \rsep b \rsep f \SE$ belongs to the class
  \Normal. The set~$X_2=\{e\}$ is a \delBds{\Normal} of~$P$ since
  $P-X_2 = \SB d \leftarrow a \rsep a \leftarrow d, \pnot b, \pnot c
  \rsep c \leftarrow f \rsep f \leftarrow d, c \rsep c \leftarrow f,
  \pnot b \rsep c \leftarrow d \rsep b \leftarrow c \rsep f \SE$
  belongs to the class~\Normal.  Observe that $X_1$ and $X_2$ are the
  only inclusion-minimal \delBds{\Normal}s of the program~$P$. We
  obtain $\lift{\parm{\#neg}}(P,X_1)=2$ since
  $\parm{\#neg}(P-X_1)=1$. We have $\lift{\parm{\#neg}}(P,X_2)=3$
  since $\parm{\#neg}(P-X_2)=2$. Thus $\lift{\parm{\#neg}}(P)=2$.
\end{EX}

For every ASP-parameter~$p$ we consider the following problem.
\pproblem{$\Bound[p]$}{a program~$P$ and an integer~$k$.}{the
  integer~$k$.}{Decide whether $p(P) \leq k$ holds.}

For a problem~$L\in\, \AspFull$ and an ASP-parameter~$p$ we write
$L[p]$ to denote the problem~$L$ parameterized by~$p$. That is, the
instance of the problem is augmented with an integer~$k$, the
parameter, and for the input program~$P$ it holds that $p(P)\leq k$.
Moreover, we write $\pnormal{L[p]}$ to denote the restriction of
$L[p]$ where instances are restricted to normal programs~$P$.
Similarly, $\pnormal{\Bound[p]}$ is the restriction of $\Bound[p]$ to
normal programs. For all the problems~$\pnormal{L[p]}$, $p$ only needs
to be a normal ASP-parameter.
%
%
%

Next we state the main result of this section.

\begin{THE}[Lifting]\label{the:lift}
  Let $p$ be a normal ASP-parameter such that $\pnormal{\Bound[p]}$
  and $\pnormal{\AspEnum[p]}$ are fixed-parameter tractable. Then for
  all~$L \in\,\AspFull$ the problem~$L[\lift{p}]$ is fixed-parameter
  tractable.
\end{THE}

We need some definitions and auxiliary results to establish the
theorem.


\begin{DEF}
  Let $P$ be a disjunctive program. The \emph{head dependency
    graph}~$H_P$ of the program~$P$ is the graph which has as vertices
  the atoms of~$P$ and an edge between any two distinct atoms if they
  appear together in the head of a rule of~$P$.
\end{DEF}

\begin{LEM}\label{lem:hornbdsdet-vc}
  Let $P$ be a disjunctive program.  A set~$X\subseteq \at(P)$ is a
  \delBds{\Normal} of~$P$ if and only if $X$ is a vertex cover of the
  head dependency graph~$H_P$.
\end{LEM}
\begin{proof}
  Let $X$ be a \delBds{\Normal} of $P$. Consider an edge~$uv$ of
  $H_P$, then there is a rule~$r\in P$ with $u,v\in H(r)$ and $u\neq
  v$.  Since $X$ is a \delBds{\Normal} of $P$, we have $\{u,v\}\cap X
  \neq \emptyset$. We conclude that $X$ is a vertex cover of $H_P$.

  Conversely, assume that $X$ is a vertex cover of $H_P$. We show that
  $X$ is a \delBds{\Normal} of $P$. Assume to the contrary, that $P-X$
  contains a rule~$r$ whose head contains two variables~$u,
  v$. Consequently, there is an edge~$uv$ in $H_P$ such that $\{u,
  v\}\cap X =\emptyset$, contradicting the assumption that $X$ is a
  vertex cover.
\end{proof}

\begin{LEM}\label{lem:list-vc}
  Let $G=(V,E)$ be a graph, $n=\Card{E}$, and let $k$ be a
  non-negative integer. $G$ has at most~$2^k$ inclusion-minimal vertex
  covers of size at most~$k$, and we can list all such vertex covers
  in time~$\BigO{2^k n}$.
\end{LEM}
\begin{proof}
  We build a binary search tree~$T$ of depth at most~$k$ where each
  node~$t$ of $T$ is labeled with a set~$S_t$. We build the tree
  recursively, starting with the root~$r$ with label~$S_r=\emptyset$.
  If $S_t$ is a vertex cover of $G$ we stop the current branch, and
  $t$ becomes a ``success'' leaf of $T$.  If $t$ is of distance~$k$
  from the root and $S_t$ is not a vertex cover of $G$, then we also
  stop the current branch, and $t$ becomes a ``failure'' leaf of $T$.
  It remains to consider the case where $S_t$ is not a vertex cover
  and $t$ is of distance smaller than $k$ from the root. We pick an
  edge~$uv\in E$ such that $u,v\notin S_t$ (such edge exists,
  otherwise $S_t$ would be a vertex cover) and add two
  children~$t,t''$ to $t$ with labels~$S_{t'}=S_t\cup \{u\}$ and
  $S_{t'}=S_t\cup \{v\}$. It is easy to see that for every
  inclusion-minimal vertex cover~$S$ of $G$ of size at most~$k$ there
  is a success leaf~$t$ with $S_t=S$. Since $T$ has $\BigO{2^k}$
  nodes, the lemma follows.
\end{proof}

From Lemmas~\ref{lem:hornbdsdet-vc}  and \ref{lem:list-vc} we
immediately obtain the next result.

\begin{PRO}\label{pro:norm-list}
  Every disjunctive program of input size~$n$ has at most~$2^k$
  inclusion-minimal \delBds{\Normal}s of size at most~$k$, and all
  these backdoors can be enumerated in time~$\BigO{2^kn}$.
\end{PRO}

\begin{proof}[Proof of Proposition~\ref{the:lift}]
  Let $p$ be a normal ASP-parameter such that $\pnormal{\AspEnum[p]}$
  and $\pnormal{\Bound[p]}$ are fixed-parameter tractable.  Let $P$ be
  a given disjunctive program of input size~$n$ and $k$ an integer
  such that $\lift{p}(P)\leq k$. In the following, when we say some
  task is solvable in ``\emph{fpt-time}'', we mean that it can be
  solved in time~$O(f(k)\,n^c)$ for some computable function~$f$ and a
  constant~$c$.

  By Proposition~\ref{pro:norm-list} we can enumerate all
  inclusion-minimal \delBds{\Normal}s of size at most~$k$ in
  time~$\BigO{2^kn}$. We can check whether $p(P-X)\leq k-\Card{X}$ for
  each such backdoor~$X$ in fpt-time since $\pnormal{\Bound[p]}$ is
  fixed-parameter tractable by assumption.  Since $\lift{p}(P)\leq k$,
  there is at least one such set~$X$ where the check succeeds.

  We pick such set~$X$ and compute $\stableset(P,X)$ in fpt-time. That
  this is possible can be seen as follows. Obviously, for each truth
  assignment~$\tau\in \ta{X}$ the program~$P_\tau$ is normal since
  $P-X$ is normal, and clearly $p(P_\tau)\leq p(P-X)\leq k$ by
  Definition~\ref{def:asp-parm}.  We can compute $\stableset(P_\tau)$
  in fpt-time since $\pnormal{\AspEnum[p]}$ is fixed-parameter
  tractable by assumption. Since there are at most~$2^k$ such
  programs~$P_\tau$, we can indeed compute the set~$\stableset(P,X)$
  in fpt-time.

  By Lemma~\ref{lem:subset} we have $\stableset(P)\subseteq
  \stableset(P,X)$, hence it remains to check for each~$M\in
  \stableset(P,X)$ whether it gives rise to an answer set of $P$.
  Since $X$ is a \delBds{\Normal} of $P$, and since one easily
  observes that $\Normal$ is hereditary, it follows by
  Lemma~\ref{lem:rule-induced} that $X$ is a \strongBds{\Normal} of
  $P$. Hence Lemma~\ref{lem:mincheck} applies, and we can decide
  whether $M \in \stableset(P)$ in time~$\BigO{2^k n}$. Hence we can
  determine the set~$\stableset(P)$ in fpt-time. Once we know the
  set~$\stableset(P)$, we obtain for every problem~$L \in\, \AspFull$
  that $L[\lift{p}]$ is fixed-parameter tractable.
\end{proof}



\begin{EX}
  Consider the program~$P$ of Example~\ref{ex:running} with the 
  the \delBds{\Normal}~$X_1 = \{c\}$ from
  Example~\ref{ex:lifting_parameters}.  We want to enumerate all
  answer sets of $P$.  We obtain with Ben-Eliyahu's algorithm
  \cite{Ben-Eliyahu96} the sets~$\stableset(P_{\bar c})=\{ \{e,f\}\}$
  and $\stableset(P_{c})=\{\{b,f\}\}$, and so we get the set
  $\stableset(P,X)=\{ \{e,f\}, \{b,c,f\}\}$ of answer set
  candidates. By means of the algorithm that solves the problem
  \BdCheck{\CCC} (see Lemma~\ref{lem:mincheck}) we observe that
  $\{b,c,f\}$ is the only answer set of~$P$.
\end{EX}

\section{Theoretical Comparison of \ASP-Parameters}
\label{sec:comparision}

In this section we compare several \ASP-parameters in terms of their
\emph{generality}. Let $p$ and $q$ be \ASP-parameters.  We say that
$p$ \emph{dominates} $q$ (in symbols $p \preceq q$) if there is a
function $f$ such that $p(P)\leq f(q(P))$ holds for all
programs~$P$. The parameter~$p$ \emph{strictly dominates} $q$ (in
symbols $p \prec q$) if $p \preceq q$ but not $q \preceq p$, and $p$
and $q$ are \emph{incomparable} (in symbols $p \bowtie q$) if neither
$p \preceq q$ nor $q \preceq p$. For simplicity we only consider
programs that contain no tautological rules. It is easy to adapt the
results to the more general case where tautological rules are allowed.

\begin{OBS}
  Let $p$ and $q$ be ASP-parameters and $L\in \AspFull$. If $p$
  dominates $q$ and $L[p]\in \FPT$, then also $L[q]\in \FPT$.
\end{OBS}

\begin{OBS}\label{obs:compare-lift}
  Let $p$ and $q$ be normal ASP-parameters and $\circ \in
  \{\preceq,\prec,\bowtie\}$. Then $p\circ q$ if and only if $\lift{p}
  \circ \lift{q}$.
\end{OBS}

In the following we define various auxiliary programs which we will
use as examples, to separate the parameters from each other and
establish incomparability or strictness results.  

\begin{EX}\label{ex:comp_progs}
  Let $m$ and $n$ be some large integers. We define the following
  programs:
\begin{align*}
  P^n_1  := {} & \SB a \leftarrow \pnot b_1, \ldots, \pnot b_n \SE,\\
  P^n_2  := {}& \SB a_i \leftarrow \pnot b \SM 1 \leq i \leq n \SE,\displaybreak[1]\\
  P^n_{31} :={} & \SB b_i \leftarrow \pnot a \rsep a \leftarrow \pnot
  b_i \SM 1 \leq i \leq n \SE,\displaybreak[1]\\
  P^n_{32} := {}& \SB b_i \leftarrow a \rsep a \leftarrow b_i
  \SM 1 \leq i \leq n \SE,\displaybreak[1]\\
  P^n_{33} := {}& P^n_{31} \cup \SB a \leftarrow d_1 \rsep d_i
  \leftarrow d_{i+1} \SM 1 \leq i \leq n \SE \cup \SB c_i \leftarrow
  b_i \rsep d_i \leftarrow c_i
  \rsep d_i \leftarrow b_i\SM 1 \leq i \leq n \SE,\\
  P^n_{34} := {}& P^n_{33} \cup \SB d_i \leftarrow \pnot b_i
  \SM 1 \leq i \leq n \SE,\displaybreak[1]\\
  P^n_{35} := {}& P^n_{35} \setminus \SB a \leftarrow \pnot b_i\rsep
  b_i \leftarrow \pnot a \SM 1 \leq i \leq n \SE \cup \SB a_0
  \leftarrow \pnot a\SE \cup
  \SB b_i \leftarrow a_0 \SM 1 \leq i \leq n \SE ,\displaybreak[1]\\
  P^n_{4} := {}& \SB c_i \leftarrow \pnot a_i\rsep c_i \leftarrow
  b_i\rsep b_i \leftarrow \pnot a_i \rsep a_i \leftarrow e_i \rsep e_i
  \leftarrow d_i \rsep  d_i \leftarrow a_i \SM 1 \leq i \leq n \SE,\displaybreak[1]\\
  P^n_{51} := {}& \SB b_i \leftarrow \pnot a_i \rsep a_i \leftarrow
  \pnot b_i \SM 1 \leq i \leq n \SE,\\
  P^n_{52} := {}& \SB b_i \leftarrow a_i \rsep
  a_i \leftarrow \pnot b_i \SM 1 \leq i \leq n \SE,\\
  P^n_{53} := {}& \SB b_i \leftarrow a_i \rsep
  a_i \leftarrow b_i \SM 1 \leq i \leq n \SE,\displaybreak[1]\\
  P^n_6 := {}& \SB a \leftarrow b_{1}, \ldots, b_{n},c_i \SM 1
  \leq i \leq n \SE,\\
  P^n_7 := {}& \SB a_j
  \leftarrow a_i \SM 1 \leq i < j \leq n \SE\displaybreak[1],\\
  P^{m,n}_8 :={}& \SB b \leftarrow a_1, \ldots, a_m \SE \cup \SB c_i
  \leftarrow c_{i+1} \SM 1 \leq i \leq n \SE \cup \SB
  c_{n+1} \leftarrow c_1 \SE,\\
  P^n_9 := {}& \SB a_2 \leftarrow \pnot a_1 \rsep a_3 \leftarrow \pnot
  a_2 \SE \cup \SB b_i \leftarrow a_3 \rsep a_1 \leftarrow b_i \SM 1
  \leq i \leq n
  \SE, \text{ and }\\
  P^n_{11} := {}& \SB a_i \por b \leftarrow c \rsep c \leftarrow b
  \rsep b \leftarrow a_i \SM 1 \leq i \leq n \SE.
\end{align*}
\end{EX}

\subsection{ASP-Parameters Based on Backdoor Size}
Backdoor-based ASP-parameters can be related to each other in terms of
their underlying base classes. We just need a very weak assumption
which holds for all target classes considered in the paper:

\begin{PRO}\label{obs:bdsubclasses}
  Let $\CCC,\CCC'$ be classes of programs that are closed under the
  union of disjoint copies\footnote{A class~$\CCC$ of programs is
    \emph{closed under the union of disjoint copies} if for every $P
    \in \CCC$ also $P \cup P' \in \CCC$ where $P'$ is a copy of $P$
    with $\at(P) \cap \at(P') = \emptyset$.}. If $\CCC \subseteq
  \CCC'$ then $\dbd_{\CCC'} \preceq \dbd_{\CCC}$ and $\sbd_{\CCC'}
  \preceq \sbd_{\CCC}$, even $\dbd_{\CCC'}(P)\leq \dbd_{\CCC}(P)$ and
  $\sbd_{\CCC'}(P) \leq \sbd_{\CCC}(P)$ for every program~$P$. If
  $\CCC'\setminus \CCC$ contains a program with at least one atom,
  then $\CCC\subseteq \CCC'$ implies $\dbd_{\CCC'} \prec \dbd_{\CCC}$
  and $\sbd_{\CCC'} \prec \sbd_{\CCC}$.
\end{PRO}
\begin{proof}
  The first statement is obvious. For the second statement, let $P \in
  \CCC'\setminus \CCC$ with $\Card{at(P)}\geq 1$.  We construct the
  program~$P^n$ consisting of $n$ disjoint copies of $P$ and observe
  that $P^n \in \CCC'$ but $\dbd_{\CCC}(P^n),\sbd_{\CCC}(P^n) \geq n$.
\end{proof}

Hence the relationships between target classes as stated in
Observation~\ref{obs:lattice} carry over to the corresponding
backdoor-based ASP-parameters that is, if $\CCC \subseteq \CCC'$ then
a smallest \strongdelBds{\CCC'} is at most the size of a smallest
\strongdelBds{\CCC}.

Accoring to Lemma~\ref{lem:rule-induced} every \delBds{\CCC} is a
\strongBds{\CCC} only if $\CCC$ is hereditary, hence it also holds for
smallest backdoors and we immediately get from the definitions:
\begin{OBS}\label{obs:ri_classes}
  If $\CCC$ is hereditary, then $\sbd_\CCC$ dominates $\dbd_\CCC$.
\end{OBS}

According to Lemma~\ref{lem:Horn-strong-deletion} every
\strongBds{\Horn} of a program is a \delBds{\Horn} and vice versa and
we observe:
\begin{OBS}
  $\sbd_\Horn=\dbd_\Horn$.
\end{OBS}

\begin{OBS}\label{ex:bdclasses}
  We make the following observations about programs from
  Example~\ref{ex:comp_progs}.
\begin{enumerate}
\item Consider program~$P^n_{31}$ and $P^n_{32}$ and let $P\in
  \{P^n_{31}, P^n_{32}\}$. Since $P - \{a\}$ is Horn and contains no
  cycle and no directed cycle, we obtain $\dbd_\Horn(P) \leq 1$,
  $\dbd_\C(P) \leq 1$, and $\dbd_\DC(P) \leq 1$. According to
  Observation~\ref{obs:bdsubclasses} we have $\dbd_\CCC (P^n_{31})
  \leq 1$ and $\dbd_\CCC (P^n_{32}) \leq 1$ where $\CCC \in \{\Horn\}
  \cup \Acyc $.
\item Consider program~$P^n_{33}$. Since $P^n_{33}-\{a\}$ is Horn and
  contains no directed cycle and no bad cycle, we obtain
  $\dbd_\Horn(P^n_{33})=0$, $\dbd_\DC(P^n_{33})\leq 1$, and
  $\dbd_\BC(P^n_{33}) \leq 1$. According to
  Observation~\ref{obs:bdsubclasses} we have $\dbd_\CCC (P^n_{33})
  \leq 1$ where $\CCC \in \{\Horn, \BC, \BEC \} \cup \DAcyc$.
\item Consider program~$P^n_{34}$. Since $P^n_{34}-\{a\}$ contains no
  even cycle, $\dbd_\EC(P^n_{34}) \leq 1$.
\item Consider program~$P^n_4$. The negation dependency graph of
  $P^n_{4}$ contains $2n$ disjoint paths~$a_ib_i$ and $a_ic_i$, thus
  smallest \delBds{\Horn} are of size at least~$n$. $P^n_{4}$ contains
  $n$ disjoint bad cycles, $n$ directed cycles of length at least~$3$,
  and $n$ directed even cycles. Hence smallest \delBds{\CCC}s are of
  size at least~$n$ and thus $\dbd_{\CCC}(P^n_{4}) \geq n$ where $\CCC
  \in \{\Horn$, \C, \BC, \DC, \DCTWO, \EC, \BEC, $\DEC\}$.
\item Consider program~$P^n_{51}$. The negation dependency graph of
  $P^n_{51}$ contains $n$ disjoint paths and thus
  $\dbd_\Horn(P^n_{51}) = n$. $P^n_{51}$ contains $n$ disjoint
  directed bad even cycles and thus $\dbd_\DBEC(P^n_{51}) =
  n$. According to Observation~\ref{obs:bdsubclasses} we obtain
  $\dbd_\CCC(P^n_{51}) \geq n$ where $\CCC \in \{\Horn\} \cup \Acyc$.
\item Consider program~$P^n_{52}$. Since $P^n_{52}$ contains $n$
  disjoint directed bad cycles, $\dbd_\DBC(P^n_{52}) = n$.
\item Consider program~$P^n_{53}$. Since $P^n_{53}$ contains $n$
  disjoint even cycles, $n$ disjoint directed cycles of length at
  least~$3$, and $n$ disjoint directed even cycles, we obtain by
  Observation~\ref{obs:bdsubclasses} $\dbd_\CCC(P^n_{53}) \geq n$
  where $\CCC \in \{\C,$ \DC, \DCTWO, \EC, $\DEC\}$.
\item Consider program~$P^n_6$. Since $P^n_6$ is Horn and contains no
  cycle and no directed cycle, $\dbd_\Horn(P^n_6) = \dbd_\C(P^n_6) =
  \dbd_\DC(P^n_6) = 0$. According to
  Observation~\ref{obs:bdsubclasses} we have $\dbd_\CCC(P^n_6) = 0$
  where $\CCC \in \{\Horn\} \cup \Acyc$.
\item Consider program~$P^n_7$. Since $P^n_7$ is Horn and contains no
  bad cycle and no directed cycle, $\dbd_\Horn(P^n_7) =
  \dbd_\BC(P^n_7) = \dbd_\DC(P^n_7) = 0$. According to
  Observation~\ref{obs:bdsubclasses} we have $\dbd_\CCC(P^n_7) = 0$
  where $\CCC \in \{\Horn, \BC, \BEC\} \cup \DAcyc$.
\item Consider program~$P^{m,n}_8$. Since $P^{m,n}_8$ is Horn and
  $P^{m,n}_8 - \{c_1\}$ contains no cycle and no directed cycle, we
  obtain $\dbd_\Horn(P^{m,n}_8) = 0$, $\dbd_\C(P^{m,n}_8) \leq 1$,
  $\dbd_\DC(P^{m,n}_8) \leq 1$. According to
  Observation~\ref{obs:bdsubclasses} we have $\dbd_\CCC(P^{m,n}_8) \leq
  1$ where $\CCC \in \{\Horn\} \cup \Acyc$.
\item Consider program~$P^n_9$. Since $P^n_9 - \{a_2\}$ is Horn and
  $P^n_9 - \{a_1\}$ contains no cycle and no directed cycle, we have
  $\dbd_\Horn(P^n_9) \leq 1$, $\dbd_\C(P^n_9) \leq 1$, and
  $\dbd_\DC(P^n_9) \leq 1$. According to
  Observation~\ref{obs:bdsubclasses} we have $\dbd_\CCC(P^n_9) \leq 1$
  where $\CCC \in \{\Horn\} \cup \Acyc$.
\item Consider program~$P^n_{11}$ and let $X:=\{b\}$. Since $P^n_{11}-
  X$ is normal, $X$ is a \delBds{\Normal} of $P^n_{11}$.  Observe that
  $X$ is the only inclusion-minimal \delBds{\Normal} of $P^n_{11}$.
  Since $P^n_{11} - X$ is Horn, $\dbd_\Horn(P^n_{11} - X) = 0$. Since
  $P^n_{11} - X$ contains no cycle, no even cycle, and no directed
  cycle, $\dbd_\CCC(P^n_{11} - X) = 0$ where $\CCC \in
  \Acyc$. Consequently, $\lift{\dbd_\CCC}(P^n_{11}) = \Card{X} +
  \dbd_\CCC(P^n_{11} - X) = 1$ where $\CCC \in \{\Horn\} \cup \Acyc$.
\end{enumerate}
\end{OBS}

%
%
%

\begin{figure}[ht]
\centering
\begin{tikzpicture}[latex-,level 1/.style={sibling distance=12em},level distance=6em,font=\small]
  \node[](root){} 
    child[xshift=-5em]{ node(head) {$\parm{\#headCycles}^{\dagger}$} edge from parent[draw=none] 
      child{ node(even) {$\dbd_\DBEC$} 
      child{ node(dbec) {$\parm{\#badEvenCycles}$}
        edge from parent node[left] {\ref{pro:zhao}}
      }
        child{node(dbc){$\dbd_\DBC$}
          child{ node(horn){$\dbd_\Horn$}
            child{ node(neg){$\parm{\#neg}$}
              edge from parent node[left] {\ref{lem:del_neg}}
            }
            edge from parent node[left] {\ref{obs:bdsubclasses}}
          }
          child{ node(lstr){$\parm{lstr}$}
            child{ node(nonH){$\parm{\#non-Horn}$}
            edge from parent node[right] {\ref{pro:lstr}}
            }
            edge from parent node[right] {\ref{pro:lstr}}
          }
          child{ node(wfw){$\parm{wfw}$}
            child{ node(c){$\dbd_{\C}$}
              edge from parent node[left] {\ref{pro:del_wfw}}
            }
            edge from parent node[right] {\ref{pro:del_wfw}}
          }
          edge from parent node[right] {\ref{obs:bdsubclasses}}
        }
      edge from parent node[left] {\ref{pro:hcf1}}
      }
      child{ node(posc) {$\parm{\#posCycles}$}
        edge from parent node[right] {\ref{pro:hcf1}}
      }
    }
    child{node(empty){} edge from parent[draw=none]
    child{ node(deptw){$\parm{deptw}^{\ddagger}$} edge from parent[draw=none] 
        child{ node(inctw){$\parm{inctw}$}
            child{ node(cluster){$\parm{cluster}$}
              child{ node(cyclecut){$\parm{cyclecut}$}
                edge from parent node[right] {\ref{pro:cluster}}
              }
              edge from parent node[right] {\ref{pro:inctw}}
            }
            edge from parent node[right] {\ref{pro:deptw}}
          }
        }}
        ;
        \path(deptw) edge node[left] {\ref{pro:hcf1}} (c);
        \path(lstr) edge node[near start, yshift=1em] {\ref{pro:lstr}} (neg);
        \path(horn) edge node [near start,yshift=1em] {\ref{lem:del_neg}} (nonH);
      \end{tikzpicture}
      \caption{Domination Lattice (relationship between normal
        \ASP-parameters): An arrow from $p$ to $p'$ indicates that
        $p'$ strictly dominates $p$. ${}^\dagger$:~\parm{\#headCycles}
        is not strictly more general when we apply lifting
        (Observation~\ref{obs:headCycles}).
        ${}^\ddagger$:~$\parm{deptw}$ does not yield tractability
        (Proposition~\ref{pro:deptw}). A label~$i$ of an edge
        indicates that Proposition~$i$ establishes the result.}
\label{fig:diagram}
\end{figure}
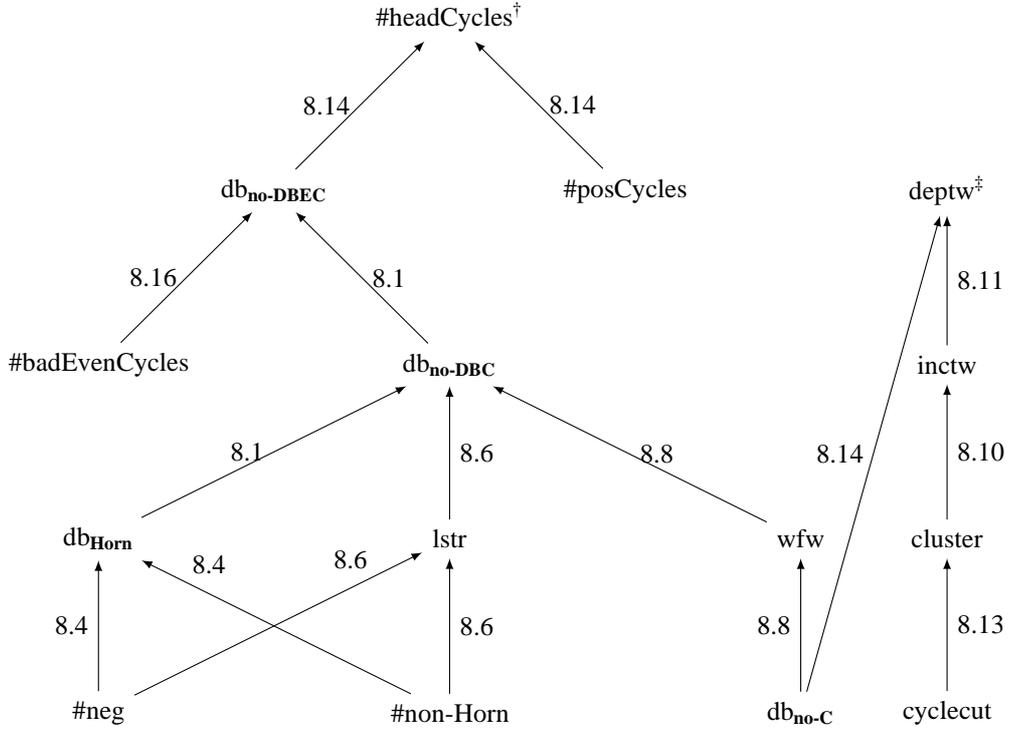

\subsection{\ASP-Parameters Based on the Distance from Horn}\label{sec:horn-based-parameters}
Our backdoor-based ASP-parameter~$\dbd_\Horn$ can be considered as a
parameter that measures the distance of a program from being a Horn
program. In the literature some normal ASP-parameters have been
proposed, that also can be considered as distance measures from
Horn. In this section we compare them with $\dbd_\Horn$.  Since the
ASP-parameters considered in the literature are normal, we compare the
parameters for normal programs only. However, in view of
Observation~\ref{obs:compare-lift} the results also hold for the
lifted parameters to disjunctive programs.

\begin{DEF}[\citex{Ben-Eliyahu96}]
  Let $P$ be a normal program. Then 
  \begin{align*}
    \parm{\#neg}(P):={}&\Card{\SB a\in
      \at(P)\SM a\in B^-(r) \text{ for some rule }r\in P \SE},\\
    \parm{\#non-Horn}(P):={}&\Card{\SB r\in P \SM r \text{  is not Horn
      }\SE}.
  \end{align*}
\end{DEF}

\begin{PRO}[\citex{Ben-Eliyahu96}]
  For each~$L\in \AspFull$, $\pnormal{L[\parm{\#neg}]} \in \FPT$ and
  $\pnormal{L[\parm{\#non-Horn}]}\in \FPT$.
\end{PRO}

Since $\pnormal{\Bound[p]}$ for $p \in
\{\parm{\#neg}, \parm{\#non-Horn}\}$ is clearly solvable in polynomial
time and thus fixed-parameter tractable, we can use the Lifting
Theorem (Theorem~\ref{the:lift}) to obtain the following result.

\begin{COR}
  For each~$L\in \AspFull$, $L[\lift{\parm{\#neg}}] \in \FPT$ and
  $L[\lift{\parm{\#non-Horn}}] \in \FPT$.
\end{COR}

\begin{OBS}\label{ex:num-neg}
  We make the following observations about programs from
  Example~\ref{ex:comp_progs}.
%
\begin{enumerate}
\item Consider program~$P^n_1$ which contains $n$ atoms that occur in
  $B^-(r)$ for some rule~$r \in P$ and exactly one non-Horn rule, so
  $\parm{\#neg}(P^n_1) = n$ and $\parm{\#non-Horn}(P^n_1) = 1$.
\item Consider program~$P^n_2$ which contains only the atom~$b$ that
  occurs in $B^-(r)$ for some rule~$r \in P^n_2$ and $n$ non-Horn
  rules, so $\parm{\#neg}(P^n_2) = 1$ and $\parm{\#non-Horn} ( P^n_2)
  = n$.
\item Consider program~$P^n_{31}$ which contains for $1 \leq i \leq n$
  the atoms~$a$, $b_i$ that occur in $B^-(r)$ for some rule~$r \in
  P^n_{31}$ and the non-Horn rules~$b_i \leftarrow \pnot a$ and $a
  \leftarrow \pnot b_i$, hence $\parm{\#neg} (P^n_{31}) = n+1$ and
  $\parm{\#non-Horn}(P^n_{31}) = 2n$.
\item Consider program~$P^n_{32}$ which is Horn. Thus
  $\parm{\#neg}(P^n_{32}) = \parm{\#non-Horn}(P^n_{32}) = 0$.
\item Consider program~$P^n_{35}$ which contains only the atom $a$
  that occurs in $B^-(r)$ for some rule~$r \in P^n_{35}$ and exactly
  one non-Horn rule, so $\parm{\#neg}(P^n_{35})
  = \parm{\#non-Horn}(P^n_{35}) = 1$.
\item Consider program~$P^n_4$ which contains for $1 \leq i \leq n$
  the atoms~$a_i$ that occur in $B^-(r)$ for some rule~$r \in P^n_4$
  and the non-Horn rules~$b_i \leftarrow \pnot a_i$ and $c_i
  \leftarrow \pnot a_i$, thus $\parm{\#neg}(P^n_4) = n$ and
  $\parm{\#non-Horn}(P^n_4) = 2n$.
\item Consider program~$P^n_{51}$ which contains for $1 \leq i \leq n$
  the atoms~$a_i$ and $b_i$ that occur in $B^-(r)$ for some rule~$r
  \in P$ and the non-Horn rules~$b_i \leftarrow \pnot a_i$ and $a_i
  \leftarrow \pnot b_i$, hence $\parm{\#neg} (P^n_{51})
  = \parm{\#non-Horn}(P^n_{51}) = 2n$.
\item Consider the program~$P^n_{52}$ which contains the atoms~$b_i$
  that occur in $B^-(r)$ for some rule~$r \in P^n_{52}$ and the
  non-Horn rules~$a_i \leftarrow \pnot b_i$, hence
  $\parm{\#neg}(P^n_{52}) = \parm{\#non-Horn}(P^n_{52}) = n$.
\item Consider programs~$P^n_{53}$, $P^n_6$, $P^n_7$, and $P^{m,n}_8$
  which are Horn. Thus $\parm{\#neg}(P^n_{53})
  = \parm{\#non-Horn}(P^n_{53}) = \parm{\#neg}(P^n_{6})
  = \parm{\#non-Horn}(P^n_{6}) = \parm{\#neg}(P^n_{7})
  = \parm{\#non-Horn}(P^n_{7}) = \parm{\#neg}(P^{m,n}_{8})
  = \parm{\#non-Horn}(P^{m,n}_{8}) =0$.
\item Consider the program~$P^n_9$ which contains only the atoms~$a_1$
  and $a_2$ that occur in $B^-(r)$ for some rule~$r \in P^n_9$ and
  only the non-Horn rules~$a_2 \leftarrow \pnot a_1$ and $a_3
  \leftarrow \pnot a_2$, hence $\parm{\#neg}(P^n_9)
  = \parm{\#non-Horn}(P^n_9) = 2$.
\item Consider the program~$P^n_{11}$. The set~$X := \{b\}$ is the
  only inclusion-minimal \delBds{\Normal} of $P^n_{11}$. Since
  $P^n_{11} - X$ is Horn, we have $\parm{\#neg}(P^n_{11} - X)
  = \parm{\#non-Horn}(P^n_{11} - X) = 0$. Thus
  $\lift{\parm{\#neg}}(P^n_{11}) = \Card{X} + \parm{\#neg}(P^n_{11} -
  X) = 1$ and $\lift{\parm{\#non-Horn}}(P^n_{11}) = \Card{X}
  + \parm{\#non-Horn}(P^n_{11} - X) = 1$.
\end{enumerate}
\end{OBS}


\begin{PRO}\label{lem:del_neg}
  $\parm{\#neg}$ and $\parm{\#non-Horn}$ are incomparable.
\end{PRO}
\begin{proof}
  The proposition directly follows from considering $P^n_1$ and
  $P^n_2$ where $\parm{\#neg}(P^n_1) = n$ and
  $\parm{\#non-Horn}(P^n_1) = 1$; and $\parm{\#neg}(P^n_2) = 1$ and
  $\parm{\#non-Horn} ( P^n_2) = n$ by Observation~\ref{ex:num-neg}.
\end{proof}

However, it is easy to see that $\dbd_\Horn$ dominates both parameters.

\begin{PRO}\label{lem:del_neg}
  $\dbd_{\Horn}$ strictly dominates $\parm{\#neg}$ and
  $\parm{\#non-Horn}$. $\dbd_\CCC$ and $\parm{\#neg}$; and $\dbd_\CCC$
  and $\parm{\#non-Horn}$ are incomparable where~$\CCC \in \{\C$, \DC,
  \DCTWO, \EC, $\DEC\}$.
\end{PRO}
\begin{proof}
  For a normal program~$P$ define the sets~$B^-(P)=\SB a\in \at(P)\SM
  a\in B^-(r) \text{ for some rule }r\in P \SE$ and $H(P)=\SB a \in
  H(r) \SM r \in P,\, r \text{ is not Horn}\SE$. We observe that
  $B^-(P)$ and $H(P)$ are \delBds{\Horn}s of $P$, hence
  $\dbd_{\Horn}(P) \leq \parm{\#neg}(P)$ and $\dbd_{\Horn}(P)
  \leq \parm{\#non-Horn}(P)$.
  To show that $\dbd_{\Horn}$ strictly dominates the two parameters,
  consider $P^n_{31}$ where $\dbd_\Horn(P^n_{31})\leq 1$, but
  $\parm{\#neg}(P^n_{31}) = n+1$ and $\parm{\#non-Horn} (P^n_{31}) =
  2n$ by Observations~\ref{ex:bdclasses} and \ref{ex:num-neg}.

  The second statement follows from considering the
  programs~$P^n_{31}$ and $P^n_{53}$ where $\dbd_\CCC(P^n_{31}) \leq
  1$ and $p(P^n_{31}) \geq n +1 $; and $\dbd_\CCC(P^n_{53}) \geq n$
  and $p(P^n_{53}) = 0$ for $\CCC \in \{\C$, \DC, \DCTWO, \EC,
  $\DEC\}$ and $p \in \{\parm{\#neg}, \parm{\#non-Horn}\}$ by
  Observations~\ref{ex:bdclasses} and \ref{ex:num-neg}. Hence
  $\dbd_\CCC \bowtie \parm{\#neg}$ and $\dbd_\CCC
  \bowtie \parm{\#non-Horn}$ for $\CCC \in \{\C$, \DC, \DCTWO, \EC,
  $\DEC\}$.
\end{proof}

\subsection{\ASP-Parameters Based on the Distance from Being
  Stratified}
\citex{Ben-Eliyahu96} and \citex{GottlobScarcelloSideri02} have
considered \ASP-parameters that measure in a certain sense how far
away a program is from being stratified. In this section we will
investigate how these parameters fit into our landscape of
\ASP-parameters. Similar to the last section the parameters have been
considered for normal programs only, hence we compare the parameters
for normal programs only. Again, in view of
Observation~\ref{obs:compare-lift} the results also hold for the
lifted parameters to disjunctive programs.

Recall from Section~\ref{sec:graphs} that $\SCC(G)$ denotes the
partition of the vertex set of a digraph into strongly connected
components.

\begin{DEF}[\citex{Ben-Eliyahu96}]
  Let $P$ be a normal program, $D_P$ its dependency digraph, and $A
  \subseteq \at(P)$.  $\restrictBE{P}{A}$ denotes the program obtained
  from $P$ by (i)~deleting all rules~$r$ in the program~$P$ where
  $H(r) \cap A = \emptyset$ and (ii)~removing from the bodies of the
  remaining rules all literals $\neg a$ with $a \notin A$. Then
  \begin{align*}
    \parm{lstr}(P):={}&\sum_{C\in \SCC(D_P)}
    \min\{\parm{\#neg}(\restrictBE{P}{C}), 
    \parm{\#non-Horn}(\restrictBE{P}{C})\}.
  \end{align*}
  $\parm{lstr}(P)$ is called the \emph{level of stratifiability} of
  $P$.
\end{DEF}


\begin{PRO}[\citex{Ben-Eliyahu96}]
  For each~$L\in \AspFull$, $\pnormal{L[\parm{lstr}]}\in \FPT$.
\end{PRO}

Since $\pnormal{\Bound[\parm{lstr}]}$ and $\pnormal{
  \Bound[\parm{lstr}] }$ are clearly solvable in polynomial time and
thus fixed-parameter tractable, we can use the Lifting Theorem
(Theorem~\ref{the:lift}) to obtain the following result.

\begin{COR}
  For each~$L\in \AspFull$, $L[\lift{\parm{lstr}}]\in \FPT$.
\end{COR}

\begin{OBS}\label{ex:lstr}
  We make the following observations about programs from
  Example~\ref{ex:comp_progs}.
\begin{enumerate}
\item Consider program~$P^n_{31}$ and let $P:=P^n_{31}$. The
  partition~$\SCC(D_P)$ contains only the set~$C := \at(P)$ and thus
  $\restrictBE{P}{C} = P$. By Observation~\ref{ex:num-neg}
  $\parm{\#neg}(P) = n+1$ and $\parm{\#non-Horn}(P) = 2n$ and hence
  $\parm{lstr}(P^n_{31}) = n+1$.
\item Consider program~$P^n_{32}$ and let $P:=P^n_{32}$. The
  partition~$\SCC( D_P )$ contains only the set~$C:=\at(P)$ and
  $\restrictBE{P}{C} = P$. Since $\parm{\#neg}(P) = 0$ by
  Observation~\ref{ex:num-neg}, we have $\parm{lstr}(P^n_{32}) = 0$.
\item Consider program~$P^n_{35}$ and let $P:=P^n_{35}$. The
  partition~$\SCC( D_P)$ contains only the set~$C := \at(P)$. Thus $P
  = \restrictBE{P}{C}$. Since
  $\parm{\#neg}(P^n_{35}) = 1$ by Observation~\ref{ex:num-neg}, we
  conclude $\parm{lstr}(P^n_{35}) \leq 1$.
\item Consider program~$P^n_{4}$ and let $P:=P^n_4$. We have $\SCC(
  D_{P} )$ contains exactly the sets~$A_i:=\{a_i, e_i, d_i \}$,
  $B_i:=\{b_i\}$, and $C_i:= \{c_i\}$ where $ 1 \leq i \leq n$. Hence
  $\restrictBE{P}{A_i} = \SB a_i \leftarrow e_i \rsep e_i \leftarrow
  d_i \rsep d_i \leftarrow a_i \SE$ and $\restrictBE{P}{B_i} = \SB b_i
  \SE$ and $\restrictBE{P}{C_i} = \SB c_i \rsep c_i \leftarrow b_i
  \SE$. Since $\parm{\#neg} (\restrictBE{P}{C}) = 0$ for every~$C \in
  \SCC(D_{P})$, we have $\parm{lstr} (P^n_4) = 0$.
\item Consider program~$P^n_{51}$ and $P^n_{52}$ and let $P \in
  \{P^n_{51}, P^n_{52}\}$. The partition~$\SCC( D_{P} )$ contains
  exactly the sets~$C_i := \{a_i, b_i \}$ where $1 \leq i \leq n$ and
  hence $\restrictBE{P}{C_i} = \SB b_i \leftarrow \pnot a_i \rsep a_i
  \leftarrow \pnot b_i \SE$, respectively $\restrictBE{P}{C_i} = \{
  b_i \leftarrow a_i \rsep a_i \leftarrow \pnot b_i \SM 1 \leq i \leq
  n \}$. Since $\parm{\#neg}(\restrictBE{P}{C_i})
  = \parm{\#non-Horn}(\restrictBE{P}{C_i}) = 2$, respectively
  $\parm{\#neg}(\restrictBE{P}{C_i})
  = \parm{\#non-Horn}(\restrictBE{P}{C_i}) = 1$, and there are $n$
  components we obtain $\parm{lstr}(P^n_{51}) = 2n$ and
  $\parm{lstr}(P^n_{52}) = n$.
\item Consider program~$P^n_{53}$ and let $P:=P^n_{53}$. The
  partition~$\SCC(D_P)$ contains only the set~$C:=\at(P)$ and
  $\restrictBE{P}{C}=P$. Since $\parm{\#neg}(P) = 0$ by
  Observation~\ref{ex:num-neg}, we have $\parm{lstr}(P^n_{53}) = 0$.
\item Consider program~$P^n_6$ and let $P:= P^n_6$. The
  partition~$\SCC(D_P)$ contains exactly the sets~$A:=\{a\}$,
  $B_i:=\{b_i\}$, and $C_i:=\{c_i\}$ where $1 \leq i \leq n$. Hence
  $\restrictBE{P}{A} = \SB a \leftarrow b_1, \ldots, b_n, c_i \SM 1
  \leq i \leq n \SE$ and $\restrictBE{P}{B_i} = \restrictBE{P}{C_i} =
  \emptyset$ where $1 \leq i \leq n$. Since $\parm{\#neg}
  (\restrictBE{P}{C}) = 0$ for every~$C \in \SCC(D_{P})$, we have
  $\parm{lstr} (P^n_6) = 0$.
\item Consider program~$P^n_7$ and let $P:=P^n_7$. The
  partition~$\SCC(D_P)$ contains exactly the sets~$C_i:=\{a_i\}$ where
  $1 \leq i \leq n$. Thus $\restrictBE{P}{C_i}= \SB a_i \leftarrow a_j
  \SM 1 \leq j < i\SE$. Hence $\parm{\#neg}(\restrictBE{P}{C_i}) = 0$
  for every $C \in \SCC(D_{P})$. We obtain $\parm{lstr}(P^n_7) = 0$.
\item Consider program~$P^{m,n}_8$ and let $P:= P^{m,n}_8$. The
  partition~$\SCC( D_{P} )$ contains exactly the sets~$A_i:=\{a_i\}$
  where $1 \leq i \leq m$, $B:=\{b\}$, and $C := \SB c_i \SM 1 \leq i
  \leq n \SE$. Hence $\restrictBE{P}{A_i} = \emptyset$ where $1 \leq i
  \leq m$, $\restrictBE{P}{B} = \{ b \leftarrow a_1, \ldots, a_m \}$,
  and $\restrictBE{P}{C} = \SB c_i \leftarrow c_{i+1} \SM 1 \leq i
  \leq n \SE \cup \SB c_{n+1} \leftarrow c_1 \SE$. Since
  $\parm{\#neg}(\restrictBE{P}{A_i}) = 0$ where $1 \leq i \leq m$,
  $\parm{\#neg} (\restrictBE{P}{B}) = 0$, and $\parm{\#neg}
  (\restrictBE{P}{C}) = 0$, we obtain $\parm{lstr}(P^{m,n}_8) = 0$.
\item Consider program~$P^n_9$ and let $P:=P^n_9$. The
  partition~$\SCC( D_{P} )$ contains only the set~$C:=\at(P)$. Hence
  $\restrictBE{P}{C} = P$. Since $\parm{\#neg}(P) = \parm{\#non-Horn}
  (P) = 2$, we have $\parm{lstr}(P^n_9) = 2$.
\item Consider program~$P^n_{11}$. The set~$X = \{b\}$ is the
  inclusion-minimal \delBds{\Normal} of $P^n_{11}$ by
  Observation~\ref{ex:bdclasses}. We have $P:=P^n_{11} - X = \SB a_i
  \leftarrow c \rsep c \rsep \leftarrow a_i \SM 1 \leq i \leq n
  \SE$. The partition~$\SCC( D_{P} )$ contains the sets~$A_i :=
  \{a_i\}$ where $1 \leq i \leq n$ and $C:=\{c\}$. Hence
  $\restrictBE{P}{A_i} = \SB a_i \leftarrow c \SE$ where $1 \leq i
  \leq n$ and $\restrictBE{P}{C} = \SB c \SE$. Since $\parm{\#neg}
  (\restrictBE{P}{C}) = 0$ for every~$C \in \SCC(D_{P})$, we obtain
  $\parm{lstr} (P) = 0$. Consequently, $\lift{\parm{lstr}}(P^n_{11}) =
  \Card{X} + \parm{lstr}(P^n_{11} - X) = 1$.
\end{enumerate}
\end{OBS}

\begin{OBS}\label{obs:num-neg_lstr}
  $\parm{lstr}$ strictly dominates $\parm{\#neg}$ and
  $\parm{\#non-Horn}$.
\end{OBS}
\begin{proof}
  Let $P$ be a normal program.  We first show that $\sum_{C \in
    \SCC(D_P)} \parm{\#neg}(\restrictBE{P}{C})\leq\parm{\#neg}(P)$. Define
  the set~$B^-(P)=\SB a\in \at(P)\SM a\in B^-(r) \text{ for some rule
  }r\in P \SE$. By definition $B^-(\restrictBE{P}{A})\subseteq B^-(P)$
  for some~$A \subseteq \at(P)$, thus $\bigcup_{C \in \SCC(D_P)}
  B^-(\restrictBE{P}{C}) \subseteq B^-(P)$. Let $C, C' \in \SCC(D_P)$
  and $C \neq C'$. By definition of a strongly connected component we
  have $C \cap C' = \emptyset$ and by definition we have that
  $B^-(\restrictBE{P}{C}) \subseteq C$ and $B^-(\restrictBE{P}{C'})
  \subseteq C'$. Hence $B^-(\restrictBE{P}{C}) \cap
  B^-(\restrictBE{P}{C'}) = \emptyset$.  Consequently $\sum_{C \in
    \SCC(D_P) } \parm{\#neg} (\restrictBE{P}{C})
  \leq \parm{\#neg}(P)$. A similar argument shows that $\sum_{C \in
    \SCC(D_P)} \parm{\#non-Horn} (\restrictBE{P}{C}) \leq \parm{
    \#non-Horn }(P)$.  Since $\parm{lstr}(P) =
  \sum_{C\in\SCC(D_P)}\min\{\parm{\#neg}(
  \restrictBE{P}{C}), \parm{\#non-Horn}(\restrictBE{P}{C})\}$, we have
  $\parm{lstr}(P)\leq \parm{\#neg}(P)$ and
  $\parm{lstr}(P)\leq \parm{\#non-Horn}(P)$.
  To show that $\parm{lstr}$ strictly dominates the two parameters,
  consider program~$P^n_4$ where $\parm{lstr}(P^n_4) = 0$, but
  $\parm{\#neg}(P^n_4) \geq n$ and $\parm{ \#non-Horn }(P^n_4) \geq
  2n$ by Observations~\ref{ex:num-neg} and \ref{ex:lstr}. Hence the
  observation is true.
\end{proof}

\begin{PRO}\label{pro:lstr}
  $\dbd_\DBC$ strictly dominates $\parm{lstr}$. Moreover, $\dbd_\CCC$
  and $\parm{lstr}$ are incomparable for the remaining target classes
  namely~$\CCC\in \Acyc\setminus\{\DBC,\DBEC\} \cup \{\Horn\}$.
\end{PRO}
\begin{proof}
  We first show that $\dbd_\DBC$ dominates $\parm{lstr}$. For a normal
  program~$P$ define the sets~$B^-(P)=\SB a \in \at(P) \SM a\in B^-(r)
  \text{ for some rule } r \in P \SE$ and $H(P)=\SB a \in H(r) \SM r
  \in P, r \text{ is not Horn} \SE$. Let $C \in \SCC(D_P)$, we define
  \[
  X_C = 
  \begin{cases} 
    B^-(\restrictBE{P}{C}), & \text{if }\Card{B^-(\restrictBE{P}{C})}
    \leq \Card{H(\restrictBE{P}{C})};\\
    H( \restrictBE{P}{C}), & \text{ otherwise.}
  \end{cases}
  \]
  and $X = \SB X_C \SM C \in \SCC(D_P)\SE$. We show that $X$ is a
  \delBds{\DBC} of $P$.
  By definition for every directed bad cycle~$c = (x_1, \ldots, x_l)$
  of $D_P$ the atom~$x_i \in C'$ where $1 \leq i \leq l$ and $C' \in
  \SCC(D_P)$ (all vertices of $c$ belong to the same strongly
  connected component). Moreover, by definition we have for every
  negative edge~$x_i,x_j \in D_P$ of the dependency digraph~$D_P$ a
  corresponding rule~$r \in P$ such that $x_j \in H(r)$ and $x_i \in
  B^-(r)$. Since $X_C$ consists of either $B^-(\restrictBE{P}{C})$ or
  $H(\restrictBE{P}{C})$, at least one of the atoms~$x_i,x_j$ belongs
  to $X_C$. Thus for every directed bad cycle~$c$ of the program~$P$
  at least one atom of the cycle belongs to $X$.  Hence $P - X \in
  \DBC$ and $X$ is a \delBds{\DBC} of $P$. We obtain $\dbd_\DBC(P)
  \leq \parm{lstr}(P)$.
  To show that $\dbd_{\DBC}$ strictly dominates $\parm{lstr}$,
  consider program~$P^{n}_{31}$ where $\dbd_\DBC(P^n_{31}) \leq 1$ and
  $\parm{lstr}(P^n_{31}) = n+1$ by Observations~\ref{ex:bdclasses} and
  \ref{ex:lstr}. Hence $\dbd_\DBC \prec \parm{lstr}$.

  Then we show that the parameters~$\dbd_\CCC$ and $\parm{lstr}$ are
  incomparable.
  Consider the programs~$P^{n}_{3}$ and $P^n_{4}$ where
  $\dbd_\CCC(P^n_{31}) \leq 1$ and $\parm{lstr}(P^n_{31})=n+1$; and
  $\parm{lstr}(P^n_{4}) = 0$ and $\dbd_\CCC(P^n_{4}) \geq n$ for $\CCC
  \in \{\Horn$, \C, \BC, \DC, \DCTWO, \EC, \BEC, $\DEC\}$ by
  Observations~\ref{ex:bdclasses} and \ref{ex:lstr}.
  We conclude $\dbd_\CCC \bowtie \parm{lstr}$.
\end{proof}


\begin{DEF}[\citex{GottlobScarcelloSideri02}]
  Let $P$ be a normal program, $D_P$ its dependency digraph, $U_P$ its
  dependency graph, and $A \subseteq \at(P)$.  $\restrictGSS{P}{A}$
  denotes the program obtained from $\restrictBE{P}{A}$ by removing
  from the bodies of every rule all literals~$a$ with $a \notin
  A$. $\Good{P}$ denotes the maximal set~$W \subseteq \at(P)$ such
  that there is no bad $W$\hy cycle in the dependency graph~$U_P$, in
  other words the set of all atoms that do not lie on a bad cycle of
  $P$. Then
  \begin{align*}
    \parm{fw}(P):={}&\min \SB \Card{S} \SM S \text{ is a feedback 
      vertex set of } U_P \SE \text{ and }\\
    \parm{wfw}(P):={}&\parm{fw}(\SB r \in \restrictGSS{P}{C} -
    \Good{\restrictGSS{P}{C}} \SM C \in \SCC(D_P), \restrictGSS{P}{C}
    \notin \DBC \SE).
  \end{align*}
  $\parm{fw}(P)$ is called the \emph{feedback-width} of $P$, and
  $\parm{wfw}(P)$ is called the \emph{weak-feedback-width} of $P$.
\end{DEF}

\begin{OBS}\label{obs:fw}
  Let $P$ be a normal program and $D_P$ its dependency digraph. Then
  $\parm{fw(P)}=\dbd_\C(P)$ and hence
  \begin{align*}
    \parm{wfw(P)}={}&\dbd_\C(\SB r \in \restrictGSS{P}{C} - \Good{
      \restrictGSS{P}{C} } \SM C \in \SCC(D_P), \restrictGSS{P}{C}
    \notin \DBC \SE).
  \end{align*}
\end{OBS}


\begin{PRO}[\citex{GottlobScarcelloSideri02}]
  For each~$L\in \AspFull$, $\pnormal{L[\parm{fw}]}\in \FPT$ and
  $\pnormal{L[\parm{wfw}]}\in \FPT$.
\end{PRO}

Since $\pnormal{\Bound[\parm{fw}]}$ and $\pnormal{ \Bound[\parm{wfw}]
}$ is fixed-parameter tractable, we can use the Lifting Theorem
(Theorem~\ref{the:lift}) to obtain the following result.

\begin{COR}
  For each~$L\in \AspFull$, $L[\lift{\parm{fw}}]\in \FPT$ and
  $L[\lift{\parm{wfw}}]\in \FPT$.
\end{COR}

\begin{OBS}\label{ex:wfw}
  We make the following observations about programs from
  Example~\ref{ex:comp_progs}.
\begin{enumerate}
\item Consider the program~$P^n_{31}$ and define $P:= P^n_{31}$. The
  partition~$\SCC(D_{P})$ contains only the set~$C := \at(P)$. For
  every atom~$a \in C$ the program~$P$ contains a bad $\{a\}$\hy cycle
  and thus $\Good{\restrictGSS{P}{C}} = \emptyset$. Consequently,
  $\restrictGSS{P}{C} - \Good{\restrictGSS{P}{C}} = \restrictGSS{P}{C}
  = P$. As $P \notin \DBC$, $\SB r \in \restrictGSS{P}{C} -
  \Good{\restrictGSS{P}{C}}, C \in \SCC(D_P), \restrictGSS{P}{C}
  \notin \DBC \SE = P$. We have $\dbd_\C(P)=1$ by
  Observation~\ref{ex:bdclasses} and according to
  Observation~\ref{obs:fw} we obtain $\parm{wfw}(P^n_{31})=1$.
\item Consider program~$P^n_{32}$ and let $P:=P^n_{32}$. The
  partition~$\SCC( D_{P} )$ contains only the set~$C := \at(P)$,
  $\restrictGSS{P}{C} = P$. For every atom~$a \in C$ we have
  $\restrictGSS{P}{C} \in \DBC$ and thus $\SB r \in \restrictGSS{P}{C}
  - \Good{\restrictGSS{P}{C}} \SM C \in \SCC(D_P), \restrictGSS{P}{C}
  \notin \DBC \SE = \emptyset$. Consequently, $\parm{wfw}(P^n_{32}) =
  0$.
\item Consider the programs~$P^n_{33}$, $P^n_{34}$, and $P^n_{35}$ and
  let $P \in \{P^n_{33}, P^n_{34}, P^n_{35}\}$. We first observe that
  the dependency digraph of~$P$ contains only one strongly connected
  component. Hence the partition~$\SCC( D_{P})$ contains only the
  set~$C := \at(P)$. For every atom~$a \in C$ program~$P$ contains a
  bad $\{a\}$\hy cycle and thus $\Good{\restrictGSS{P}{C}} =
  \emptyset$. Consequently, $\restrictGSS{P}{C} -
  \Good{\restrictGSS{P}{C}} = \restrictGSS{P}{C} = P$. Since $P \notin
  \DBC$, we obtain $\SB r \in \restrictGSS{P}{C} - \Good{
    \restrictGSS{P}{C}}, C \in \SCC(D_P), \restrictGSS{P}{C} \notin
  \DBC \SE = P$. We have $\dbd_\C(P) = n$ since $P$ contains $n$
  disjoint $\{b_i\}$-cycles. According to Observation~\ref{obs:fw} we
  conclude $\parm{wfw}(P^n_{33}) = \parm{wfw}(P^n_{34})
  = \parm{wfw}(P^n_{35}) = n$.
\item Consider program~$P^n_{4}$ and let $P:=P^n_4$. The partition
  $\SCC( D_{P} )$ contains exactly the sets $A_i:= \{a_i, d_i, e_i
  \}$, $B_i:=\{b_i\}$, and $C_i:= \{c_i\}$ where $1 \leq i \leq
  n$. Hence $\restrictGSS{P}{A_i} = \SB a_i \leftarrow e_i \rsep e_i
  \leftarrow d_i \rsep d_i \leftarrow a_i \SE$, $\restrictGSS{P}{B_i}
  = \SB b_i \SE$ and $\restrictGSS{P}{C_i} = \SB c_i \SE$. For every
  $C \in \SCC(D_P)$ the program~$\restrictGSS{P}{C} \in
  \DBC$. Consequently, $\SB r \in \restrictGSS{P}{C} -
  \Good{\restrictGSS{P}{C}}, C \in \SCC(D_P), \restrictGSS{P}{C}
  \notin \DBC \SE = \emptyset$ and we obtain $\parm{wfw}(P^n_4)=0$.
\item Consider program~$P^n_{51}$ and let $P:=P^n_{51}$. The
  partition~$\SCC( D_{P} )$ contains exactly the sets~$C_i := \{a_i,
  b_i \}$ where $1 \leq i \leq n$ and thus $\restrictGSS{P}{C_i} = \SB
  a_i \leftarrow \pnot b_i \rsep b_i \leftarrow \pnot a_i \SE$. Since
  $\dbd_\C (\restrictGSS{P}{C_i}) = 1$ and there are $n$ components we
  obtain $\parm{wfw}(P^n_{51}) = n$.
\item
  Consider program~$P^n_{52}$ and let $P:=P^n_{52}$.  We observe that
  the partition~$\SCC( D_{P} )$ contains exactly the
  sets~$C_i:=\{a_i,b_i\}$.  For every atom~$a \in C_i$ where $1 \leq i
  \leq n$ there is a bad $\{a\}$\hy cycle in the dependency graph
  of~$\restrictGSS{P}{C_i}$ and thus $\Good{\restrictGSS{P}{C_i}} =
  \emptyset$. Consequently, $\restrictGSS{P}{C_i} -
  \Good{\restrictGSS{P}{C_i}} = \restrictGSS{P}{C_i}$. Since
  $\restrictGSS{P}{C_i} \notin \DBC$, $\SB r \in \restrictGSS{P}{C} -
  \Good{\restrictGSS{P}{C}} \SM C \in \SCC(D_P), \restrictGSS{P}{C}
  \notin \DBC \SE = P$.  We observe that $\dbd_\C(P) = n$ and
  according to Observation~\ref{obs:fw} we obtain
  $\parm{wfw}(P^n_{52}) = n$.
\item Consider program~$P^n_6$ and let $P:= P^n_6$. The
  partition~$\SCC(D_P)$ contains exactly the sets~$A:=\{a\}$,
  $B_i:=\{b_i\}$, and $C_i:=\{c_i\}$ where $1 \leq i \leq n$. Hence
  $\restrictGSS{P}{A} = \SB a \SE$ and $\restrictGSS{P}{B_i} =
  \restrictGSS{P}{C_i} = \emptyset$ where $1 \leq i \leq n$. Since
  $\dbd_\C (\restrictGSS{P}{C}) = 0$ for every~$C \in \SCC(D_{P})$, we
  obtain $\parm{wfw} (P^n_6) = 0$.
\item Consider program~$P^n_7$ and let $P:=P^n_7$. Since the
  partition~$\SCC(D_P)$ contains exactly the sets~$\{a_i\}$ where $1
  \leq i \leq n$, $\restrictGSS{P}{\{a_i\}}= \SB a_i \SE$ and thus
  $\parm{wfw}(\restrictGSS{P}{\{a_i\}}) = 0$. We obtain
  $\parm{wfw}(P^n_7) = 0$.
\item Consider program~$P^{m,n}_8$ and let $P:= P^{m,n}_8$.  The
  partition~$\SCC( D_{P} )$ contains exactly the sets~$A_i:=\{a_i\}$
  for $1 \leq i \leq m$, $B:=\{b\}$, and $C := \SB c_i \SM 1 \leq i
  \leq n \SE$. Hence $\restrictGSS{P}{A_i} = \emptyset$ for $1 \leq i
  \leq m$, $\restrictGSS{P}{B} = \emptyset$, and $\restrictGSS{P}{C} =
  \SB c_i \leftarrow c_{i+1} \SM 1 \leq i \leq n \SE \cup \SB c_{n+1}
  \leftarrow c_1 \SE$.  The programs~$\restrictGSS{P}{A_i}$,
  $\restrictGSS{P}{B}$, and $\restrictGSS{P}{C}$ belong to the
  class~$\DBC$ for $1 \leq i \leq m$. Consequently $\SB r \in
  \restrictGSS{P}{C} - \Good{\restrictGSS{P}{C}} \SM C \in \SCC(D_P),
  \restrictGSS{P}{C} \notin \DBC \SE = \emptyset$. Hence we conclude
  that $\parm{wfw} (P^{m,n}_8) = 0$.
\item Consider program~$P^n_9$ and let $P:=P^n_9$. The
  partition~$\SCC( D_{P} )$ contains only the set~$C:=\at(P)$. For
  every atom~$a \in C$ there is a bad $\{a\}$\hy cycle in the
  dependency graph of~$P$ and thus $\Good{\restrictGSS{P}{C}} =
  \emptyset$. Consequently, $\restrictGSS{P}{C} -
  \Good{\restrictGSS{P}{C}} = \restrictGSS{P}{C} = P$. Since $P \notin
  \DBC$, $\SB r \in \restrictGSS{P}{C} - \Good{\restrictGSS{P}{C}} \SM
  C \in \SCC(D_P), \restrictGSS{P}{C} \notin \DBC \SE = P$. By
  Observation~\ref{ex:bdclasses} $\dbd_\C(P) \leq 1$ and according to
  Observation~\ref{obs:fw} we obtain $\parm{wfw}(P^n_9) \leq 1$.
\item Consider program~$P^n_{11}$ and let $P:=P^n_{11}$. The set~$X =
  \{b\}$ is the inclusion-minimal \delBds{\Normal} of $P^n_{11}$ by
  Observation~\ref{ex:bdclasses} and $P:=P^n_{11} - X = \SB a_i
  \leftarrow c \rsep c \rsep \leftarrow a_i \SM 1 \leq i \leq n
  \SE$. The partition~$\SCC( D_{P} )$ contains exactly the
  sets~$\{a_i\}$ for $1 \leq i \leq n$ and $\{c\}$. Hence
  $\restrictGSS{P}{\{a_i\}} = \SB a_i \SE$ for $1 \leq i \leq n$ and
  $\restrictGSS{P}{\{c\}} = \SB c \SE$.  We observe that $\dbd_\C
  (\restrictGSS{P}{C}) = 0$ for every~$C \in \SCC(D_{P})$ and
  according to Observation~\ref{obs:fw} we obtain $\parm{wfw} (P) =
  0$. Consequently, $\lift{\parm{wfw}}(P^n_{11}) = \Card{X}
  + \parm{wfw}(P^n_{11} - X) = 1$.
\end{enumerate}
\end{OBS}

In the following proposition we state the relationship between the
parameter~$\parm{wfw}$ and our backdoor-based \ASP parameters. The
first result ($\dbd_\DBC$ strictly dominates $\parm{wfw}$) was
anticipated by~\citex{GottlobScarcelloSideri02}.

\begin{PRO}\label{pro:del_wfw}
  $\parm{wfw}$ strictly dominates $\dbd_\C$ and $\dbd_\DBC$ strictly
  dominates $\parm{wfw}$. Moreover, $\dbd_\CCC$ and $\parm{wfw}$ are
  incomparable for the remaining target classes namely~$\CCC \in
  \{\Horn$, \BC, \DC, \DCTWO, \EC, \BEC, $\DEC\}$.
\end{PRO}
\begin{proof}
  We first show that $\parm{wfw}$ strictly dominates $\dbd_\C$. Let
  $P$ be a normal program and $X$ be a \delBds{\C} of $P$. Define
  $\hat P = \SB \restrictGSS{P}{C} - \Good{ \restrictGSS{P}{C} } \SM C
  \in \SCC(D_P), \restrictGSS{P}{C} \notin \DBC \SE$. Since $\hat P
  \subseteq P$ and $\C$ is hereditary
  (Observation~\ref{obs:acyclic_rule_induced}), $\hat P - X \in \C$
  and hence $X$ is a \delBds{\C} of $\hat P$. Consequently,
  $\parm{wfw}(P) \leq \dbd_\C(\hat P)$.
  To show that $\parm{wfw}$ is strictly more general than $\dbd_{\C}$,
  consider the program~$P^{n}_{4}$ where $\parm{wfw}(P^n_{4}) = 0$ and
  $\dbd_\C(P^n_{4}) = n$. Hence $\parm{wfw} \prec \dbd_\C$ by
  Observations~\ref{obs:bdsubclasses} and \ref{ex:wfw}.


  Next, we show that $\dbd_\DBC$ strictly dominates $\parm{wfw}$. Let
  $P$ be a normal program and $\hat P = \SB \restrictGSS{P}{C} -
  \Good{ \restrictGSS{P}{C} } \SM C \in \SCC(D_P), \restrictGSS{P}{C}
  \notin \DBC \SE$. According to Observation~\ref{obs:fw}
  $\parm{wfw}(P) = \dbd_\C(\hat P)$ and thus it is sufficient to show
  that $\dbd_\DBC(P) < \dbd_\C(\hat P)$. Let $X$ be an arbitrary
  \delBds{\C} of $\hat P$. Since $\C \subseteq \DBC$
  Observation~\ref{obs:bdsubclasses} yields that $X$ is also a
  \delBds{\DBC} of~$\hat P$. Let $c$ be an arbitrary directed bad cycle
  of $D_P$. As all vertices of $c$ belong to the same partition~$C \in
  \SCC(D_P)$, $\at(\restrictGSS{P}{C}) \subseteq C$, and
  $D_{\restrictGSS{P}{C}}$ is an induced subdigraph of $D_P$ on
  $\at(\restrictGSS{P}{C})$, we obtain $c$ is a directed bad cycle in
  $D_{\restrictGSS{P}{C}}$. Since $\hat P = \SB \restrictGSS{P}{C} -
  \Good{\restrictGSS{P}{C}} \SM C \in \SCC(D_P), \restrictGSS{P}{C}
  \notin \DBC \SE$ and by definition there is no
  $\Good{\restrictGSS{P}{C}}$\hy cycle in~$U_P$, there is no directed
  bad $\Good{\restrictGSS{P}{C}}$-cycle in~$D_P$ and hence $c$ is also
  a directed bad cycle in~$D_{\restrictGSS{P}{C}}$. Since $X$ is a
  \delBds{\DBC} of $D_{\restrictGSS{P}{C}}$ and $c$ is a directed bad
  $X$\hy cycle in~$D_{\restrictGSS{P}{C}}$, $X$ is also a
  \delBds{\DBC} of $P$. Consequently, $\dbd_\DBC(P)\leq\dbd_\C(\hat P)
  = \parm{wfw}(P)$. To show that $\dbd_{\DBC}$ is strictly more
  general than the parameter~$\parm{wfw}$, consider the
  program~$P^{n}_{33}$ where $\dbd_\DBC(P^n_{33})=0$ and
  $\parm{wfw}(P^n_{33}) = n$ by Observations~\ref{ex:bdclasses} and
  \ref{ex:wfw}. Hence $\dbd_\DBC \prec \parm{lstr}$.

  The third statement follows from considering the
  programs~$P^n_{33}$, $P^n_{34}$, and $P^n_{4}$ where
  $\dbd_\CCC(P^n_{33}) \leq 1$ for $\CCC\in \{\Horn$, \BC, \DC,
  \DCTWO, \BEC, $\DEC\}$ and $\dbd_\EC(P^n_{34}) \leq 1$ and
  $\parm{wfw}(P^n_{33}) = \parm{wfw}(P^n_{34}) = n$; and
  $\parm{wfw}(P^n_{4}) = 0$ and $\dbd_\CCC(P^n_{4}) = n$ by
  Observations~\ref{ex:bdclasses} and \ref{ex:wfw}. Hence $\dbd_\CCC
  \bowtie \parm{wfw}$ for $\CCC \in \{\Horn$, \BC, \DC, \DCTWO, \EC,
  \BEC, $\DEC\}$.

\end{proof}



\begin{OBS}
  Let $p\in\{\parm{\#neg}, \parm{\#non-Horn}, \parm{lstr} \}$, then
  $p$ and $\parm{wfw}$ are incomparable.
\end{OBS}
\begin{proof}
  To show that $p$ and $\parm{wfw}$ are incomparable consider the
  programs~$P^n_{31}$ and $P^n_{35}$ where $p(P^n_{31}) \geq n + 1$
  and $\parm{wfw}(P^n_{31}) = 1$; and $p(P^n_{35}) \leq 1$ and
  $\parm{wfw}(P^n_{35}) = n$ by Observations~\ref{ex:num-neg},
  \ref{ex:lstr} and \ref{ex:wfw}.
\end{proof}

\subsection{Incidence Treewidth}
Treewidth is graph parameter introduced by Robertson and
Seymour~\shortcite{RobertsonSeymour84,RobertsonSeymour85,RobertsonSeymour86}
that measures in a certain sense the tree-likeness of a
graph. See~\shortcite{Bodlaender93a,Bodlaender97,Bodlaender05,GottlobPichlerWei10}
for further background and examples on treewidth.  Treewidth has been
widely applied in knowledge representation, reasoning, and artificial
intelligence~\cite{Dunne07,GottlobPichlerWei10,JaklPichlerWoltran09,MorakWoltran12,PichlerRummeleWoltran09}.

\begin{DEF}
  Let $G=(V,E)$ be a graph, $T$ a tree, and $\chi$ a labeling that
  maps any node $t$ of $T$ to a subset $\chi(t) \subseteq V$. We call
  the sets $\chi(\cdot)$ \emph{bags} and denote the vertices of $T$ as
  \emph{nodes}. The pair~$(T,\chi)$ is a \emph{tree decomposition} of
  $G$ if the following conditions hold:
  \begin{enumerate}
  \item for every vertex~$v\in V(G)$ there is a node~$t\in V(T)$ such
    that $v \in \chi(t)$ (``vertices covered'');
  \item for every edge~$vw \in E(G)$ there is a node~$t \in V(T)$ such
    that $v,w \in \chi(t)$ (``edges covered''); and
  \item for any three nodes~$t_1,t_2,t_3\in V(T)$, if $t_2$ lies on
    the unique path from~$t_1$ to~$t_3$, then $\chi(t_1)\cap \chi(t_3)
    \subseteq \chi(t_2)$ (``connectivity'').
  \end{enumerate}
  The \emph{width} of the tree decomposition~$(T,\chi)$ is $\max\SB
  |\chi(t)| -1 \SM t \in V(T) \SE$. The \emph{treewidth} of~$G$,
  denoted by \tw(G), is the minimum taken over the widths of all
  possible tree decompositions of~$G$.
\end{DEF}

We will use the following basic properties of treewidth.
\begin{LEM}[Folklore, e.g.,~\cite{RobertsonSeymour86}]
  \label{lem:tw-components}
  Let $G$ be a graph and $C_1, \ldots, C_l$ its connected components,
  then $\tw(G)=\max \SB \tw(C_j) \SM 1 \leq i \leq l \SE$.
\end{LEM}

\begin{LEM}[Folklore, e.g.,~\cite{BodlaenderKoster08}]\label{lem:fw-tw}
  Let $G$ be a graph. If $G$ has a feedback vertex set size at
  most~$k$, then $\tw(G) \leq k +1$.
\end{LEM}

Treewidth can be applied to programs by means of various graph
representations.

\begin{DEF}[\citex{JaklPichlerWoltran09}]
  Let $P$ be a normal program. The \emph{incidence graph}~$I_P$ of~$P$
  is the bipartite graph which has as vertices the atoms and rules of
  $P$ and where a rule and an atom are joined by an edge if and only
  if the atom occurs in the rule. Then $\parm{inctw}(P) :=
  \tw(I_P)$. The parameter $\parm{inctw}(P)$ is called the
  \emph{incidence treewidth} of $P$.
\end{DEF}

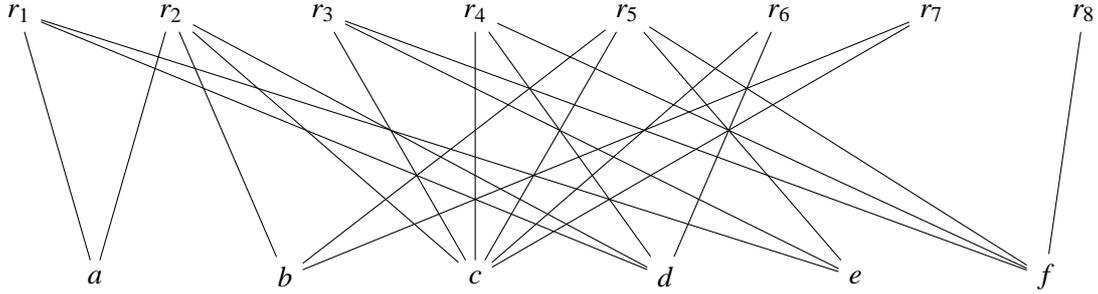
\begin{figure}
\centering
\begin{tikzpicture}[-,node distance=15mm]
\def \m {8}

\foreach \x in {1,...,\m}
{
  \node[](r\x) at (\x*2,0){$r_\x$};
}
\foreach \x/\y in {1/a,2/b,3/c,4/d,5/e,6/f}
{
  \node[](\y) at (0.5+\x*2.5,-3.5){$\y$};
}

    \path(r1) edge[] (a);
    \path(r1) edge[] (d);
    \path(r1) edge[] (e);

    \path(r2) edge[] (a);
    \path(r2) edge[] (b);
    \path(r2) edge[] (c);
    \path(r2) edge[] (d);

    \path(r3) edge[] (c);
    \path(r3) edge[] (f);
    \path(r3) edge[] (e);

    \path(r4) edge[] (c);
    \path(r4) edge[] (d);
    \path(r4) edge[] (f);

    \path(r5) edge[] (b);
    \path(r5) edge[] (c);
    \path(r5) edge[] (f);
    \path(r5) edge[] (e);

    \path(r6) edge[] (c);
    \path(r6) edge[] (d);

    \path(r7) edge[] (b);
    \path(r7) edge[] (c);

    \path(r8) edge[] (f);

  \end{tikzpicture}
  \caption{Incidence graph~$I_P$ of the program~$P$ of
    Example~\ref{ex:running}.}
\label{fig:incidence_graph}
\end{figure}

\newcommand{\DelayL}{\text{\normalfont DelayL}\xspace}

\begin{PRO}[\citex{JaklPichlerWoltran09}]
  For each~$L \in \AspFull \setminus \{ \AspEnum \}$,
  $\pnormal{L[\parm{inctw}]} \in \FPT$ and for
  $\pnormal{\AspEnum[\parm{inctw}]}$ the solutions can be enumerated
  with fixed-parameter linear delay between any two consecutive
  solutions.
\end{PRO}

\begin{OBS}\label{ex:inctw}
  We make the following observations about programs from
  Example~\ref{ex:comp_progs}.
%
\begin{enumerate}
\item Consider the programs~$P^n_{32}$ and $P^n_{51}$. We observe that
  its incidence graph consists of the $n$ cycles~$b_i, r_i, a_i,
  r_{2i}$, $a_i, r_i, b_i, r_{2i}$ respectively, where $1 \leq i \leq
  n$. According to Lemma~\ref{lem:fw-tw} a cycle has treewidth at
  most~$2$ and according to Lemma~\ref{lem:tw-components} we have
  $\parm{inctw}(P^n_{32}) \leq 2$ and $\parm{inctw}(P^n_{51}) \leq 2$.
%
%
\item Consider the programs~$P^n_6$ and $P^n_7$. Its incidence graph
  contains a clique on $n$ vertices. Thus by definition
  $\parm{inctw}(P^n_6) \geq n-1$ and $\parm{inctw}(P^n_6) \geq n-1$.
%
%
\item Consider program~~$P^{m,n}_8$. The incidence graph consists of a
  tree on the vertices $r_1, b,a_1,\ldots, a_m$ and a
  cycle~$r_1,c_1,\ldots,r_n,c_n,r_{n+1},c_{n+1},r_{n+2}$. By
  definition a tree has treewidth~$1$, according to
  Lemma~\ref{lem:fw-tw} a cycle has treewidth at most~$2$, and
  according to Lemma~\ref{lem:tw-components} we obtain $\parm{inctw}
  (P^{m,n}_8) \leq 2$.
\end{enumerate}
\end{OBS}

The following observation states why we cannot apply our lifting
theorem and extend the parameter treewidth from normal to disjunctive
programs.

\begin{OBS}\label{obs:inctw_enum}
  $\pnormal{\AspEnum[\parm{inctw}]}\not \in \FPT$.
\end{OBS}
\begin{proof}
  Consider the program~$P^n_{51}$ where $\parm{inctw}({P^n_{51}}) \leq
  2$.  Let $M \subseteq \at(P)$ such that either $a_i \in M$ or $b_i
  \in M$. According to the definitions we obtain the GL\hy
  reduct~$P^M:=\SB a_i \SM a_i \in M\SE \cup \SB b_i \SM b_i \in M
  \SE$. Since $M$ is a minimal model of $P^M$, $M$ is also an answer
  set of $P$. Thus the program~$P$ has $2^n$ many answer
  sets. Consequently, enumerating the answer sets of $P$ takes
  time~$\Omega(2^n)$.
\end{proof}

\begin{PRO}\label{pro:inctw}%
  Let $\CCC\in \{\Horn\} \cup \Acyc$ and $p \in \{\dbd_{\CCC},
  \parm{\#neg}, \parm{\#non-Horn}, \parm{lstr}, \parm{wfw} \}$, then
  $p$ and $\parm{inctw}$ are incomparable.
\end{PRO}
\begin{proof}
  We observe incomparability from the programs~$P^n_{51}$ and $P^n_6$
  where $p(P^n_{51}) \geq n$ and $\parm{inctw}(P^n_{51}) = 2$; and
  $p(P^n_6) \leq 1$ and $\parm{inctw}(P^n_6) \geq n-1$ by
  Observations~\ref{ex:bdclasses}, \ref{ex:num-neg}, \ref{ex:lstr},
  \ref{ex:wfw}, and \ref{ex:inctw}.
\end{proof}

\subsection{Dependency Treewidth}
One might ask whether it makes sense to consider restrictions on the
treewidth of the dependency graph. In this section we show that the
dependency treewidth strictly dominates the incidence treewidth and
backdoors with respect to the target class~$\C$, but unfortunately
parameterizing the main \ASP problems by the dependency treewidth does
not yield fixed-parameter tractability.

\begin{DEF}
  Let $P$ be a program, then $\parm{deptw}(P) = \tw(U_P)$. We call
  $\parm{deptw}(P)$ the \emph{dependency treewidth} of $P$.
\end{DEF}

\begin{OBS}\label{ex:deptw}
  We make the following observations about programs from
  Example~\ref{ex:comp_progs}.
\begin{enumerate}
\item Consider programs~$P^n_{32}$ and $P^n_6$ where the dependency
  graph is a tree. Thus $\parm{deptw}(P^n_{32}) = \parm{deptw}(P^n_6)
  = 1$.
\item Consider program~$P^n_{51}$. We observe that its dependency
  graph consists of $n$ disjoint cycles~$b_i, v_{b_i,a_i}, a_i,
  v_{a_i,b_i}$ for $1 \leq i \leq n$. According to
  Lemma~\ref{lem:fw-tw} a cycle has treewidth at most~$2$ and
  according to Lemma~\ref{lem:tw-components} we obtain
  $\parm{deptw}(P^n_{51})\leq 2$.
%
%
\item Consider program~$P^n_7$. Its dependency graph contains a clique
  on $n$ vertices as a subgraph. Hence $\parm{deptw}(P^n_7) \geq n -
  1$.
\end{enumerate}
\end{OBS}

\begin{PRO}\label{pro:deptw}
  $\parm{deptw}$ strictly dominates $\parm{inctw}$ and
  $\dbd_{\C}$. Let $\CCC \in \{\Horn\} \cup \Acyc \setminus \{\C,
  \EC\}$ and $p \in
  \{\dbd_\CCC, \parm{\#neg}, \parm{\#non-Horn}, \parm{lstr}, \parm{wfw}\}$,
  then $p$ and $\parm{deptw}$ are incomparable.
\end{PRO}
\begin{proof}
  Let $P$ be a normal program, and~$I_P$ its incidence graph. Let
  $(T,\chi)$ be an arbitrary tree decomposition of~$I_P$. We create a
  tree decomposition~$(T,\chi')$ for~$U_P$ as follows: For every~$r
  \in P$ let $v_r$ be the corresponding vertex in~$I_P$. We replace
  the occurrence of a $v_r\in \chi(t)$ by $H(r)$ for all nodes~$t \in
  V(T)$.
  Then the pair~$(T,\chi')$ satisfies Condition~1 and 2 of a tree
  decomposition of~$U_P$. Since all edges of $I_P$ are covered
  in~$(T,\chi)$ for every~$r \in P$ exists a~$t \in V(T)$ such that
  $v_r \in \chi(T)$ and $h \in \chi(T)$ where $H(r)=\SB h
  \SE$. Because all $v_r$ are connected in the bags of the tree
  decomposition~$(T,\chi)$ and all corresponding elements~$h$ are
  connected in~$(T,\chi)$, the Condition~3 holds for~$(T,\chi')$. Thus
  $(T,\chi')$ is a tree decomposition of the dependency
  graph~$U_P$. Since the width of~$(T,\chi')$ is less or equal to the
  width of~$(T,\chi)$ it follows $\tw(U_P)\leq \tw(I_P)$ for a normal
  program~$P$. 
  To show that $\parm{deptw}$ strictly dominates $\parm{inctw}$,
  consider the program~$P^n_6$ where $\parm{deptw}(P^n_6) \leq 1$ and
  $\parm{inctw}(P^n_6) \geq n$. Hence $\parm{deptw}
  \prec \parm{inctw}$.

  Let $P$ be a normal program and $X$ a \delBds{\C} of $P$. Thus $X$
  is a feedback vertex set of the dependency graph~$U_P$. According to
  Lemma~\ref{lem:fw-tw} $\tw(U_P)\leq k+1$.
  Hence $\parm{deptw} \preceq \dbd_\C$. To show that $\parm{deptw}$
  strictly dominates $\dbd_\C$ consider the program~$P^n_{51}$ where
  $\parm{deptw}(P^n_{51}) \leq 2$ and $\dbd_\C(P^n_{51}) \geq
  n$. Consequently, $\parm{deptw} \prec \dbd_\C$ and the proposition
  sustains.

  To show the last statement, consider again the programs~$P^n_{51}$
  and $P^n_7$ where $\parm{deptw}(P^n_{51}) \leq 2$ and $p(P^n_{51})
  \geq n$; and $\parm{deptw}(P^n_7) \geq n - 1$ and $p(P^n_7) = 0$ by
  Observations~\ref{ex:bdclasses}, \ref{ex:lstr}, \ref{ex:wfw}, and
  \ref{ex:deptw}.
\end{proof}


\begin{PRO}
  For each~$L \in \AspReason$, $\pnormal{L}$ is \NP-hard, even for
  programs that have dependency treewidth~$2$.
\end{PRO}

\begin{proof}
  First consider the problem \pname{Consistency}. From a $3$-CNF
  formula~$F$ with $k$ variables we construct a program~$P$ as
  follows: Among the atoms of our program~$P$ will be two
  atoms~$a_{x}$ and $a_{\bar x}$ for each variable~$x\in \var(F)$ and
  a new atom~$f$. We add the rules~$a_{\bar x} \leftarrow \pnot a_x$
  and $a_x \leftarrow \pnot a_{\bar x}$ for each variable~$x \in
  \var(F)$. For each clause~$\{l_1,l_2,l_3\} \in F$ we add the rule~$f
  \leftarrow h(l_1), h(l_2), h(l_3), \pnot f$ where $h(\neg x) =
  a_{x}$ and $h(x) = a_{\bar x}$. Now it is easy to see that the
  formula~$F$ is satisfiable if and only if the program~$P$ has an
  answer set. Let~$U_P$ be the undirected dependency graph of $P$. We
  construct the following tree decomposition~$(T,\chi)$ for~$U_P$: the
  tree~$T$ consists of the node~$t_f$ and for each~$x \in \var(F)$ of
  the nodes~$t_{fx}$, $t_{x\bar x}$, and $t_{\bar x x}$ and the
  edges~$t_f t_{fx}$, $t_{fx} t_{x \bar x}$, and $t_{x \bar x} t_{\bar
    x x}$. We label the nodes by $\chi(t_f):=\{f,v_f\}$ and for
  each~$x \in \var(F)$ by $\chi(t_{fx}):= \{ a_x, a_{\bar x}, f\}$,
  $\chi(t_{x\bar x}) := \{a_x, a_{\bar x}, v_{a_x \bar a_x}\}$, and
  $\chi(t_{\bar x x}) := \{a_x, a_{\bar x}, v_{\bar a_x a_x}\}$. We
  observe that the pair~$(T,\chi)$ satisfies Condition~1. The rules
  $a_{\bar x} \leftarrow \pnot a_x$ and $a_x \leftarrow \pnot a_{\bar
    x}$ yield the edges~$a_xv_{a_x \bar a_x}$, $v_{a_x \bar a_x} \bar
  a_x$, $a_xv_{\bar a_x a_x}$, $v_{\bar a_x a_x} \bar a_x$ in $U_P$
  which are all ``covered'' by $\chi(t_{x \bar x})$ and $\chi(t_{\bar
    x x})$. The rule $f \leftarrow h(l_1), h(l_2), h(l_3), \pnot f$
  yields the edge~$fv_{f}$ which is covered by $\chi(t_f)$ and yields
  the edges~$f a_x$ or $fa_{\bar x}$ which are covered by
  $\chi(t_{fx})$. Thus Condition~2 is satisfied. We easily observe
  that Condition~3 also holds for the pair~$(T,\chi)$.  Hence
  $(T,\chi)$ is a tree decomposition of the dependency
  graph~$U_P$. Since $\max\SB |\chi(t)| -1 \SM t \in V(T) \SE = 2$,
  the tree decomposition~$(T,\chi)$ is of width~$2$ and
  $\parm{deptw}(P)=2$. Hence the problem~$\pnormal{
    \AspCons[ \parm{deptw} ] }$ is \NP-hard, even for programs that
  have dependency treewidth~$2$. We observe hardness for the problems
  \AspBrave and \AspCaut by the very same argument as in the proof of
  Theorem~\ref{the:non_poly_kernels} and the proposition holds.
\end{proof}

\subsection{Interaction Treewidth}\label{sec:tw_I}
%

%
%
%
%

\begin{DEF}[\citex{Ben-EliyahuDechter94}]\label{def:interact}
  Let $P$ be a normal program. The \emph{interaction graph} is the
  graph~$A_P$ which has as vertices the atoms of $P$ and an edge~$xy$
  between any two atoms~$x$ and $y$ for which there are rules~$r, r'
  \in P$ such that $x \in \at(r)$, $y \in \at(r')$, and $H(r) \cap
  H(r') \neq \emptyset$.\footnote{This definition is equivalent to the
    original definition in \shortcite{Ben-EliyahuDechter94} which is
    given in terms of cliques: the interaction graph is the graph
    where each atom is associated with a vertex and for every atom~$a$
    the set of all literals that appear in rules that have~$a$ in
    their heads are connected as a clique.
  }
\end{DEF}

\begin{DEF}[\citex{KanchanasutStuckey92}, \citex{Ben-EliyahuDechter94}]\label{def:pos_dep}
  Let $P$ be a program. The \emph{positive dependency digraph}~$D^+_P$
  of $P$ has as vertices the atoms~$\at(P)$ and a directed
  edge~$(x,y)$ between any two atoms~$x,y \in \at(P)$ for which there
  is a rule~$r\in P$ with $x\in H(r)$ and $y\in
  B^+(r)$.\footnote{\citex{Ben-EliyahuDechter94} used the term
    dependency graph while the term positive dependency graph was
    first used by \citex{KanchanasutStuckey92} and became popular
    by~\citex{ErdemLifschitz03}.}
\end{DEF}



Let $G=(V,E)$ be a graph and $c=(v_1, \ldots, v_l)$ a cycle of
length~$l$ in $G$. A \emph{chord} of $c$ is an edge $v_iv_j \in E$
where $v_i$ and $v_j$ are not connected by an edge in $c$
(non-consecutive vertices). $G$ is \emph{chordal} (triangulated) if
every cycle in $G$ of length at least~4 has a chord.

\begin{DEF}[\citex{Ben-EliyahuDechter94}]
  Let $G$ be a digraph and $G'$ a graph. Then
  \begin{align*}
    \mathop{lc}(G) :={}& \max\{ \{ 2 \} \cup \SB \Card{c} \SM c \text{
      is a cycle in } G \SE\},
    \\
    \mathop{cs}(G') :={}& \SB w \SM G' \text{ is a subgraph of a
      chordal graph with all cliques
      of size at most } w\SE \text{, and }\\
    \parm{fw}(G'):={}&\min \SB \Card{S} \SM S \text{ is a feedback 
      vertex set of } G' \SE.
  \end{align*}
  $\mathop{lc}$ is the \emph{length of the longest
    cycle}. $\mathop{cs}$ is the \emph{clique size}.\footnote{The
    original definition is based on the length of the longest acyclic
    path in any component of $G$ instead of the length of the longest
    cycle and the term clique width is used instead of clique size.}

  Let $P$ be a normal program, $A_P$ its interaction graph, and
  $D_P^+$ its positive dependency digraph. Then 
  \begin{align*}
    \parm{cluster}(P) :={}& \mathop{cs}(A_P) \cdot \log
    \mathop{lc}(D_P^+)\\
    \parm{cyclecut}(P) :={}& \parm{fw}(A_P) \cdot \log
    \mathop{lc}(D_P^+).
  \end{align*}
  $\parm{cluster}(P)$ is called the \emph{size of the tree
    clustering}.  $\parm{cyclecut}(P)$ is called the \emph{size of the
    cycle cutset decomposition}.
\end{DEF}

In fact the definition of $\mathop{cs}(G)$ is related to the
treewidth:

\begin{LEM}[\citey{RobertsonSeymour86}]\label{lem:cs-tw}
  Let $G$ be a graph. Then $\tw(G) = \mathop{cs}(G) + 1$.
\end{LEM}
\begin{COR}
  Let $P$ be a normal program, $A_P$ its interaction graph, and
  $D_P^+$ its dependency digraph. Then
  \begin{align*}
    \parm{cluster}(P) = {}& \left( \tw(A_P) - 1\right) \cdot \log
    \mathop{lc}(D_P^+)\\
  \end{align*}
\end{COR}
\begin{PRO}[\citex{Ben-EliyahuDechter94}]\label{pro:cluster}
  For each~$L\in \AspFull$, $\pnormal{L[\parm{cluster}]}\in \FPT$ and
  $\pnormal{L[\parm{cyclecut}]}\in \FPT$.
\end{PRO}

\begin{OBS}\label{ex:cyclecut}
  We make the following observations about programs from
  Example~\ref{ex:comp_progs}.
\begin{enumerate}
\item Consider programs~$P^n_{51}$ and $P^n_{53}$ and let $P \in
  \{P^n_{51}, P^n_{53}\}$. The interaction graph~$A_p$ contains $n$
  disjoint paths $a_i,b_i$ for $1 \leq i \leq m$. Hence $A_P$ contains
  no cycles and $\parm{fw}(A_P) = 0$ and according to
  Lemma~\ref{lem:fw-tw} we obtain $\tw(A_P) \leq 1$.
  Moreover, the positive dependency graph~$D^+_P$ contains no edges,
  $n$ disjoint cycles of length exactly~$2$ respectively.
  Thus $\mathop{lc} (D^+_{P}) = 2$. Consequently,
  $\parm{cluster}(P^n_{51}) \leq 1$ and $\parm{cyclecut}(P^n_{51})
  \leq 1$; and $\parm{cluster}(P^n_{53}) \leq 1$ and
  $\parm{cyclecut}(P^n_{53}) \leq 1$.

\item Consider program~$P^{m,n}_8$ and let $P:=P^{m,n}_8$. The
  interaction graph~$A_{P}$ contains a clique on $m$ vertices and thus
  $\tw(A_{P}) \geq m - 1$. According to Lemma~\ref{lem:cs-tw} we
  obtain $\mathop{cs}(A_P) \geq m - 2$. According to
  Lemma~\ref{lem:fw-tw} we have $\parm{fw} (A_P) \geq m -
  2$. Moreover, the positive dependency graph~$D^+_P$ contains the
  cycle~$c_1,c_2, \ldots, c_n,c_{n+1}$. Thus $\mathop{lc} (D^+_{P}) =
  n$. Consequently, $\parm{cluster}(P^{m,n}_8) \geq (m - 2) \cdot \log
  n$ and $\parm{cyclecut}(P^{m,n}_8) \geq (m - 2) \cdot \log n$.
\end{enumerate}
\end{OBS}

\begin{OBS}
  \parm{cluster} strictly dominates \parm{cyclecut}.
\end{OBS}
\begin{proof}
  Let $P$ be a normal program and $A_P$ its interaction
  graph. According to Lemma~\ref{lem:fw-tw} we obtain $\tw(A_P)
  \leq \parm{fw} (A_P) + 1$. Hence $\parm{cluster}(P)
  \prec \parm{cyclecut} (P)$.
\end{proof}

\begin{PRO}\label{pro:hcf1}
  $\parm{inctw}$ strictly dominates $\parm{cluster}$. Let $\CCC \in
  \{\Horn\} \cup \Acyc$ and $p \in \{\dbd_\CCC$, \parm{\#neg},
  \parm{\#non-Horn}, \parm{lstr}, $\parm{wfw}\}$, then $p$ and
  $\parm{cluster}$ are incomparable; and $p$ and $\parm{cyclecut}$ are
  incomparable.
\end{PRO}
\begin{proof}
  We first show that $\parm{inctw}$ dominates $\parm{cluster}$. Let
  $P$ be a normal program, $I_P$ its incidence graph, and $A_P$ its
  interaction graph. Let $(T,\chi)$ be an arbitrary tree decomposition
  of~$A_P$. We create a tree decomposition~$(T,\chi')$ for~$I_P$ as
  follows: For every~$r \in P$ let $v_r$ be the corresponding vertex
  in~$I_P$. By definition for every~$r \in P$ there is a bag $\chi(t)$
  where $t \in V(T)$ such that $\at(r) \subset \chi(t)$. We set
  $\chi'(t) = \chi(t) \cup \{ v_r \}$. Then the pair~$(T,\chi')$
  clearly satisfies Condition~1 and 2 of a tree decomposition of~$I_P$
  by definition. Since every $v_r$ occurs in exactly one bag
  Condition~3 holds for~$(T,\chi')$. Thus $(T,\chi')$ is a tree
  decomposition of the interaction graph~$A_P$. Since the width
  of~$(T,\chi')$ is less or equal to the width of~$(T,\chi)$ plus one
  it follows $\tw(I_P)\leq \tw(A_P) + 1$.
  To show that $\parm{inctw}$ strictly dominates $\parm{cluster}$,
  consider the program~$P^{m,n}_8$ where $\parm{inctw} (P^{m,n}_8)
  \leq 2$ and $\parm{cluster}(P^{m,n}_8) = (m-2) \log n$ by
  Observations~\ref{ex:inctw} and \ref{ex:cyclecut}. Hence
  $\parm{inctw} \prec \parm{cluster}$.

  Let $p \in\{\dbd_\CCC,$ \parm{\#neg}, \parm{\#non-Horn},
  \parm{lstr}, $\parm{wfw}\}$ and $\CCC \in \{\Horn\} \cup \Acyc$. We
  show the incomparability of the parameter~$p$ and
  $\parm{cyclecut}$. In fact we show something stronger, there are
  programs~$P$ where $p$ is of constant size, but both
  $\parm{tw}(D^+_P)$, $\parm{fw}(D^+_P)$ respectively, and
  $\mathop{lc}(I_P)$ can be arbitrarily large and there are programs
  where the converse sustains. Therefor we consider the
  programs~$P^n_{51}$ and $P^{m,n}_8$ where $p(P^n_{51}) \geq n$ and
  $\parm{cluster}(P^n_{51}) \leq 1 $ and $\parm{cyclecut}(P^n_{51})
  \leq 1$; and $p(P^{m,n}_8) \leq 1$ and $\parm{cyclecut}(P^{m,n}_8)
  \geq (m-2) \cdot \log n$ and $\parm{cluster}(P^{m,n}_8) \geq (m-2)
  \cdot \log n$ by Observations~\ref{ex:bdclasses}, \ref{ex:num-neg},
  \ref{ex:lstr}, \ref{ex:wfw}, and \ref{ex:cyclecut}. Consequently,
  the second statement holds.
\end{proof}

\subsection{Number of Bad Even Cycles}

\begin{DEF}[\citey{LinZhao04}]
  Let $P$ be a normal program. Then
  \begin{align*}
    \parm{\#badEvenCycles}(P)  := {}&\Card{\SB c \SM c \text{ is a directed
        bad even cycle of } P \SE}
  \end{align*}
\end{DEF}

\begin{PRO}
  For each $L \in \AspFull$, $\pnormal{L[\parm{\#badEvenCycles}]} \in
  \FPT$.
\end{PRO}

\begin{OBS}\label{ex:num-badevencycles}
  We make the following observations about programs from
  Example~\ref{ex:comp_progs}.
\begin{enumerate}
\item
  Consider program~$P^n_4$ which contains no directed bad even
  cycle. Hence $\parm{\#badEvenCycles}(P^n_4) = 0$.
\item Consider program~$P^n_{51}$ which contains $n$ disjoint directed
  bad even cycles. Thus $\parm{\#badEvenCycles}(P^n_{51}) = n$.
\item Consider programs~$P^n_{52}$, $P^n_7$, and $P^{m,n}_8$ which
  contain no directed bad even cycle. Consequently we obtain
  $\parm{\#badEvenCycles}(P^n_{52}) = \parm{\#badEvenCycles}(P^n_7)
  = \parm{\#badEvenCycles}(P^{m,n}_8)=0$.
%
\item Consider program~$P^n_9$ which contains the directed bad even
  cycles~$a_1, a_2, a_3, b_i$ for $1 \leq i \leq n$. Since there are
  $n$ of those directed bad even cycles we obtain
  $\parm{\#badEvenCycles} (P^n_9) = n$.
\end{enumerate}
\end{OBS}

\begin{PRO}\label{pro:zhao}
  $\dbd_{\DBEC}$ strictly dominates $\parm{\#badEvenCycles}$.
  Moreover, $\dbd_\CCC$ and $\parm{\#badEvenCycles}$ are incomparable
  for the remaining target classes~$\CCC \in \Acyc \setminus \{\DBEC\}
  \cup \{ \Horn\}$.
  Let $p \in \{\parm{\#neg},$ \parm{\#non-Horn}, \parm{lstr},
  \parm{wfw}, \parm{inctw}, \parm{deptw}, \parm{cluster},
  $\parm{cyclecut}\}$, then $p$ and $\parm{\#badEvenCycles}$ are
  incomparable.
\end{PRO}
\begin{proof}
  To see that $\dbd_\DBEC$ strictly dominates
  $\parm{\#badEvenCycles}$. Let $P$ be a normal program. If $P$ has at
  most~$k$ directed bad even cycles, we can construct a
  \delBds{\DBEC}~$X$ for $P$ by taking one element from each directed
  bad even cycle into $X$. Thus $\dbd_\DBEC(P)
  \leq \parm{\#badEvenCycles}(P)$. If a program~$P$ has a
  \delBds{\DBEC} of size~$1$, it can have arbitrarily many even cycles
  that run through the atom in the backdoor, e.g. program~$P^n_9$
  where $\dbd_\DBEC(P^n_9) \leq 1$ and $\parm{\#badEvenCycles}(P^n_9)
  = n$ by Observations~\ref{ex:bdclasses} and
  \ref{ex:num-badevencycles}. It follows that $\dbd_\DBEC
  \prec \parm{\#badEvenCycles}$ and the proposition holds.

  To show the second statement, consider the programs~$P^n_4$,
  $P^n_{52}$, and $P^n_9$ where $\dbd_\CCC(P^n_{9}) = 1$ for $\CCC \in
  \Acyc \cup \{\Horn\}$ and $\parm{\#badEvenCycles}(P^n_{9}) = n$;
  conversely $\dbd_\CCC(P^n_4) \geq n$ for $\CCC \in \{\Horn$, \C,
  \BC, \DC, \DCTWO, \EC, \DEC, $\BEC\}$, $\dbd_\DBC(P^n_{52}) \geq n$,
  and $\parm{\#badEvenCycles}(P^n_4) = \parm{\#badEvenCycles}(
  P^n_{52}) = 0$. Hence $\dbd_\CCC \bowtie \parm{\#badEvenCycles}$ for
  $\CCC \in \Acyc \setminus \{\DBEC\} \cup \{\Horn\}$ by
  Observations~\ref{ex:bdclasses} and \ref{ex:num-badevencycles}.

  To show the third statement, consider the programs~$P^n_{51}$,
  $P^n_{52}$, $P^n_7$, and $P^{m,n}_8$, $P^n_9$ where
  $\parm{inctw}(P^n_7) \geq n-1$ and $\parm{deptw}(P^n_7) \geq n - 1
  $, $p(P^n_{52}) \geq n$ for $p \in
  \{\parm{\#neg}$, \parm{\#non-Horn}, \parm{lstr}, $\parm{wfw}\}$,
  $\parm{cyclecut}(P^{m,n}_8) \geq (m-2) \log n$,
  $\parm{cluster}(P^{m,n}_8) \geq (m-2) \log n$, and
  $\parm{\#badEvenCycles}(P^n_7) = \parm{\#badEvenCycles} (P^{m,n}_8)
  = \parm{\#badEvenCycles}(P^n_{52}) = 0$; conversely $p(P^n_{51})
  \leq 2$ for $p \in \{\parm{inctw}$, \parm{deptw}, \parm{cluster},
  $\parm{cyclecut}\}$, $p(P^n_{9}) \leq 2$ for $p \in
  \{\parm{\#neg}$, \parm{\#non-Horn}, \parm{lstr}, $\parm{wfw}\}$, and
  $\parm{\#badEvenCycles}(P^n_{51}) = \parm{\#badEvenCycles}(P^n_{9})
  = n$ by Observations~\ref{ex:bdclasses}, \ref{ex:num-neg},
  \ref{ex:lstr}, \ref{ex:wfw}, \ref{ex:inctw}, \ref{ex:deptw},
  \ref{ex:cyclecut}, and \ref{ex:num-badevencycles}. Hence $p
  \bowtie \parm{\#badEvenCycles}$ for $p \in \{\parm{\#neg}$,
  \parm{\#non-Horn}, \parm{lstr}, \parm{wfw}, \parm{inctw}, 
  \parm{deptw}, \parm{cluster}, $\parm{cyclecut}\}$.

\end{proof}


\subsection{Number of Positive Cycles (Loop Formulas)}\label{sec:loop}

\begin{DEF}[\citey{Fages94}]
  Let $P$ be a normal program and $D^+_P$ its positive dependency
  digraph. Then
  \begin{align*}
    \parm{\#posCycles} :={}& \Card{\SB c \SM c \text{ is a directed
        cycle in } D^+_P \SE}
  \end{align*}
  The program~$P$ is called \emph{tight} if $\parm{\#posCycles} =
  0$.\footnote{\citex{Fages94} used the term positive-order consistent
    instead of tight.}
\end{DEF}

The parameter has been generalized to disjunctive programs
by~\citex{LeeLifschitz03}.

\begin{PRO}[\citey{Fages94}]
  For $L \in \AspReason$, $\pnormal{L[\parm{\#posCycles}]}$ is
  \NP-hard or \coNP-hard, even for tight programs.
\end{PRO}



\begin{OBS}\label{ex:num-poscycles}
  We make the following observations about programs from
  Example~\ref{ex:comp_progs}.
\begin{enumerate}
\item Consider programs~$P^n_{32}$ and $P^n_{53}$ where the positive
  dependency digraphs contain $n$ directed cycles, hence
  $\parm{\#posCycles} (P^n_{32})= \parm{\#posCycles}(P^n_{53}) = n$.
\item Consider program~$P^n_{51}$ and $P^n_7$ where the positive
  dependency digraphs contain no cycle. Hence
  $\parm{\#posCycles}(P^n_{51}) = \parm{\#posCycles}(P^n_7) = 0$.
\item Consider program~$P^{m,n}_8$. Its positive dependency digraph
  contains only the cycle~$c_1,c_2, \ldots, c_n, c_{n+1}$, thus
  $\parm{\#posCycles} (P^n_8) = 1$.
\end{enumerate}
\end{OBS}



\begin{PRO}
  Let $\CCC \in \{\Horn\} \cup \Acyc$ and $p \in \{\dbd_\CCC$,
  \parm{\#neg}, \parm{\#non-Horn}, \parm{lstr}, \parm{wfw}, \parm{inctw},
  \parm{deptw}, \parm{cluster}, \parm{cyclecut},
  $\parm{\#badEvenCycles} \}$, then $p$ and $\parm{\#posCycles}$ are
  incomparable.
\end{PRO}
\begin{proof}
%
%
  We observe incomparability from the programs~$P^n_{32}$, $P^n_{51}$,
  $P^n_{53}$, $P^n_7$, and $P^{n,m}_8$. We have $p(P^n_{51}) \geq n$
  for $p \in \{\dbd_\CCC$
  \parm{\#neg}, \parm{\#non-Horn}, \parm{lstr}, \parm{wfw},
  $\parm{\#badEvenCycles}\}$, $\parm{inctw} (P^n_7) \geq n-1$,
  $\parm{deptw}(P^n_7) \geq n - 1$, $\parm{cyclecut} (P^{n,m}_8)\geq
  (m-2) \cdot \log n$, $\parm{cluster}(P^{n,m}_8) \geq (m-2) \cdot
  \log n$, and $\parm{\#posCycles}(P^n_{51}) = \parm{\#posCycles}
  (P^n_7) = 0$ and $\parm{\#posCycles}(P^{m,n}_8) = 1$; conversely for
  $p \in \{\dbd_\CCC$, \parm{\#neg}, \parm{\#non-Horn}, \parm{lstr},
  \parm{wfw}, \parm{inctw}, $\parm{deptw}\}$ we have $p(P^n_{32})
  \leq 1$, for $p \in \{\parm{cluster}, \parm{cyclecut}\}$ we have
  $p(P^n_{53}) \leq 2$ and $\parm{\#posCycles}(P^n_{32})
  = \parm{\#posCycles}(P^n_{53}) = n$ by
  Observations~\ref{ex:bdclasses}, \ref{ex:num-neg}, \ref{ex:lstr},
  \ref{ex:wfw}, \ref{ex:inctw}, \ref{ex:deptw}, \ref{ex:cyclecut},
  \ref{ex:num-badevencycles}, and
  \ref{ex:num-poscycles}. Consequently, the proposition holds.
\end{proof}

\subsection{Head-Cycles}\label{sec:hcf}

\begin{DEF}[\citex{Ben-EliyahuDechter94}]
  Let $P$ be a program and $D_P^+$ its positive dependency digraph. A
  \emph{head-cycle} of $D_P^+$ is a $\{x,y\}$\hy cycle\footnote{See
    Section~\ref{sec:db-cycl} for the definition of a $W$\hy cycle.}
  where $x,y \in H(r)$ for some rule~$r \in P$. The program~$P$ is
  \emph{head-cycle-free} if $D^+_P$ contains no head-cycle.
\end{DEF}

One might consider the number of head-cycles as a parameter to
tractability.
\begin{DEF}
  Let $P$ be a program and $D_P^+$ its positive dependency
  digraph. Then
  \begin{align*}
    \parm{\#headCycles} := {}& \Card{\SB c \SM c \text{ is a
        head-cycle of } D_P^+ \SE}
  \end{align*}
\end{DEF}

But as the following proposition states that the \ASP-reasoning
problems are already \NP-complete for head-cycle-free programs.
\begin{PRO}[\citex{Ben-EliyahuDechter94}]
  Each $L \in \AspReason$ is \NP-hard or \coNP-hard, even for
  head-cycle-free programs.
\end{PRO}

\begin{samepage}
  \begin{OBS}\label{ex:num-headCycles}
  We make the following observations about programs from
  Example~\ref{ex:comp_progs}.
\begin{enumerate}
\item
  Consider program~$P^n_{51}$. Since the positive dependency digraph
  of $P^n_{51}$ contains no cycle, $\parm{\#headCycles}(P^n_{51}) = 0$. 
\item
  Consider program~$P^n_{11}$. The positive dependency digraph of
  $P^n_{11}$ contains the head cycles~$a_ibc$ for $1 \leq i \leq
  n$. Thus $\parm{\#headCycles}(P^n_{11}) = n$.
\end{enumerate}
\end{OBS}
\end{samepage}

Even though the parameter~$\parm{\#headCycles}$ does not yield
tractability for the \ASP-reasoning problems we are interested in the
relationship between our lifted parameters and the
parameter~$\parm{\#headCycles}$. We will first restrict the input
programs to normal programs in Observation~\ref{obs:normal_headCycles}
and then consider disjunctive programs
Observation~\ref{obs:headCycles}.

\begin{OBS}\label{obs:normal_headCycles}
  Let $\CCC \in \{\Horn\} \cup \Acyc$ and $p \in
  \{\dbd_\CCC$, \parm{\#neg}, \parm{\#non-Horn}, \parm{lstr},
  \parm{wfw}, \parm{inctw}, \parm{deptw}, \parm{cluster}, \parm{cyclecut},
  \parm{\#badEvenCycles}, $\parm{\#posCycles}\}$, then
  $\parm{\#headCycles}$ strictly dominates $p$.
\end{OBS}
\begin{proof}
  By definition every normal program is head-cycle-free, hence
  $\parm{\#headCycles}$ strictly dominates $p$.
\end{proof}

\begin{OBS}\label{obs:headCycles}
  Let $\CCC \in \{\Horn\} \cup \Acyc$ and $p \in
  \{\dbd_\CCC,\parm{\#neg},\parm{\#non-Horn},\parm{lstr},\parm{wfw}\}$,
  then $\lift{p}$ and $\parm{\#headCycles}$ are incomparable.
\end{OBS}
\begin{proof}
  To that the parameters are incomparable consider the
  programs~$P^n_{51}$ and $P^n_{11}$ where $p(P^n_{51}) \geq n$ and
  $\parm{\#headCycles}(P^n_{51}) = 0$; and $p(P^n_{11}) = 1$ and
  $\parm{\#headCycles}(P^n_{11}) = n$ by
  Observations~\ref{ex:bdclasses}, \ref{ex:num-neg}, \ref{ex:lstr},
  \ref{ex:wfw}, and \ref{ex:num-headCycles}.
\end{proof}

\section{Practical Considerations}\label{sec:experiments}
Although the main focus of this paper is theoretical, we discuss in
this section some practical considerations and present some empirical
data.
\subsection{Backdoor Detection} 
We have determined \strongBds{\tautext{\Horn}}s for various benchmark
programs by means of encodings into answer set programming, integer
linear programming (\ILP), local search (LS), and propositional
satisfiability. It turned out that compilations into \ILP and \ASP
itself perform best. The integer linear program was generated using
the open source mathematics framework Sage~\cite{SteinEtAl12} with
Python~\cite{Rossum95}, solved using ILOG CPLEX~12~\cite{cplex11} and
Gurobi~\cite{Gurobi-Optimization14}. We did not check optimality
(considering LP duality gap and branch and bound tree). Hence the
found \strongBds{\tautext{\Horn}}s might be not optimal, but
presumably close to optimal. For some selected instances we verified
optimality using a \SAT solver and unary cardinality
constraints~\cite{Sinz05}. The answer set program that solves backdoor
detection was generated by means of \ASP meta
programming~\cite{GebserKaminskiSchaub11} and solved using
clasp~\cite{potassco13} and a variant
(unclasp)~\cite{AndresKaufmannMattheisSchaub12}.

\begin{table}[t]
\centering
\begin{tabular}{llcrrrrrrr}
  \toprule
  domain & instance set &	disj. & \#atoms &	horn bd(\%) &	stdev\\
  \midrule
  \textit{AI} &\texttt{HanoiTower} & -- & 32956.7 &	4.28 &	0.08 \\
  &\texttt{StrategicCompanies} &	+& 2002.0 & 	6.03 &	0.04 \\
  &\texttt{MinimalDiagnosis} &	+& 111856.5 &	10.74 &	1.72 \\
  \textit{Graph} &\texttt{GraphColoring} &	--& 3544.4 & 19.47 &	0.80 \\
  \textit{Planning} 
  &\texttt{MSS/MUS} &	+& 49402.3 & 3.80 &	0.70 \\
  &\texttt{ConformantPlanning} &	+& 1378.2 &	8.76 &	2.14 \\
  \textit{Cryptography}&\texttt{Factoring} &	--& 3336.8 &	16.20 &	1.30 \\
  \textit{Puzzle} &\texttt{Labyrinth}&	--& 55604.9 &	3.42 &	0.82\\
  &\texttt{KnightTour} &	--& 23156.9 &	33.08 &	0.20 \\
  &\texttt{Solitaire} &	--& 11486.8 &	38.88 &	0.20\\
  \midrule
  \midrule
  \textit{Random} &\texttt{RandomQBF} &	+& 160.1 &	50.00 &	0.00 \\
  &\texttt{RLP} & --& 184.2 &	68.00 &	5.00 \\
  &\texttt{RandomNonTight} & -- & 50.0 & 93.98 &	1.08 \\
  \bottomrule
\end{tabular}
\caption{Size of smallest strong $\Horn$\hy backdoors (bd) for various benchmark 
  sets, given as \% of the total number of atoms (\#atoms) by the mean over the instances. \\
  \small{  
    \texttt{ConformantPlanning}: secure planning under incomplete
    initial states~\protect\cite{ToPontelliSon09} instances provided by
    Gebser and Kaminski~\protect\shortcite{GebserKaminski12}.
    \texttt{Factoring}: factorization of a number where an efficient algorithm
    would yield a cryptographic attack by
    Gebser~\protect\shortcite{DrescherGebserKaufmannSchaub10} instances
    provided by Gebser~\protect\shortcite{Asparagus09}.
    \texttt{HanoiTower}: classic Towers of Hanoi puzzle by Truszczynski, Smith
    and Westlund; for instances see~\protect\shortcite{CalimeriEtAl11}.
    \texttt{GraphColoring}: classic graph coloring problem by Lierler and  
    Balduccini; for instances see~\protect\shortcite{CalimeriEtAl11}.
    \texttt{KnightTour}: finding a tour for the knight piece travelling any square 
    following the rules of chess by Zhou, Calimeri, and Santoro; for instances
    see~\protect\shortcite{CalimeriEtAl11}.
    \texttt{Labyrinth}: classical Ravensburger's Labyrinth puzzle by Gebser;
    for instances see~\protect\shortcite{CalimeriEtAl11}.
    \texttt{MinimalDiagnosis}: an application in systems
    biology~\protect\cite{GebserSchaubThieleUsadelVeber08}; for instances
    see~\protect\shortcite{CalimeriEtAl11}.
    \texttt{MSS/MUS}: problem whether a clause belongs to some minimal unsatisfiable 
    subset~\protect\cite{JanotaMarques-Silva11} instances provided by Gebser and 
    Kaminski~\protect\shortcite{GebserKaminski12}.   
    \texttt{Solitaire}: classical Peg Solitaire puzzle by Lierler and Balduccini;
    for instances see~\protect\shortcite{CalimeriEtAl11}.
    \texttt{StrategicCompanies}: encoding the $\Sigma^P_2$-complete 
    problem of producing and owning companies and strategic sets 
    between the companies~\protect\cite{GebserEtAl07}.
    \texttt{Mutex}: equivalence test of partial implementations of
    circuits, instances provided by~\protect\citex{MarateaRiccaFaberLeone08} 
    based on QBF instances of~\protect\citex{AyariBasin00}.
    \texttt{RandomQBF}: translations of randomly generated $2$-QBF
    instances using the method by Chen and
    Interian~\protect\shortcite{ChenInterian05} instances provided by
    Gebser~\protect\shortcite{GebserEtAl07}.
    \texttt{RLP}: Randomly generated normal programs, of various density (number of
    rules divided by the number of atoms)~\protect\cite{ZhaoLin03} instances provided
    by~\protect\cite{GebserEtAl07}. 
    \texttt{RandomNonTight}: Randomly generated normal programs
    provided by Schultz and Gebser~\cite{Asparagus09} with $n=40$, $50$,
    and $60$ variables, respectively with $40$ instances per step instances 
    provided by~\protect\citex{Asparagus09}.
  }
}
\label{tab:horn}
\end{table}
%
%
%

Table~\ref{tab:horn} illustrates our results on the size of small
\strongBds{\tautext{\Horn}}s of the considered benchmark instances.
We mainly used benchmark sets from the first three Answer Set
Programming
Competitions~\cite{CalimeriEtAl11,DeneckerEtAl09,GebserEtAl07},
because most of the instances contain only normal and/or disjunctive
rules and no extended rules
(cardinality/weight-constraints)\footnote{We are aware that one can
  preprocess extended rules and compile them into normal rules. Even
  though recent versions of the solver clasp provide such an
  option~\cite{gringo}, those compilations blow up the instances
  significantly. Hence we omitted it for pragmatic reasons.}. The
structured instances have, as expected, significantly smaller
\strongBds{\tautext{\Horn}}s than the random instances.  So far we
have no good evidence why in particular the sets \texttt{KnightTour}
and \texttt{Solitaire} have rather large \strongBds{\tautext{\Horn}}s
compared to the other structured instances.

For the acyclicity based target classes~$\CCC\in \Acyc$ we have
computed small \delBds{\CCC}s only for very few selected instances
with moderate size since the currently available algorithms can only
deal with rather small instances within a reasonable computation
time. The size of small \delBds{\tautext{\C}}s of selected instances
of \texttt{Solitaire} was about half of the size of small
\strongBds{\tautext{\Horn}}s.

\subsection{Backdoor Evaluation}
Instead of applying the algorithm from Section~\ref{sec:backdoors}
directly, one can possibly use backdoors to control modern heuristics
in \ASP solvers to obtain a speed-up. Most modern solver heuristics
work independently from the current truth assignment. They assign to
each atom in the program a score and incorporate into the score the
learned knowledge based on derived conflicts (history of the truth
assignments). Various studies on the effect of restricting decision
heuristics to a subset of variables based on structural properties
have been carried out in the context of \SAT, both positive
~\cite{GiunchigliaMassarottoSebastiani98,GiunchigliaMarateaTacchella02,Strichman00}
and negative effects~\cite{JarvisaloNiemela08} have been observed
depending on the domain of the instances. \citex{JarvisaloJunttila09}
have proven that a very restricted form of branching (branch only on a
subset of the input variables) implies a super-polynomial increase in
the length of the optimal proofs for learning-based
heuristics. However, very recent results by
\citex{GebserKaufmannOteroRomeroSchaubWanko13} suggest that modern
\ASP-solvers with a clause learning heuristic can benefit from
additional structural information on the instance when a relaxed form
of restricted branching is used, namely increasing the score of atoms
if a certain structural property prevails. Those properties have to be
manually identified.  Since backdoor atoms are of structural
importance for the problem it seems reasonable to initially increase
the score of the atoms if the atom is contained in the considered
backdoor.
%
As \strongBds{\Horn^*}s are relatively easy to compute and very easy to
approximate one could occasionally update the heuristic based on a
newly computation of a backdoor. So a solver could benefit from
backdoors in both the initial state and while learning new atoms.
A rigorous empirical study following these considerations is subject of
current research.

\section{Summary and Future Work}\label{sec:conclusion}
We have introduced the backdoor approach to the domain of
propositional answer set programming. In a certain sense, the backdoor
approach allows us to augment known tractable classes and makes
efficient solving methods for tractable classes generally
applicable. Our approach makes recent progress in fixed-parameter
algorithmics applicable to answer set programming and establishes a
unifying approach that accommodates several parameters from the
literature. This framework gives rise to a detailed comparison of the
various parameters in terms of their generality. We introduce a
general method of lifting parameters from normal to disjunctive
programs and establish several basic properties of this method. We
further studied the preprocessing limits of \ASP rules in terms of
kernelization taking backdoor size as the parameter.

The results and concepts of this paper give rise to several research
questions. For instance, it would be interesting to consider backdoors
for target classes that contain programs with an exponential number of
answer sets, but where the set of all answer sets can be succinctly
represented. A simple example is the class of programs that consist of
(in)dependent components of bounded size.
%
It would be interesting to enhance our backdoor approach to extended
rules in particular to weight constrains.
%
%
Finally, it would be interesting to investigate whether backdoors can
help to improve problem encodings for \ASP-solvers.

\bibliographystyle{named}
\bibliography{../literature/johannes}

\end{document}